\let\oldciteauthor=\citeauthor
\def\citeauthor#1{\hypersetup{citecolor=black}\oldciteauthor{#1}}
\pgfplotsset{compat=1.11}
\newcommand{\cmark}{\ding{51}\xspace}%
\newcommand{\xmark}{\ding{55}\xspace}%
\newcommand{\RPS}{\textup{\textsc{RPS}}}
\newcommand{\Eating}{\textup{\textsc{Eating}}}
\newtheorem{proposition}{Proposition}
\newtheorem{corollary}{Corollary}
\newtheorem{theorem}{Theorem}
\newtheorem{lemma}{Lemma}
\theoremstyle{definition}
\newtheorem{definition}{Definition}
\newenvironment{example}{\pushQED{\qed}\examplex}{\popQED\endexamplex}
\DeclareMathOperator*{\argmax}{\arg\max}
\newcommand{\set}[1]{\{#1\}}
\newcommand{\floor}[1]{\lfloor{#1}\rfloor}
\newcommand{\ceil}[1]{\lceil{#1}\rceil}
\renewcommand{\hat}{\widehat}
\renewcommand{\H}{\mathcal{H}}
\renewcommand{\succeq}{\succcurlyeq}
\begin{document}

% Title. Note the optional short title for running heads.
%\title[Ex-Ante and Ex-Post Fairness in Resource Allocation]{Best of Both Worlds:\\Ex-Ante and Ex-Post Fairness in Resource Allocation}
\title{Best of Both Worlds:\\Ex-Ante and Ex-Post Fairness in Resource Allocation}

% Authors.
%\author{Rupert Freeman}\affiliation{Microsoft Research New York}
%\author{Nisarg Shah}\affiliation{University of Toronto}
%\author{Rohit Vaish}\affiliation{Rensselaer Polytechnic Institute}
\author{
	Rupert Freeman\\Microsoft Research New York\\\texttt{rupert.freeman@microsoft.com}
	\and
	Nisarg Shah\\University of Toronto\\\texttt{nisarg@cs.toronto.edu}
	\and
	Rohit Vaish\\Rensselaer Polytechnic Institute\\\texttt{vaishr2@rpi.edu}
}
\date{}
\maketitle
% Abstract. Note that this must come before \maketitle.
\begin{abstract}
	We study the problem of allocating indivisible goods among agents with additive valuations. When randomization is allowed, it is possible to achieve compelling notions of fairness such as envy-freeness, which states that no agent should prefer any other agent's allocation to her own. When allocations must be deterministic, achieving exact fairness is impossible but approximate notions such as envy-freeness up to one good can be guaranteed. Our goal in this work is to achieve both simultaneously, by constructing a randomized allocation that is exactly fair ex-ante and approximately fair ex-post. The key question we address is whether ex-ante envy-freeness can be achieved in combination with ex-post envy-freeness up to one good. We settle this positively by designing an efficient algorithm that achieves both properties simultaneously. If we additionally require economic efficiency, we obtain an impossibility result. However, we show that economic efficiency and ex-ante envy-freeness can be simultaneously achieved if we slightly relax our ex-post fairness guarantee. On our way, we characterize the well-known Maximum Nash Welfare allocation rule in terms of a recently introduced fairness guarantee that applies to groups of agents, not just individuals.
\end{abstract}

\section{Introduction}
\label{sec:Introduction}

\emph{What is fair? What is just?} These questions have captivated many brilliant minds in human history, from the ancient Greek philosopher Plato to the renowned political philosopher John Rawls~\cite{Rawls71}. In the last decade or so, with algorithms increasingly making decisions that affect human lives, fairness has been a subject of intense research in the context of algorithmic decision-making~\cite{CPFG+17}. Some of this research draws on the --- almost a century old~\cite{Stein48} --- literature on fair resource allocation. 

The central problem in fair resource allocation is, as the name suggests, to fairly allocate a set of resources (often dubbed \emph{goods}) among a set of individuals (often dubbed \emph{agents}) who have different preferences over the resources. Much of the early economic work treats the resources as \emph{divisible}. For example, in the classic cake-cutting setting~\cite{Stein48}, the ``cake'' is to be split among the agents. An agent's preferences are expressed through a valuation function, which places a value on every subset of the cake. Traditionally, this valuation function is assumed to be \emph{additive}, that is, an agent's value for receiving disjoint parts of the cake is the sum of her values for the individual parts. The possibility of finely dividing the cake allows allocating it with compelling fairness guarantees. For example, \citet{Strom80} showed that one can always allocate the cake in a way that is \emph{envy-free} (EF), i.e., such that no agent values the share of the cake allocated to another agent more than the share allocated to herself.  

Now imagine if the goods are \emph{indivisible}, i.e., each good must be entirely allocated to a single agent. In other words, the allocation to each agent is a subset of the set of goods. If we are allowed to randomize,
then we can choose an agent uniformly at random and allocate the entire set of goods to her. This is trivially \emph{ex-ante envy-free}, since all agents receive the same distribution over bundles of goods. However, this allocation induces a large amount of envy \emph{ex-post}, since one agent receives everything and all others receive nothing.

It is obvious that some amount of ex-post envy is unavoidable; imagine two agents liking a single good, which must be given to one of them, leaving the other envious. This difficulty has led to significant research on fairness in deterministic allocations of indivisible goods in economics and computer science~\cite{BCMM16}, a bulk of which focuses on relaxed fairness properties which can be guaranteed on all instances. One compelling example is \emph{envy-freeness up to one good} (EF1), which requires that the envy of any agent toward another agent can be removed by the elimination of at most one good from the envied agent's bundle.

It is known that a deterministic EF1 allocation always exists~\cite{LMMS04,Bud11}. Such an allocation keeps the ex-post envy limited, but it may still be perceived as unfair if some agents are systematically favored over others. 
For example, if two agents agree on their ordinal preference ranking over two goods but have different strengths of preference, the popular Maximum Nash Welfare (MNW)~\citep{Nash50b,KN79,CKMP+19} allocation rule\footnote{The MNW allocation is the one that maximizes the \emph{product} of agent utilities. This is the rule employed by the popular fair division website Spliddit (\url{http://spliddit.org}).} assigns the more valuable good to the agent with higher intensity of preference for it over the less valuable good, and the less valuable good to the other agent. If this allocation happens repeatedly, the scale will always be tipped towards the same agent and their advantage over time will become large.

One might hope that, at least for the case of repeated allocation, more dedicated solutions could be designed. For instance, the goods could be allocated by the round-robin method --- where the agents choose goods one at a time in a fixed order --- in each round, while cycling through all $n!$ possible orderings of the agents so that no one is repeatedly favored by the mechanism. However, as we will see later, this seemingly-fair solution is not enough to guarantee envy-freeness in the long run. Further, it does nothing to alleviate unfairness in a one-shot setting, where ex-ante envy-freeness may still be desired as agents may be unwilling to participate in a system in which they know they will certainly be treated less favorably than their competitors.

This motivates a natural question. \emph{Can we retain envy-freeness up to one good as an ex-post guarantee, and simultaneously obtain (exact) envy-freeness ex-ante?} In other words, can we always randomize over EF1 allocations such that the resulting randomized allocation is EF? We show that the answer to this natural and elegant question is \emph{yes}. More generally, we study various combinations of ex-ante and ex-post fairness and efficiency guarantees, and identify combinations that can (and cannot) be achieved simultaneously. Our constructive results yield efficient algorithms; these improve upon prior algorithms which provide either only ex-ante or only ex-post guarantees, thus paving the way for fairer resource allocation in practice.

\subsection{Our Results}\label{sec:results}

Our main result is the development of a novel algorithm, \emph{Recursive Probabilistic Serial}, that produces an ex-ante envy-free (EF) distribution over deterministic allocations that each satisfy envy-freeness up to one good (EF1). Our algorithm is an adaptation of the classic \emph{Probabilistic Serial (PS)} algorithm of~\citet{BM01} that inherits much of the desirable behavior of PS (in particular, ex-ante envy-freeness), while also allowing a simple and natural decomposition over EF1 allocations. While a na\"{\i}ve version of our algorithm yields a distribution over a possibly-exponential number of deterministic allocations, we show that a support size polynomial in the number of agents and goods is sufficient, thus yielding an efficient variant.

In addition to ex-ante EF and ex-post EF1, one may want to achieve the economic efficiency notion of Pareto optimality, which states that it should be impossible to find an allocation that improves some agent's utility without reducing any other agent's. In Section~\ref{sec:Impossibility} we show that it is impossible to achieve ex-ante Pareto optimality (that is, with respect to the randomized allocation) in conjunction with ex-ante EF and ex-post EF1. However, in Section~\ref{sec:Prop1_EF11} we show that strong ex-ante guarantees --- in terms of both fairness and economic efficiency --- can be achieved if we are willing to compromise on the ex-post guarantee. In particular, we are able to achieve ex-ante \emph{group fairness} (GF)~\citep{CFSV19}, which generalizes both envy-freeness and Pareto optimality, in conjunction with two ex-post fairness properties that are incomparable but are both implied by EF1: \emph{proportionality up to one good (Prop1)}~\citep{CFS17} and \emph{envy-freeness up to one good more-and-less (EF}$_1^1$\emph{)}~\citep{BK19}. Our algorithm uses a rounding of the well-known MNW allocation, and in Section~\ref{sec:characterization}, we provide a novel characterization which shows that this is the only allocation rule that can be used to achieve the desired properties. Table~\ref{tab:results} summarizes our results.

Finally, in \Cref{sec:bads}, we extend our results to the case of bads. In particular, for Recursive Probabilistic Serial, we show that the natural extension only manages to achieve ex-post EF2 (in addition to ex-ante EF), but ex-post EF1 can be recovered through a simple modification. 

\begin{table}
	\centering
	\begin{tabular}{c| c | c | c| c}
	  	& Prop1 + EF$_1^1$ & EF1 & EF1 + PO & EF1 + fPO\\ \hline
	  Prop & \cmark & \cmark & \textbf{?} & \xmark (Thm~\ref{thm:impossible-prop-ef1-fpo}) \\
	  EF & \cmark & \cmark (Thm~\ref{thm:eating_ef}) & \bf{?} & \xmark \\
	  GF & \cmark (Cor~\ref{fgf-prop1}) & \xmark (Thm~\ref{thm:impossible-prop-ef1-fpo}) & \xmark & \xmark \\
	\end{tabular}
	\caption{Summary of our results. Each row corresponds to an ex-ante guarantee while each column corresponds to an ex-post guarantee. A \cmark indicates possibility (and polynomial-time computation), a \xmark indicates impossibility, and a \textbf{?} indicates an open problem. Note that positive results propagate above and to the left, while negative results propagate below and to the right.}
	\label{tab:results}
\end{table}

\subsection{Related Work}\label{sec:related}

A large body of work in computer science and economics has focused on finding exactly ex-ante fair randomized allocations, as well as approximately fair deterministic allocations, and we cite those works as appropriate throughout the paper. Combining the two approaches was recently listed as an ``interesting challenge'' by ~\citet{aziz2019probabilistic}, however, little work has focused on this problem. 
Two exceptions are~\citet{aleksandrov2015online} and~\citet{BCKM13}. \citet{aleksandrov2015online} consider randomized allocation mechanisms for an online fair division problem and analyze their ex-ante and ex-post fairness guarantees. The style of their results is very similar to ours, however they restrict attention to binary utilities, which simplifies the problem significantly. \citet{BCKM13} study the problem of implementing a general class of random allocation mechanisms subject to ex-post constraints, although the ex-post constraints are not the same as ours. In particular, they do not consider ex-post axiomatic guarantees from the fair division literature as we do.

In the random assignment literature in economics, the idea of constructing a fractional assignment and implementing it as a lottery over pure assignments was introduced by~\citet{HZ79}. Later work has studied both ex-ante and ex-post fairness and efficiency guarantees provided by mechanisms in this setting~\citep{BM01,abdulkadirouglu1998random,chen2002improving,nesterov2017fairness}, but most of this work studies ordinal utilities and does not consider approximate notions of ex-post fairness.\footnote{The standard random assignment setting has $n$ agents, $n$ items, and requires that each agent receive exactly one item. The notion of ex-post fairness that we use in this work, envy-freeness up to one good, is vacuous in this restricted setting.} \citet{gajdos2002fairness} study the relationship between ex-ante and ex-post fairness but in their model the randomness comes from nature, not the allocation rule. Other work studies the problem of implementing a fractional outcome over deterministic outcomes subject to (possibly soft) constraints~\citep{BCKM13,akbarpour2019approximate}, but the constraints allowed by these papers do not fully capture our ex-post fairness notions.

In Section~\ref{sec:characterization}, we obtain a characterization of the maximum Nash welfare (MNW) rule --- this is equivalent to competitive equilibrium with equal incomes (CEEI)~\cite{EG59,AI82}, and is also known as the competitive rule or the Walras rule --- for fractional allocations using group fairness and a property called replication-invariance. There are several related characterizations in the literature. Assuming monotonic and strictly convex preferences, it is known that CEEI is the only replication-invariant allocation rule that is in the core~\citep{DS63}, or group envy-free~\cite{Var74}, or proportional and Pareto optimal~\cite{Tho88}. While the core, group envy-freeness, or proportionality together with Pareto optimality are all weaker than group fairness (making these characterizations seemingly stronger), to the best of our knowledge, these characterizations only apply for \emph{strictly convex} preferences, and not to additive preferences. We also note that our proof technique is significantly different from the proofs of these prior characterizations because additive preferences do not admit some of the nice structural implications that strictly convex preferences do when combined with replication invariance, as we note in Section~\ref{sec:characterization}.

\section{Preliminaries}
\label{sec:Preliminaries}

For any positive integer $r \in \mathbb{N}$, define $[r] \coloneqq \{1,\dots,r\}$. Let $N = [n]$ denote a set of \emph{agents}, and $M$ denote a set of \emph{goods} where $m \coloneqq |M|$. 

\paragraph{Fractional and Randomized Allocations}
A \emph{fractional allocation} of the items in $M$ to the agents in $N$ is specified by a non-negative $n \times m$ matrix $A \in [0,1]^{n \times m}$ such that for every item $j \in M$, we have $\sum_{i \in N} A_{i,j} \leq 1$; here, $A_{i,j}$ denotes the fraction of item $j$ assigned to agent $i$. We say that a fractional allocation $A$ is \emph{complete} if $\sum_{i \in N} A_{i,j} = 1$ for every $j \in M$, and call it \emph{partial} otherwise. 

A fractional allocation $A$ is \emph{integral} if $A_{i,j} \in \{0,1\}$ for every $i \in N$ and $j \in M$. For integral allocations, we will find it convenient to denote the binary vector $A_i = (A_{i,j})_{j \in M}$ as a set $A_i \coloneqq \{j \in M : A_{i,j} = 1\}$. We will refer to $A_i$ as the \emph{bundle} of items assigned to agent $i$, and denote the allocation $A$ as an ordered tuple of bundles $A = (A_1,\dots,A_n)$. When we simply say `an allocation', it will mean a fractional allocation, unless otherwise clear from the context. For notational clarity, we use letters $X$ or $Y$ for fractional allocations, and write $A$ or $B$ for integral allocations. Let $\mathcal{X}$ be the set of all (complete or partial) fractional allocations.

A \emph{randomized allocation} is a lottery over integral allocations (we denote them by bold letters for clarity). Formally, a randomized allocation $\mathbf{X}$ is specified by a set of $\ell \in \mathbb{N}$ ordered pairs $\{(p^k,A^k)\}_{k \in [\ell]}$, where, for every $k \in [\ell]$, $A^k$ is an integral allocation implemented with probability $p^k \in [0,1]$, and $\sum_{k \in [\ell]} p^k = 1$. The \emph{support} of $\mathbf{X}$ is the set of integral allocations $\set{A^1,\dots,A^\ell}$. 

A randomized allocation $\mathbf{X} \coloneqq \{(p^k,A^k)\}_{k \in [\ell]}$ is naturally associated with the fractional allocation $X \coloneqq \sum_{k \in [\ell]} p^k A^k$, where $X_{i,j}$ is the (marginal) probability of agent $i$ receiving good $j$ under $\mathbf{X}$. In this case, we say that randomized allocation $\mathbf{X}$ \emph{implements} fractional allocation $X$. There may be many randomized allocations implementing a given fractional allocation. In \Cref{sec:EF_EF1,sec:Prop1_EF11}, we discuss and make use of two known techniques for implementing a given fractional allocation.

\paragraph{Preferences}
Each agent $i \in N$ has an \emph{additive} valuation function $v_i$, where $v_{i,j} \ge 0$ denotes the agent's utility for fully receiving good $j \in M$. Later, in Section~\ref{sec:bads}, we consider allocation of \emph{bads}, where $v_{i,j} \le 0$ for each agent $i \in N$ and bad $j \in M$. Note that $v_i$ induces a weak order $\succeq_i$ over goods where $g_j \succeq_i g_k$ if and only if $v_{i,j} \ge v_{i,k}$, and $g_j \succ_i g_k$ if and only if $v_{i,j} > v_{i,k}$. The utility of agent $i$ under an allocation $X \in \mathcal{X}$ is given by $v_i(X_i) = \sum_{j \in M} X_{i,j}\cdot v_{i,j}$. We assume that for each good $j \in M$, there exists at least one agent $i \in N$ with $v_{i,j} > 0$. This is without loss of generality as goods valued zero by everyone can be allocated arbitrarily.

\paragraph{Allocation rule}
A fair division instance $I$ is defined by the triple $(N,M, (v_i)_{i \in N})$. We let $\mathcal{I}$ denote the set of all instances. An \emph{allocation rule} $f: \mathcal{I} \to 2^\mathcal{X}$ maps instances to (sets of) allocations.

\begin{definition}[Fractional Maximum Nash Welfare Rule]
Given an instance $I \in \mathcal{I}$, the fractional Maximum Nash Welfare (MNW) rule returns all fractional allocations that maximize the product of agents' utilities, i.e., $\operatorname{MNW}(I) \coloneqq \argmax_{X \in \mathcal{X}}\ \Pi_{i \in N}\ v_i(X_i)$. We refer to an allocation $A \in \operatorname{MNW}(I)$ as a fractional MNW allocation. Integral MNW allocations maximize the product of agents' utilities across all integral allocations; however, there is a subtle tie-breaking involved in that case~\cite{CKMP+19}.
\end{definition}

We now discuss a number of properties concerning fairness and efficiency of allocations and allocation rules. 

\subsection{Fairness and Efficiency Properties}

For any property $\langle P \rangle$ defined for a fractional allocation, we say that a randomized allocation $\mathbf{X}$ satisfies $\langle P \rangle$ \emph{ex-ante} if the fractional allocation $X$ it implements satisfies $\langle P \rangle$. Similarly, for any property $\langle Q \rangle$ defined for an integral allocation, we say that a randomized allocation $\mathbf{X}$ satisfies $\langle Q \rangle$ \emph{ex-post} if every integral allocation in its support satisfies $\langle Q \rangle$.

We begin with a classic fairness notion called \emph{proportionality}.

\begin{definition}[Proportionality (Prop)~\cite{Stein48}]
	An allocation $X$ is \emph{proportional} if for each agent $i \in N$, $v_i(X_i) \ge v_i(\vec{1}^m)/n$, where $v_i(\vec{1}^m)$ is agent $i$'s utility for receiving all goods fully.
\end{definition}

For complete allocations, the following fairness guarantee is logically stronger than proportionality, and has been a subject of substantial research. 

\begin{definition}[Envy-Freeness (EF)~\cite{Fol67}]
	An allocation $X$ is \emph{envy-free} if for every pair of agents $i,h \in N$, we have $v_i(X_i) \ge v_i(X_h)$.
\end{definition}

Given allocations $X$ and $Y$, we say that agent $i$ \emph{SD-prefers} $X_i$ to $Y_i$, written $X_i \succeq_i^{SD} Y_i$, if for every good $g \in M$, we have that $\sum_{g_j \in \set{g' \in M : g' \succeq_i g} } X_{i,j} \ge \sum_{g_j \in \set{g' \in M : g' \succeq_i g} } Y_{i,j}$. Here, ``SD'' refers to first-order stochastic dominance. It is easy to check that $X_i \succeq_i^{SD} Y_i$ is equivalent to $v'_i(X_i) \ge v'_i(Y_i)$ under every additive valuation $v'_i$ consistent with the ordinal preference relation $\succeq_i$. 

\begin{definition} [SD-Envy-Freeness (SD-EF)~\cite{BM01}]
	An allocation $X$ is \emph{SD-envy-free} if for every pair of agents  $i,h \in N$, we have $X_i \succeq_i^{SD} X_h$.
\end{definition}

Equivalently, an allocation is SD-envy-free if it is envy-free under any additive valuation functions of the agents consistent with $(\succeq_i)_{i \in N}$. Next, we discuss economic efficiency of allocations.

\begin{definition}[Fractional Pareto Optimality (fPO)~\citep{BKV18} and Pareto Optimality (PO)]
An allocation $X$ is \emph{fractionally Pareto optimal} if there is no fractional allocation $Y$ that Pareto-dominates it, i.e., satisfies $v_i(Y_i) \ge v_i(X_i)$ for all agents $i \in N$ and at least one inequality is strict. An integral allocation $A$ is \emph{Pareto optimal} if there is no integral allocation $B$ that Pareto-dominates it. 
\end{definition}

For fractional allocations, fPO is traditionally just referred to as Pareto optimality, so we will use fPO and PO interchangeably. However, for integral allocations, fPO is stronger than PO. 

\begin{proposition}\label{prop:po-fpo}
	If a randomized allocation is ex-ante PO, then it is also ex-post fPO.
\end{proposition}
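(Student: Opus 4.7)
The plan is to argue by contrapositive: I assume some integral allocation $A^k$ in the support of $\mathbf{X}$ is not fPO, and I construct a fractional allocation $X'$ that Pareto-dominates the fractional allocation $X$ implemented by $\mathbf{X}$, contradicting ex-ante PO.

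Concretely, write $\mathbf{X} = \{(p^{k'}, A^{k'})\}_{k' \in [\ell]}$ so that $X = \sum_{k'} p^{k'} A^{k'}$. Since $A^k$ lies in the support, $p^k > 0$. By assumption there is a fractional allocation $Y$ with $v_i(Y_i) \ge v_i(A^k_i)$ for all $i$, and strict for some $i^*$. I would then define the ``swap'' allocation
\[
X' \coloneqq p^k \cdot Y + \sum_{k' \ne k} p^{k'} A^{k'}.
\]
The first step is to verify $X'$ is a valid fractional allocation, which is routine: each column sum is a convex combination of column sums of valid fractional allocations, hence bounded by $1$.

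The next step is to compare utilities. Since valuations are additive and linear in the allocation matrix, for every agent $i$,
\[
v_i(X'_i) - v_i(X_i) = p^k\bigl(v_i(Y_i) - v_i(A^k_i)\bigr) \ge 0,
\]
with strict inequality at $i = i^*$ because $p^k > 0$. Hence $X'$ Pareto-dominates $X$, contradicting fPO of $X$. This yields that every $A^k$ in the support must itself be fPO, i.e., $\mathbf{X}$ is ex-post fPO.

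I do not expect any genuine obstacle here; the only conceptual point worth emphasizing in the write-up is that fPO allows comparisons against \emph{fractional} deviations, so the Pareto-dominating $Y$ need not be integral, which is exactly what makes the linearity argument go through seamlessly. The proof only uses additivity of valuations, linearity of the mixture, and positivity of $p^k$.
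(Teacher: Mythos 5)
Your proof is correct and follows essentially the same route as the paper: the paper's argument also replaces the non-fPO support element $A^k$ with its fractional dominator $Y$ to form $X' = p^k \cdot Y + \sum_{r \neq k} p^r \cdot A^r$ and concludes that $X'$ Pareto-dominates $X$. Your write-up simply makes explicit the details (validity of $X'$, positivity of $p^k$, linearity of utilities) that the paper leaves implicit.
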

\begin{proof}
	If a randomized allocation $\mathbf{X} \coloneqq \{(p^k,A^k)\}_{k \in [\ell]}$ implementing a fractional allocation $X$ is not ex-post fPO, then for some $k \in [\ell]$, integral allocation $A^k$ must be Pareto dominated by a fractional allocation, say $Y$. Then, the fractional allocation $X' \coloneqq p^k \cdot Y + \sum_{r \neq k} p^r \cdot A^r$ Pareto-dominates $X$, which implies that $\mathbf{X}$ is not ex-ante PO.
\end{proof}

The following property generalizes various properties mentioned above. Given any set $S \subseteq N$ of agents, we write $\cup_{i \in S} X_i$ to denote the union of the fractional allocations to agents in $S$, i.e., $\cup_{i \in S} X_i \coloneqq (\sum_{i \in S} X_{i,j})_{j \in M}$.

\begin{definition}[Group Fairness (GF)~\cite{CFSV19}]\label{defn:GF}
	An allocation $X$ is \emph{group fair} if for all non-empty subsets of agents $S,T \subseteq N$, there is no fractional allocation $Y$ of $\cup_{i \in T} X_i$ to the agents in $S$ such that $\frac{|S|}{|T|} \cdot v_i(Y_i) \ge v_i(X_i)$ for all agents $i \in S$ and at least one inequality is strict.
\end{definition}

Note that imposing the above constraint over restricted $(S,T)$ pairs can recover properties such as proportionality ($|S|=1, T=N$), envy-freeness ($|S|=|T|=1$), and fractional Pareto optimality ($S=T=N$). 

While the fairness properties mentioned above are compelling, they cannot be guaranteed with integral allocations. For such allocations, it is meaningful to consider approximate fairness notions, some of which are discussed below.

\begin{definition}[Proportionality Up To One Good (Prop1)~\cite{CFS17}]
	An integral allocation $A$ is \emph{proportional up to one good} if for every agent $i \in N$, either $v_i(A_i) \geq v_i(\vec{1}^m)/n$ or there exists a good $j \notin A_i$ such that $v_i(A_i) + v_{i,j} \geq v_i(\vec{1}^m)/n$, where $v_i(\vec{1}^m)$ is agent $i$'s utility for receiving all goods fully.
\end{definition}

\begin{definition}[Envy-Freeness Up To One Good (EF1)~\cite{LMMS04,Bud11}]
	An integral allocation $A$ is \emph{envy-free up to one good} if for every pair of agents $i,h \in N$ such that $A_h \neq \emptyset$, we have $v_i(A_i) \geq v_i(A_h \setminus \{j\})$ for some good $j \in A_h$.
\end{definition}

All of our possibility and impossibility results also apply for the following natural strengthening of EF1, which has been formally defined in subsequent work~\cite{Aziz20}.

\begin{definition}[SD-Envy-Freeness up to One Good (SD-EF1)]
	An integral allocation $A$ is \emph{SD-envy-free up to one good} if for every pair of agents $i,h \in N$ such that $A_h \neq \emptyset$, we have $A_i \succeq_i^{SD} A_h \setminus \set{j}$ for some good $j \in A_h$.
\end{definition}

This is equivalent to agent $i$ not envying agent $h$ up to one good under every additive valuation consistent with the ordinal preference relation $\succeq_i$. 

The following property is a relaxation of EF1 and enjoys strong algorithmic support in conjunction with PO, for goods (Section~\ref{sec:Prop1_EF11}) as well as bads (Section~\ref{sec:bads})~\cite{BK19,BS19}.

\begin{definition}[Envy-Freeness Up To One Good More-and-Less (EF$_1^1$)~\cite{BK19}]
	An integral allocation $A$ is \emph{envy-free up to one good more-and-less} if for every pair of agents $i,h \in N$ such that $A_h \neq \emptyset$, we have $v_i(A_i \cup \{j_i\}) \geq v_i(A_h \setminus \{j_h\})$ for some goods $j_i \notin A_i$ and $j_h \in A_h$.
\end{definition}

These three guarantees are logically related as follows: %the following implications hold: (a) s-EF1$\implies$EF1, (b) 
SD-EF1$\implies$EF1$\implies$EF$_1^1$ and SD-EF1$\implies$EF1$\implies$Prop1.

\section{Possibility: Ex-ante EF + Ex-post EF1}
\label{sec:EF_EF1}

This section describes our main result that ex-ante envy-freeness can be achieved in conjunction with ex-post envy-freeness up to one good (EF1).

A natural approach towards this question is to start with the round-robin method. Under this method, we fix an agent ordering, and then agents take turns picking one good at a time in a cyclic fashion. At each step, an agent picks her most valuable good that is still available. It is well-known that for any agent ordering, this method produces an integral allocation that is EF1~\cite{CKMP+19}. Further, it is easy to see that uniformly randomizing the agent ordering --- the so-called \emph{randomized round-robin method} --- also achieves ex-ante proportionality.

\begin{proposition}\label{prop:RRR-Prop}
	With additive valuations, randomized round-robin is ex-ante proportional and ex-post envy-free up to one good.
\end{proposition}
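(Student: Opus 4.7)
The plan is to verify the two guarantees separately. Ex-post EF1 is the classical Lipton et al.\ result and follows from the familiar pairing argument for round-robin: for any two agents $i$ and $h$ with $i$ preceding $h$ in the ordering, every pick by $h$ can be paired with the immediately following pick by $i$ in the same round, and since $i$ picks her favorite remaining good, each of $i$'s picks is worth at least as much to her as the paired pick by $h$; discarding $h$'s first pick (or, if $h$ precedes $i$, by the same kind of shift after the first round) eliminates any envy $i$ may have toward $h$.

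For ex-ante proportionality, the plan is to fix an arbitrary agent $i$ and bound her expected utility by a worst-case pick-by-pick analysis, then average over her uniformly random position in the ordering. First I would relabel the goods as $g_1, g_2, \ldots, g_m$ so that $v_{i,g_1} \ge v_{i,g_2} \ge \cdots \ge v_{i,g_m}$. For any ordering $\sigma$ in which $i$ occupies position $k \in [n]$, the key observation is that just before $i$ makes her $t$-th pick, at most $(t-1)n + (k-1)$ goods have been picked in total by anyone, so at least one good from the set $\{g_1, \ldots, g_{(t-1)n+k}\}$ is still available; because $i$ takes her favorite remaining good, her $t$-th pick is worth at least $v_{i,g_{(t-1)n+k}}$ to her. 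Letting $T_k$ denote the number of rounds in which $i$ picks when placed at position $k$, summing these bounds yields
\[
v_i(A^\sigma_i) \;\ge\; \sum_{t=1}^{T_k} v_{i,g_{(t-1)n+k}}.
\]

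Since the ordering is uniformly random, $i$'s position $k$ is uniform on $[n]$ by symmetry, so
\[
\mathbb{E}_{\sigma}\bigl[v_i(A^\sigma_i)\bigr] \;\ge\; \frac{1}{n}\sum_{k=1}^{n}\sum_{t=1}^{T_k} v_{i,g_{(t-1)n+k}}.
\]
The last step is to observe that as $(k,t)$ ranges over $\{(k,t) : k\in[n], t\in[T_k]\}$, the index $(t-1)n+k$ traces out each element of $[m]$ exactly once (the rows $t = 1, 2, \ldots$ cover disjoint length-$n$ intervals, with the final partial row covering exactly the leftover indices in positions $k \le m - n\lfloor m/n\rfloor$). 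Hence the double sum equals $v_i(M)$, and dividing by $n$ gives the proportional share $v_i(M)/n$, as required.

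The main obstacle is the index-bookkeeping in the last step when $m$ is not divisible by $n$: one must verify that $T_k = \lfloor m/n \rfloor + 1$ for $k \le m \bmod n$ and $T_k = \lfloor m/n\rfloor$ otherwise, and that with these values the index $(k,t) \mapsto (t-1)n+k$ is a bijection onto $[m]$. Once this is handled, the proof is complete since it simultaneously bounds $i$'s expected utility from below by $v_i(M)/n$ for every agent $i$, establishing ex-ante proportionality, while ex-post EF1 holds trivially deterministically for each realized ordering.
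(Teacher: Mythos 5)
Your proof is correct and follows essentially the same route as the paper's: the key step in both is that agent $i$'s $t$-th pick from position $k$ is worth at least her $((t-1)n+k)$-th most preferred good, after which averaging over the uniformly random position telescopes to $v_i(M)/n$. The only difference is that you carry out the index bookkeeping for $n \nmid m$ explicitly, whereas the paper assumes $m = cn$ ``for ease of presentation''; your treatment of that case is accurate.
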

\begin{proof}
	We already noticed that randomized round-robin is ex-post EF1. For ex-ante Prop, fix an agent $i$, and suppose, without loss of generality, that $v_{i,1} \ge v_{i,2} \ge \ldots \ge v_{i,m}$. For ease of presentation, suppose that $m=cn$ for some positive integer $c$. When agent $i$ is $k$-th in the ordering, her value for her allocation is at least $v_{i,k} + v_{i,k+n}+\ldots+v_{i,k+(c-1)n}=\sum_{c'=1}^c v_{i,(c'-1)n+k}$ because the $c'$-th good she chooses must be no less preferred than her $(c'-1)n+k$-th most preferred good. Since she appears in every position in the ordering with probability $1/n$, her expected value is at least $\frac{1}{n} \sum_{k=1}^m \sum_{c'=1}^c v_{i,(c'-1)n+k} = \frac{1}{n} \sum_{j=1}^m g_{i,j}$, which matches her proportionality guarantee.
\end{proof}

However, as we observe below, this approach fails to achieve ex-ante envy-freeness.

\begin{example}
Consider an instance with three agents $a_1,a_2,a_3$ and three goods $g_1,g_2,g_3$ as shown below (where $0 < \varepsilon < 0.25$):\footnote{We note that a similar instance was used by \citet[Proposition 13]{aziz2014cake} in the divisible goods (i.e., cake cutting) setting to show that a constrained serial dictatorship mechanism violates envy-freeness.}
\begin{table}[h]
	\centering
	\begin{tabular}{c|ccc}
		& $g_1$ & $g_2$ & $g_3$\\
		\hline
		$a_1$ & $1+\varepsilon$ & $1$ & $1-4\varepsilon$\\
		$a_2$ & $1-4\varepsilon$ & $1+\varepsilon$ & $1$\\
		$a_3$ & $1+\varepsilon$ & $1-4\varepsilon$ & $1$
	\end{tabular}
\end{table}

The allocations produced by various agent orderings are as follows: The order $(a_1,a_2,a_3)$ induces the allocation $A^1 = (\{g_1\},\{g_2\},\{g_3\})$, the order $(a_1,a_3,a_2)$ induces the allocation $A^2 = (\{g_1\},\{g_2\},\{g_3\})$, the order $(a_2,a_1,a_3)$ induces the allocation $A^3 = (\{g_1\},\{g_2\},\{g_3\})$, the order $(a_2,a_3,a_1)$ induces the allocation $A^4 = (\{g_3\},\{g_2\},\{g_1\})$, the order $(a_3,a_1,a_2)$ induces the allocation $A^5 = (\{g_2\},\{g_3\},\{g_1\})$, and the order $(a_3,a_2,a_1)$ induces the allocation $A^6 = (\{g_3\},\{g_2\},\{g_1\})$.

Let $A \coloneqq \frac{1}{6} (A^1 + \dots + A^6)$ denote the fractional allocation implemented by randomized round-robin. Notice that $v_1(A_1) = 3(1+\varepsilon)+1+2(1-4\varepsilon) = 6-5\varepsilon$ and $v_1(A_2) = 5+(1-4\varepsilon) = 6-4\varepsilon$, implying that $a_1$ envies $a_2$.
\label{eg:RR_violates_ex-ante-EF}
\end{example}

Let us revisit our motivating example from the introduction, where the same set of goods need to be allocated repeatedly over time. As we discussed, a seemingly promising solution is to use round-robin every time, and cycle through all $n!$ possible agent orderings to eliminate the ``advantage'' that any agent has over another in a specific ordering. In the long run, the allocation produced would resemble the fractional allocation returned by randomized round-robin. While \Cref{prop:RRR-Prop} shows that this would guarantee each agent her proportional share in the long run, \Cref{eg:RR_violates_ex-ante-EF} points out that envy may not vanish, even in the long run. 

Instead of starting from a method that guarantees ex-post EF1 and using it to achieve ex-ante EF, let us do the opposite: Start from a fractional EF allocation and implement it using integral EF1 allocations. \emph{Probabilistic serial} is a well-studied algorithm that produces a fractional envy-free allocation~\citep{BM01}. This algorithm uses a \emph{serial eating} protocol wherein all agents simultaneously start eating their respective favorite goods at the same constant speed. Once a good is completely consumed by a subset of agents, each of those agents proceeds to eating her favorite available good at the same speed. The algorithm terminates when all goods have been eaten, and the fraction of each good consumed by an agent is allocated to her. A useful property of this algorithm is that it only uses the ordinal preferences of agents over goods, and computes an allocation that is ex-ante envy-free for \emph{any} additive utilities consistent with the ordinal preferences; thus, it is \emph{SD-envy-free}.

%While we were not able to determine whether the fractional allocation produced by probabilistic serial can always be implemented using an ex-post EF1 randomized allocation, in \Cref{subsec:RPS} we present a variant of this procedure, which we call \emph{recursive probabilistic serial}, that provides the desired fairness guarantees. While this procedure efficiently produces a sample integral allocation, computing the full randomized allocation it returns may take exponential time. Later, in \Cref{subsec:RPS_Polytime}, we describe a modification of this procedure that runs in strongly polynomial time.

In \Cref{subsec:RPS}, we present a variant of this procedure, which we call \emph{recursive probabilistic serial}, that provides the desired fairness guarantees, i.e., ex-ante EF and ex-post EF1. While this procedure efficiently produces a sample integral allocation, computing the full randomized allocation it returns may take exponential time. Later, in \Cref{subsec:RPS_Polytime}, we describe a modification of this procedure that runs in strongly polynomial time. 

It is worth noting that in a follow-up to our work, Aziz~\cite{Aziz20} has shown that the fractional allocation produced by probabilistic serial can also be implemented using an ex-post EF1 randomized allocation. The key idea in his proof is to consider multiple ``copies'' of each agent---one for each unit of time---such that the $i$-th copy consumes the same good(s) as consumed by the agent between the time steps $t = i-1$ and $t=i$. The desired implementation is then obtained by applying Birkhoff-von Neumann theorem to the ``expanded'' allocation matrix whose columns comprise of the goods, and rows comprise of all copies of each agent.

\subsection{Recursive Probabilistic Serial}
\label{subsec:RPS}

The eating protocol used by the probabilistic serial algorithm can be formally defined as follows: Given as input a subset of goods $M' \subseteq M$, the algorithm allows each agent to consume its favorite available good in $M'$ out of those that have not yet been fully consumed at a rate of one good per unit time. Let $X^t \coloneqq \textsc{Eating}(M',t)$ denote the (partial) allocation obtained when the algorithm is run for $t$ units of time. Note that \textsc{Eating}$(M,\lceil \frac{m}{n}\rceil)$ is equivalent to the outcome of the probabilistic serial algorithm~\citep{BM01}.

We will make use of the following classic theorem~\cite{Birk46,Neu53}, stated in a general form.

\begin{theorem}[Birkhoff-von Neumann]\label{thm:BvN}
	Let $X$ be a real-valued $n \times m$ matrix such that
	\begin{enumerate}
		\item $X_{ij} \in [0,1]$ for every $i \in [n]$ and $j \in [m]$,
		\item $\sum_{j=1}^m X_{ij} \le 1$ for every $i \in [n]$, and
		\item $\sum_{i=1}^n X_{ij} \le 1$ for every $j \in [m]$.
	\end{enumerate}
	Then, in strongly polynomial time, one can compute integral matrices $A^1, A^2, \ldots, A^q$ and weights $w^1, w^2, \ldots, w^q \in [0,1]$ such that 
	\begin{enumerate}
	\item $\sum_{k=1}^q w^k=1$ and $X = \sum_{k=1}^q w^k A^k$,
	\item for each integral matrix $A^k$, we have $A_{ij}^k \in \set{0,1}$ for every $i\in[n]$ and $j\in[m]$, and
	\item for every $k \in [q]$, we have
	\begin{enumerate}
		\item $\floor{\sum_{j=1}^m X_{ij}} \le \sum_{j=1}^m A^k_{ij} \le \ceil{\sum_{j=1}^m X_{ij}}$ for every $i \in [n]$, and
		\item $\floor{\sum_{i=1}^n X_{ij}} \le \sum_{i=1}^n A^k_{ij} \le \ceil{\sum_{i=1}^n X_{ij}}$ for every $j \in [m]$.
	\end{enumerate}
	\end{enumerate}
\end{theorem}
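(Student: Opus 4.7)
The plan is to reduce this to the well-known integrality of a bipartite degree-constrained subgraph polytope. Regard $X$ as edge weights in a bipartite graph $G$ with row-vertices $[n]$ on one side, column-vertices $[m]$ on the other, and edge weight $X_{ij}$ on edge $(i,j)$. Writing $r_i = \sum_{j} X_{ij}$ and $c_j = \sum_i X_{ij}$, define the polytope
\[
P = \Bigl\{A \in [0,1]^{n\times m} : \lfloor r_i \rfloor \le \textstyle\sum_j A_{ij} \le \lceil r_i \rceil \ \forall i,\ \lfloor c_j \rfloor \le \textstyle\sum_i A_{ij} \le \lceil c_j \rceil \ \forall j\Bigr\}.
\]
Then $X \in P$ by construction. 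The system defining $P$ is the bipartite incidence matrix augmented by box constraints and integer right-hand sides, hence totally unimodular, so every vertex of $P$ is integer-valued; such integer vertices are exactly the $\{0,1\}$-matrices satisfying conditions (3a)--(3b) of the theorem. By Carath\'{e}odory's theorem, $X$ can be written as a convex combination of at most $nm+1$ such vertices, establishing existence.

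To extract a decomposition in strongly polynomial time, I would iterate a cycle/leaf rounding scheme on the subgraph $H$ of strictly fractional entries of $X$. If $H$ contains a cycle $C$ (necessarily of even length since $G$ is bipartite), alternate $+\varepsilon$ and $-\varepsilon$ along $C$: this preserves all row and column sums, so feasibility in $P$ holds for all $\varepsilon$ in some maximal interval $[-\alpha,\beta]$ at whose endpoints a fractional entry is driven to $0$ or $1$. The two resulting matrices $X^{-}, X^{+} \in P$ each have strictly fewer fractional entries than $X$, and
\[
X = \tfrac{\alpha}{\alpha+\beta}\, X^{+} + \tfrac{\beta}{\alpha+\beta}\, X^{-}.
\]
If $H$ is a forest, peel leaves: at a leaf edge $(i,j)$ of $H$, every other edge incident to $i$ (or $j$) is already integer, so the row-sum (or column-sum) constraint at that endpoint pins $X_{ij}$ into a length-$1$ interval whose two endpoints give two feasible roundings, again expressing $X$ as a convex combination of two points in $P$ with fewer fractional entries. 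Recursing on the summands, one obtains the sought decomposition into integer vertices of $P$.

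The main obstacle is controlling the support size and running time. Each cycle or leaf step strictly decreases the number of fractional entries, and this count starts at most $nm$, so the recursion has depth at most $nm$ and yields at most $nm+1$ integral matrices $A^k$. Each step only requires finding a cycle or a leaf in $H$ and an elementary computation of $\alpha,\beta$, both doable in $O(nm)$ time, so the overall work is $O((nm)^2)$, which is strongly polynomial. The one subtlety I would verify carefully is that when $r_i$ (respectively $c_j$) happens to be an integer, the alternation around a cycle or the leaf-rounding never pushes that marginal outside $\{\lfloor r_i\rfloor, \lceil r_i\rceil\}$; this is immediate because cycle alternation preserves marginals exactly, and leaf-rounding only alters an edge whose endpoint marginal was fractional to begin with, so the updated sum remains within its allowed unit interval.
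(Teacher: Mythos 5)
The paper does not actually prove this statement; it invokes it as a classic result (its appendix notes it is the special case of the bihierarchy decomposition of \citet{BCKM13} in which one hierarchy consists of the row and singleton constraints and the other of the column constraints), so your proposal must be judged on its own terms. Your existence argument is sound: the system defining $P$ is totally unimodular with integer right-hand sides, so the vertices of $P$ are exactly the $\{0,1\}$-matrices satisfying (3a)--(3b), and Carath\'eodory yields a decomposition of $X$ over at most $nm+1$ of them.

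The algorithmic half has two genuine gaps. First, the leaf-peeling step is incorrect. When you round a leaf edge $(i,j)$ whose leaf endpoint is, say, column $j$, the column constraint at $j$ is indeed safe, but the row sum at $i$ changes by $-X_{ij}$ or $+(1-X_{ij})$, and if row $i$ still has other fractional entries this can leave $[\lfloor r_i\rfloor,\lceil r_i\rceil]$. Concretely, take $X_{i1}=X_{i2}=0.5$ with columns $1$ and $2$ both leaves of $H$: then $r_i=1$ is an integer, so the row sum of every matrix in $P$ must equal $1$ exactly, yet either rounding of $X_{i1}$ alone moves it to $0.5$ or $1.5$, so neither rounded point is in $P$ and the recursion produces matrices violating (3a). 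The standard fix is to alternate $\pm\varepsilon$ along a \emph{maximal} path of $H$, both of whose endpoints are leaves (interior marginals are preserved exactly, and each leaf marginal has a full unit of slack), or to pad $X$ with a dummy row and column absorbing $\lceil c_j\rceil-c_j$ and $\lceil r_i\rceil-r_i$ so that all marginals become integral and the fractional support never has a leaf. Second, your support and running-time count is wrong: recursing on \emph{both} $X^{+}$ and $X^{-}$ produces a binary recursion tree of depth up to $nm$, hence up to $2^{nm}$ leaves, not $nm+1$ --- depth bounds the length of a root-to-leaf path, not the number of terms. A block-diagonal matrix consisting of $k$ disjoint all-$\tfrac12$ blocks of size $2\times 2$ already forces your recursion to output $2^{k}$ integral matrices. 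To obtain $nm+1$ terms in strongly polynomial time you need a linear, Birkhoff-style iteration: push in a single direction until you reach a vertex $A$ of the minimal face of $P$ containing the current point, split off $A$ with the maximal feasible weight, and recurse only on the single remaining point, which lies on a face of strictly smaller dimension; this terminates after at most $nm+1$ extractions.
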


We will now describe our algorithm. The algorithm proceeds in stages. In each stage, the \textsc{Eating} procedure is run for \emph{one} unit of time, after which every agent has consumed a total mass of one item (although an individual item may be fractionally consumed). Then, \Cref{thm:BvN} is used to decompose this fractional (partial) allocation into integral allocations. By implication (3a) of \Cref{thm:BvN}, each agent receives exactly one item in each integral allocation in the support. We interpret the weights $w^1,\dots,w^q$ as probabilities, and sample an integral allocation from this distribution before proceeding to the next stage. The algorithm then fixes the assignments of the goods to the agents in accordance with the aforementioned integral partial allocation, and repeats the eating procedure with the reduced set of goods. In the final stage of the algorithm, fewer than $n$ goods might remain, in which case some agents do not receive any good under the integral allocation. The full procedure is described in Algorithm~\ref{alg:ef-ef1}. Note that Algorithm~\ref{alg:ef-ef1} is a randomized algorithm that returns an integral allocation in every execution. One could, however, also consider the implied fractional allocation, which is obtained by taking expectation over the randomness in Line~\ref{line:random}. Algorithm~\ref{alg:RPS} in \Cref{sec:RPS_recursive_formulation} provides the pseudocode for a recursive formulation of the algorithm. \Cref{eg:Rec_Prob_Serial} illustrates the execution of our algorithm.

\begin{algorithm}[t]
	\caption{Ex-ante EF + ex-post EF1}
	\label{alg:ef-ef1}
	\begin{algorithmic}[1]
		\State $B \leftarrow (\emptyset,\dots,\emptyset)$
		\Comment{Initialize the allocation}
		\State $M^0 \leftarrow M$
		\Comment{Initialize the set of available goods}
		\For{$t=1$ to $\lceil \frac{m}{n} \rceil$}
		\State $X^t \leftarrow \textsc{Eating}(M^t,1)$
		\Comment{Run the eating protocol for one unit of time}
		\State BvN: $X^t = \sum_{k=1}^\ell w^{t,k} A^{t,k}$
		\Comment{Decomposition from \Cref{thm:BvN}}
		\State $B^t \leftarrow \ A^{t,k}$ with probability $w^{t,k}$
		\Comment{Sampling step}\label{line:random}
		\State $M^{t+1} \leftarrow M^t \setminus \bigcup_{i \in N} B^t_i$
		\Comment{Fix the assignments according to the sampled allocation}
		 \State $B_i \leftarrow B_i \cup_{t=1}^{\lceil \frac{m}{n} \rceil} B_i^t$ for all $i \in N$
		 \Comment{Update integral allocation}
		\EndFor
		\State \Return $B$
	\end{algorithmic}
\end{algorithm}

\begin{example}
\label{eg:Rec_Prob_Serial}
Consider an instance with four goods and two agents $a_1,a_2$. The preferences of $a_1$ and $a_2$ are such that $v_1(g_1)>v_1(g_2)>v_1(g_3)>v_1(g_4)$ and $v_1(g_1)>v_1(g_3)>v_1(g_2)>v_1(g_4)$. In the first stage of the algorithm, $a_1$ consumes half of $g_1$ and half of $g_2$, while $a_2$ consumes half of $g_1$ and half of $g_3$. \Cref{fig:Rec_Prob_Serial} shows the corresponding fractional allocation and its decomposition. 

\begin{figure}[t]
	\centering
% Code adapted from <https://tex.stackexchange.com/questions/402265/draw-concentric-boxes-around-elements-of-matrix-tikz>
\tikzset{every picture/.style={line width=0.3pt}}
\begin{tikzpicture}
\footnotesize
\def\x{-3};
        \matrix [matrix of math nodes,left delimiter=(,right delimiter=)] (m) at (\x+3,0) 
        {
            0.5 & { } & 0.5 & { } & 0 & { } & 0 \\ 
            %             { }& { } & { }& { } & { }& { } & { }\\
            %            { }& { } & { }& { } & { }& { } & { }\\
            0.5 & { } & 0 & { } & 0.5 & { } & 0 \\ 
        };
        \node[above=0.05cm of m] {$X^1$};
%
% Draw "equals to" sign
		\node [right=0.3cm of m] (eq) {$=$};
%
% Draw "0.4" coefficient
		\node [right=0cm of eq] (mult1) {$0.5\ \times\ $};
%
% Draw integral matrix 1
        \matrix [matrix of math nodes,left delimiter={[},right delimiter={]},right=0.1cm of mult1] (m1)  
        {
            1 & { } & 0 & { } & 0 & { } & 0 \\  
            %             { }& { } & { }& { } & { }& { } & { }\\
      %             { }& { } & { }& { } & { }& { } & { }\\
            0 & { } & 0 & { } & 1 & { } & 0 \\ 
        };
        \node[above=0.1cm of m1] {$A^1$};
%
% Draw "+" sign
		\node [right=0.3cm of m1] (plus) {$+$};
%
% Draw "0.2" coefficient
		\node [right=0cm of plus] (mult2) {$0.5\ \times\ $};
%
% Draw integral matrix 2
        \matrix [matrix of math nodes,left delimiter={[},right delimiter={]},right=0.1cm of mult2] (m2)
        {
            0 & { } & 1 & { } & 0 & { } & 0 \\  
            %             { }& { } & { }& { } & { }& { } & { }\\
            %            { }& { } & { }& { } & { }& { } & { }\\
            1 & { } & 0 & { } & 0 & { } & 0 \\ 
        };
        \node[above=0.1cm of m2] {$A^2$};
    \end{tikzpicture}
    \caption{The fractional allocation $X^1$ and its decomposition used in \Cref{eg:Rec_Prob_Serial}.}
    \label{fig:Rec_Prob_Serial}
	\vspace{-0.1in}
\end{figure}

Suppose the integral allocation sampled by the algorithm is $A^1$ (this happens with probability 0.5). Thus, $a_1$ is allocated $g_1$ and $a_2$ is allocated $g_3$, leaving $g_2$ and $g_4$ remaining in the second stage. Since both agents prefer $g_2$ to $g_4$, each agent consumes half of each good. Therefore, conditioned on allocation $A^1$ being sampled in the first stage, each agent is allocated each of $g_2$ and $g_4$ with probability 0.5 at the end of the second stage. The corresponding distribution is given by
\begin{align*}
0.5 \times
\begin{bmatrix}
1 & 0 & 0 & 1 \\
0 & 1 & 1 & 0 \\
\end{bmatrix}
+
0.5 \times
\begin{bmatrix}
1 & 1 & 0 & 0 \\
0 & 0 & 1 & 1 \\
\end{bmatrix}.
\end{align*}
Similarly, if allocation $A^2$ is sampled at the end of first stage, then each of the two agents gets half of $g_3$ and $g_4$, resulting in the following (conditional) distribution at the end of second stage:
\begin{align*}
0.5 \times
\begin{bmatrix}
0 & 1 & 1 & 0 \\
1 & 0 & 0 & 1 \\
\end{bmatrix}
+
0.5 \times
\begin{bmatrix}
0 & 1 & 0 & 1 \\
1 & 0 & 1 & 0 \\
\end{bmatrix}.
\end{align*}

Combining the above observations, we get that the distribution over integral allocations produced by Algorithm~\ref{alg:ef-ef1} on this example is
\begin{align*}
0.25 \times
\begin{bmatrix}
1 & 0 & 0 & 1 \\
0 & 1 & 1 & 0 \\
\end{bmatrix}
+
0.25 \times 
\begin{bmatrix}
1 & 1 & 0 & 0 \\
0 & 0 & 1 & 1 \\
\end{bmatrix}
+
0.25 \times
\begin{bmatrix}
0 & 1 & 1 & 0 \\
1 & 0 & 0 & 1 \\
\end{bmatrix}
+
0.25 \times
\begin{bmatrix}
0 & 1 & 0 & 1 \\
1 & 0 & 1 & 0 \\
\end{bmatrix}.
\end{align*}

Observe that the fractional allocation returned by Algorithm~\ref{alg:ef-ef1} differs from that returned by the probabilistic serial algorithm on this example. Indeed, the latter assigns $g_2$ exclusively to $a_1$ and $g_3$ exclusively to $a_2$, whereas Algorithm~\ref{alg:ef-ef1} does not. As a consequence, Algorithm~\ref{alg:ef-ef1} fails to satisfy ordinal efficiency, which is satisfied by probabilistic serial.
\end{example}

We will first show that every integral allocation returned by Algorithm~\ref{alg:ef-ef1} satisfies EF1. The proof relies crucially on the following lemma.

\begin{lemma}\label{lem:sd-ef1-lemma}
	Let $t < \ceil{m/n}$. Let $g_{i,t}$ denote the good allocated to agent $i$ in iteration $t$ of Algorithm~\ref{alg:ef-ef1}. Then agent $i$ (weakly) prefers $g_{i,t}$ to all goods in $M^{t+1}$ (goods unallocated after $t$ iterations of the algorithm).
\end{lemma}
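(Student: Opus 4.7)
The plan is to pair two observations: (i) the good $g_{i,t}$ that agent $i$ is assigned in iteration $t$ must be a good she actually ate a positive amount of during that iteration, and (ii) any good $g \in M^{t+1}$ must have been strictly less than fully consumed during iteration $t$, hence available throughout the entire eating phase of that iteration. Combined with the fact that the \textsc{Eating} protocol makes each agent eat her favorite currently-available good at every instant, these yield the claimed preference $v_{i,g_{i,t}} \ge v_{i,g}$.

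First, I would establish (i). Since the sampled allocation $B^t$ equals $A^{t,k}$ for some $k$ with $w^{t,k} > 0$, and $A^{t,k}_{i,g_{i,t}} = 1$, the identity $X^t = \sum_k w^{t,k} A^{t,k}$ implies $X^t_{i,g_{i,t}} \ge w^{t,k} > 0$. Thus agent $i$ consumed a strictly positive fraction of $g_{i,t}$ during iteration $t$, so there exists a time interval within iteration $t$ at which $g_{i,t}$ was the good agent $i$ was eating.

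Next, for (ii), I would use property (3b) of the Birkhoff–von Neumann theorem applied to $X^t$. If some good $g$ satisfied $\sum_{i'} X^t_{i',g} = 1$ (i.e., $g$ was completely eaten during iteration $t$), then every matrix $A^{t,k}$ in the decomposition would have $\sum_{i'} A^{t,k}_{i',g} = 1$, and in particular $B^t$ would assign $g$ to some agent, so $g \notin M^{t+1}$. Contrapositively, $g \in M^{t+1}$ forces $\sum_{i'} X^t_{i',g} < 1$. Since each agent eats at unit rate, this means $g$ was never fully consumed at any point during iteration $t$, so it was continuously available to be chosen by the eating protocol throughout the whole unit of time.

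Finally, I combine the two. At the moment(s) within iteration $t$ when agent $i$ was eating $g_{i,t}$, the protocol guarantees that $g_{i,t}$ was her (weakly) favorite available good. By (ii), every $g \in M^{t+1}$ was available at that moment, so $v_{i,g_{i,t}} \ge v_{i,g}$, as required. The only delicate step is (ii): making sure to invoke the exact column-sum clause of the BvN rounding (rather than just its existence) to link ``unassigned in $B^t$'' to ``not fully eaten,'' and then to ``available throughout iteration $t$.'' Everything else is bookkeeping around the \textsc{Eating} dynamics.
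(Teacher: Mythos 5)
Your proof is correct and follows essentially the same approach as the paper's: it combines the facts that agent $i$ consumed a positive fraction of $g_{i,t}$, that any good surviving to $M^{t+1}$ was never fully consumed (hence available) during iteration $t$, and that the eating protocol always serves each agent her favorite available good. You merely present it directly rather than by contradiction, and you spell out the Birkhoff--von Neumann column-sum justification that the paper leaves implicit.
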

\begin{proof}
Suppose, for contradiction, that there exists a good $g^* \in M^{t+1}$ such that $v_i(g^*) > v_i(g_{i,t})$. Then, by \Cref{thm:BvN}, agent $i$ must have eaten a non-zero share of good $g_{i,t}$ in iteration $t$. On the other hand, $g^* \in M^{t+1}$ implies that $g^* \in M^t$, and thus $g^*$ must not be fully consumed in iteration $t$ (i.e., $\sum_{s=1}^t \sum_{i=1}^n B^s_{i,{g^*}} < 1$). However, since $v_i(g^*) > v_i(g)$, agent $i$ should have then consumed $g^*$ before consuming $g_{i,t}$ in iteration $t$, which is a contradiction. 
\end{proof}

\begin{lemma}\label{lem:ef-ef1}
	Every integral allocation returned by Algorithm~\ref{alg:ef-ef1} satisfies EF1.
\end{lemma}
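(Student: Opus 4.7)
The plan is to exploit the stage-by-stage structure of Algorithm~\ref{alg:ef-ef1}: for any two agents $i,h$ with $B_h \neq \emptyset$, I will show that removing the good $h$ received in the first stage from $B_h$ makes it dominated (in $i$'s valuation) by $B_i$, via a stage-shifted pairing that is driven entirely by \Cref{lem:sd-ef1-lemma}.

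First I record a structural fact. Let $T \coloneqq \lceil m/n \rceil$. For every stage $t \in \{1,\ldots,T-1\}$, the number of remaining goods is $|M^t| = m-(t-1)n \geq n$, so during the unit-length call to $\textsc{Eating}(M^t,1)$ every agent consumes exactly one full unit of mass. Consequently, $\sum_{j} X^t_{ij} = 1$ for every agent $i$, and property (3a) of \Cref{thm:BvN} forces the row sums of every integral allocation in the Birkhoff-von Neumann decomposition (and hence of the sampled $B^t$) to be exactly one. Thus every agent receives exactly one good in every stage $t \leq T-1$; denote this good by $g_{i,t}$. In the final stage $t=T$, each agent receives either $0$ or $1$ good.

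Now fix $i,h \in N$ with $B_h \neq \emptyset$. If $T=1$ then $|B_h|\le 1$, so $B_h \setminus \{j\} = \emptyset$ for any $j \in B_h$ and EF1 is trivial. Otherwise $T \geq 2$ and $g_{h,1}$ is guaranteed to exist; I take the witness to be $j = g_{h,1}$. For each $t \in \{1,\ldots,T-1\}$ for which $g_{h,t+1}$ exists, the good $g_{h,t+1}$ lies in $M^{t+1}$ (it had not been fully consumed by the end of stage $t$), so \Cref{lem:sd-ef1-lemma} gives $v_i(g_{i,t}) \geq v_i(g_{h,t+1})$. Summing these inequalities over all such $t$, together with the obvious bound $v_i(B_i) \geq \sum_{t=1}^{T-1} v_i(g_{i,t})$ (since $B_i \supseteq \{g_{i,1},\ldots,g_{i,T-1}\}$), yields the desired inequality $v_i(B_i) \geq v_i(B_h \setminus \{g_{h,1}\})$, because $B_h \setminus \{g_{h,1}\} = \{g_{h,s} : s \in \{2,\ldots,T\},\ g_{h,s} \text{ exists}\}$.

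The main bookkeeping obstacle is the last stage, where some agents may receive zero goods and so the bundles $B_i$ and $B_h$ can differ in cardinality by one. The shift-by-one pairing handles this transparently: if $g_{h,T}$ does not exist, the corresponding term is simply absent from the right-hand sum; if $g_{i,T}$ does exist, its value only strengthens $v_i(B_i) \geq \sum_{t=1}^{T-1} v_i(g_{i,t})$. In both directions the inequality is preserved, completing the EF1 argument for an arbitrary integral allocation in the support.
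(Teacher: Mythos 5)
Your proof is correct and follows essentially the same route as the paper's: the shift-by-one pairing $v_i(g_{i,t}) \geq v_i(g_{h,t+1})$ from \Cref{lem:sd-ef1-lemma}, summed over stages, with $g_{h,1}$ as the removed good. You simply make explicit two points the paper leaves implicit (that property (3a) of \Cref{thm:BvN} forces exactly one good per agent in every non-final stage, and the bookkeeping for agents who receive nothing in the last stage), which is a welcome but not substantively different elaboration.
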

\begin{proof}
	For any $i \in [n]$, let $g_{i,t}$ denote the good received by agent $i$ in iteration $t$ (if such a good exists). Fix a pair of agents $i,k \in N$. By \Cref{lem:sd-ef1-lemma}, for every $t < \ceil{m/n}$, we have $v_i(g_{i,t}) \ge v_i(g_{k,t+1})$ because $g_{k,t+1} \in M^{t+1}$. Hence, 
\begin{align*}
	v_i(A_i) \ge \sum_{t=1}^{\ceil{m/n}-1} v_i(g_{i,t}) \ge \sum_{t=2}^{\ceil{m/n}} v_i(g_{k,t}) = v_i(A_k \setminus \set{g_{k,1}}),
\end{align*}
which establishes EF1.
\end{proof}

\newcommand{\E}{\mathbb{E}}
\begin{theorem}\label{thm:eating_ef}
	The randomized allocation implemented by Algorithm~\ref{alg:ef-ef1} is ex-ante envy-free and ex-post EF1.
\end{theorem}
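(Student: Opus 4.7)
The plan is to split the theorem into two parts. Ex-post EF1 is already handled by Lemma~\ref{lem:ef-ef1}, so the entire work is in proving ex-ante envy-freeness of the implied fractional allocation $X \coloneqq \sum_{t=1}^{\lceil m/n \rceil} \E[B^t]$. Rather than arguing EF directly, I will prove the stronger statement that $X$ is \emph{SD-envy-free}, which implies EF for every additive valuation consistent with the ordinal rankings.

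The key structural fact I will exploit is that the Birkhoff--von Neumann decomposition in each stage preserves marginals: conditional on the set $M^t$ of goods remaining at the start of iteration $t$, we have $\E[B^t_i \mid M^t] = X^t_i$ for every agent $i$. Consequently,
\begin{equation*}
X_i \;=\; \sum_{t=1}^{\lceil m/n \rceil} \E\big[B^t_i\big] \;=\; \sum_{t=1}^{\lceil m/n \rceil} \E\big[X^t_i\big],
\end{equation*}
where the outer expectation is over the randomness of the sampling in iterations $1,\dots,t-1$ (which is what determines $M^t$). So it suffices to show that each summand $\E[X^t_i]$ SD-dominates $\E[X^t_k]$ from agent $i$'s perspective, because SD-dominance is linear and hence closed under both convex combinations and non-negative sums.

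For the per-stage claim, I would condition on any realization of $M^t$ (making $X^t$ deterministic) and replay the classical SD-envy-freeness argument for Probabilistic Serial restricted to the available goods $M^t$. Concretely, fix agents $i,k$ and any prefix $T_r = \{g_{i,1},\dots,g_{i,r}\}$ of agent $i$'s preference order. At every instant during the unit-time eating in stage $t$, if agent $k$ is consuming some good in $T_r \cap M^t$, then that good is not yet fully consumed, so agent $i$ -- who always eats her favorite available good in $M^t$ -- must also be eating some good in $T_r \cap M^t$. Integrating eating rates over the unit interval yields
\begin{equation*}
\sum_{g \in T_r \cap M^t} X^t_{i,g} \;\ge\; \sum_{g \in T_r \cap M^t} X^t_{k,g}.
\end{equation*}
Since $X^t_{i,g} = X^t_{k,g} = 0$ for $g \notin M^t$, this is exactly $\sum_{g \in T_r} X^t_{i,g} \ge \sum_{g \in T_r} X^t_{k,g}$. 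This inequality holds for every realization of $M^t$, hence it survives taking expectations; summing over $t$ gives $\sum_{g \in T_r} X_{i,g} \ge \sum_{g \in T_r} X_{k,g}$ for every prefix $T_r$, which is precisely SD-envy-freeness.

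The main obstacle is the bookkeeping that ties together the three sources of randomness and the stagewise structure: (i) the $M^t$ are random and couple the stages; (ii) the BvN sampling introduces additional randomness that must be integrated out; (iii) the prefix $T_r$ is taken with respect to agent $i$'s global preference order over $M$, not just over $M^t$. The observation that goods outside $M^t$ contribute zero to $X^t$ cleanly sidesteps issue (iii), and the marginal-preservation property of BvN, together with the linearity of SD-dominance, handles (i) and (ii) in a single step. Combined with Lemma~\ref{lem:ef-ef1}, this completes the proof of ex-ante EF and ex-post EF1.
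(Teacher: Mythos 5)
Your proposal is correct and follows essentially the same route as the paper's proof: decompose the allocation stage by stage, use the fact that the BvN sampling preserves marginals (so the conditional expectation of $B^t$ equals $X^t$), invoke the envy-freeness of each per-stage eating outcome $X^t$, and conclude by linearity of expectation. The only difference is that you prove the stronger ex-ante SD-envy-freeness by spelling out the classical Probabilistic Serial prefix argument, a strengthening the paper itself notes holds immediately after the theorem.
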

\begin{proof}
The ex-post EF1 guarantee follows readily from \Cref{lem:ef-ef1}. To prove ex-ante EF, notice that each agent eats her most preferred available good at each point in time during the \textsc{Eating} algorithm. Therefore, for any iteration $t$, the partial fractional allocation $X^t$ computed by the \textsc{Eating}$(\cdot)$ procedure in iteration $t$ is envy-free.

Fix a pair of agents $i,h \in N$. We want to show that $\E[v_i(B_i)] \ge \E[v_i(B_h)]$. Note that $B_i = \bigcup_{t=1}^{\ceil{m/n}} B^t_i$ and $B_h = \bigcup_{t=1}^{\ceil{m/n}} B^t_h$. Due to additive preferences and linearity of expectation, it is sufficient to show that for every $t$, $\E[v_i(B^t_i)] \ge \E[v_i(B^t_h)]$. Note that conditioned on $B^1,\ldots,B^{t-1}$, the expected value of any agent for $B^t_i$ is the value of that agent for $X^t_i$. Hence, for each possible value of $B^1,\ldots,B^{t-1}$, envy-freeness of $X^t$ implies that
$$\E[v_i(B^t_i) | B^1,\ldots,B^{t-1}] = \E[v_i(X^t_i) | B^1,\ldots,B^{t-1}] \ge
\E[v_i(X^t_h) | B^1,\ldots,B^{t-1}] =  \E[v_i(B^t_h) | B^1,\ldots,B^{t-1}],$$
as desired.
\end{proof}

Note that like probabilistic serial, Recursive Probabilistic Serial (\Cref{alg:ef-ef1}) takes only the ordinal preferences of agents over goods as input; thus, it is in fact ex-ante SD-EF and ex-post SD-EF1. 

\subsection{A Polynomial-Time Algorithm for Ex-ante EF and Ex-post EF1}
\label{subsec:RPS_Polytime}

Note that recursive probabilistic serial (Algorithm~\ref{alg:ef-ef1}) efficiently produces a sample integral EF1 allocation, but computing the full randomized allocation constructed by it would require exponential time. This is because at each iteration, the algorithm ``branches out'' into a polynomial number of subinstances (equal to the size of the support in the BvN decomposition) and this happens polynomially many times. Also, for the same reason, the randomized allocation it constructs can, in the worst case, have an exponential number of integral allocations in the support. 

A natural approach towards reducing the support size (in the hope of reducing the overall running time) is to use the well-known Carath{\'e}odory's theorem from convex analysis, which states that if a point $x \in \mathbb{R}^d$ lies in the convex hull of a set $P$, then $x$ can be expressed as a convex combination of at most $d+1$ vertices of $P$. Since any fractional allocation $X$ is a point in an $n\times m$-dimensional space (i.e., $X \in \mathbb{R}^{n \times m}$), by Carath{\'e}odory's theorem, if it is a convex combination of integral allocations in the support $\set{A^1,\ldots,A^{\ell}}$, then it can also be written as a convex combination of at most $mn+1$ allocations from this set. While this establishes the \emph{existence} of a convex combination with small support, it does not automatically provide an efficient \emph{algorithm} for converting a convex combination with large support to one with small support. Indeed, reducing the support size to at most $mn+1$ once it has already grown exponentially may not be possible efficiently. Instead, in our modification of recursive probabilistic serial, we ensure that the support size remains $mn+1$ after every step of ``branching out''. Note that every step of ``branching out'' can increase the support size from $mn+1$ to $(mn+1)\cdot p(m,n)$, where $p(m,n)$ is a polynomial that bounds the size of the support produced by the BvN decomposition. Thus, we need a way to reduce the support size from at most $(mn+1)\cdot p(m,n)$ to at most $mn+1$. This is where the following beautiful result comes in handy. 

\begin{proposition}[\citet{GLS81ellipsoid}]\label{lem:GLS}
	There is a polynomial-time algorithm that, given a polytope $\mathcal{P}$ with a strong separation oracle and a point $y \in \mathcal{P}$, returns a representation of $y$ as a convex combination of $\textup{dim}(\mathcal{P})+1$ vertices of $\mathcal{P}$.
\end{proposition}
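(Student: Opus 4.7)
The plan is to prove \Cref{lem:GLS} by induction on $d \coloneqq \dim(\mathcal{P})$, peeling off one vertex at each step and recursing on a face of strictly smaller dimension. The base case $d = 0$ is immediate: $\mathcal{P}$ is a single point, which is itself a vertex, and the representation is trivial. For the inductive step, the first task is to use the separation oracle to extract one vertex $v$ of $\mathcal{P}$. By the Gr\"otschel--Lov\'asz--Schrijver equivalence of optimization and separation, the ellipsoid method can optimize any linear objective $c^{\top} x$ over $\mathcal{P}$ in polynomial time; choosing $c$ generically (or with lexicographic tie-breaking) ensures that the returned optimizer is a vertex $v$.

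Having fixed $v$, the next step is to shoot a ray from $v$ through the given point $y$ and travel as far as possible inside $\mathcal{P}$: compute
\[
\lambda^{*} \coloneqq \sup\{\lambda \ge 0 : v + \lambda(y - v) \in \mathcal{P}\}.
\]
Since $\mathcal{P}$ is bounded and $y \in \mathcal{P}$, we have $1 \le \lambda^{*} < \infty$, and $\lambda^{*}$ can be approximated to sufficient precision by binary search using the strong separation oracle along the ray. Let $y' \coloneqq v + \lambda^{*}(y - v)$. By maximality, $y'$ lies on some facet-defining hyperplane $H$ of $\mathcal{P}$, which can be read off from the separating hyperplane returned at the exit point of the binary search. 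The identity
\[
y = \tfrac{\lambda^{*} - 1}{\lambda^{*}}\, v + \tfrac{1}{\lambda^{*}}\, y'
\]
then reduces writing $y$ as a convex combination of $d + 1$ vertices of $\mathcal{P}$ to writing $y'$ as a convex combination of $d$ vertices of the face $F \coloneqq \mathcal{P} \cap H$.

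The main obstacle is to make the recursion on $F$ genuinely polynomial-time. For this one must (i) equip $F$ with its own strong separation oracle, which follows from that of $\mathcal{P}$ by additionally checking the equation defining $H$ and returning $H$ (or its reverse) as a separating hyperplane whenever it is violated, and (ii) ensure $\dim(F) \le d - 1$, which follows from $H$ being facet-defining. Granting these, the inductive hypothesis applied to $(F, y')$ produces a representation of $y'$ as a convex combination of at most $d$ vertices of $F$; since every vertex of $F$ is also a vertex of $\mathcal{P}$, combining with $v$ gives the required decomposition of $y$ into at most $d + 1$ vertices of $\mathcal{P}$. Each recursion level invokes the ellipsoid machinery polynomially many times and the recursion depth is at most $\dim(\mathcal{P})$, so the total running time is polynomial in the input size and the oracle complexity, as required.
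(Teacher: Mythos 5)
The paper does not prove \Cref{lem:GLS}; it is imported verbatim from \citet{GLS81ellipsoid} (the text explicitly refers the reader to that work), so there is no in-paper proof to compare against. Your outline is essentially the argument GLS themselves give for the algorithmic Carath\'eodory theorem --- find a vertex by linear optimization via the separation--optimization equivalence, shoot a ray from that vertex through $y$ to the boundary, and recurse on a proper face containing the exit point --- so the high-level structure is the right one.

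As a self-contained proof, however, it has two genuine gaps, and both sit exactly where the GLS machinery does its work. First, exactness: binary search along the ray only \emph{approximates} $\lambda^{*}$, but the claimed output is an exact convex combination of vertices. Recovering the exact rational $\lambda^{*}$ (and exact vertices) from an approximation requires that $\mathcal{P}$ be a \emph{well-described} polyhedron with bounded facet and vertex encoding length, followed by rounding via continued fractions or simultaneous Diophantine approximation; without that hypothesis and that step, your argument only yields an approximate decomposition. Second, the face you recurse on: the hyperplane returned by a strong separation oracle for a query point just beyond $y'$ is merely \emph{some} hyperplane separating that point from $\mathcal{P}$ --- it need not pass through $y'$, need not support $\mathcal{P}$, and certainly need not be facet-defining --- so you cannot simply ``read off'' $H$ from the oracle's answer. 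Producing a supporting hyperplane at a boundary point with $\mathcal{P} \not\subseteq H$ (which is all you actually need for the dimension to drop; facet-definingness is not required) is itself one of the nontrivial reductions in GLS. Smaller omissions: the degenerate case $y = v$, the case where $\mathcal{P}$ is not full-dimensional in the ambient space (forcing the recursion to track affine hulls), and the deterministic extraction of a vertex from a ``generic'' objective all need explicit handling. None of these is fatal --- each is resolved in \citet{GLS81ellipsoid} --- but they constitute the substance of the result rather than routine details.
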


We refer the reader to their work for formal definitions of a polytope, its dimension, and a strong separation oracle. Let us describe in a bit more detail how we use this result. It is instructive to revisit the first two iterations of \Cref{alg:ef-ef1}. Notice that at the end of the first iteration ($t=1$), the fractional partial allocation $X^1$ consists of a polynomial number of integral allocations in its support, i.e., $X^1 = \sum_{k=1}^\ell w^{1,k} A^{1,k}$, where $\ell = p(m,n)$ for some polynomial $p$. Consider the polytope $\mathcal{P}^1$ formed by the convex hull of the integral allocations $A^{1,1},\dots,A^{1,\ell}$ as follows:
$$\mathcal{P}^1 \coloneqq \{z : z = \sum_k \alpha_{k}A^{1,k} \text{ where } \alpha_k \geq 0 \text{ for all } k \in [\ell] \text{ and } \sum_k \alpha_k=1\}.$$
Observe that $\mathcal{P}^1$ admits a polynomial-time strong separation oracle. That is, one can decide in polynomial time whether a given point $y \in \mathbb{R}^{n \times m}$ belongs to $\mathcal{P}^1$, or return a separating hyperplane if it doesn't (this can be done via linear programming). This implies that an $(mn+1)$-sized decomposition of $X^1$, as guaranteed by Carath{\'e}odory's theorem, can be computed in polynomial time.

Having computed an $(mn+1)$-sized decomposition of $X^1$, we now obtain at most $mn+1$ subinstances in each of which we re-run our algorithm. Notice that doing so results in a probability distribution over $(mn+1) \cdot p(m,n)$ deterministic allocations at the end of the second iteration. We can now re-invoke the aforementioned algorithmic version of Carath{\'e}odory's theorem to once again obtain a decomposition over at most $mn+1$ of these integral allocations in polynomial time. It is easy to see that this argument in fact holds for all time steps via induction. 

Notice that the randomized allocation thus produced has support that is a subset of the support of the randomized allocation produced by recursive probabilistic serial. Hence, ex-post EF1 of this modified procedure follows immediately from \Cref{lem:sd-ef1-lemma}. Finally, it is also easy to verify that the randomized allocation is also ex-ante EF by essentially the same argument as in the proof of \Cref{thm:eating_ef}: each individual step of the eating procedure produces an envy-free allocation, so even if we change the convex combination over such allocations, it remains envy-free.

Again, like Recursive Probabilistic Serial (\Cref{alg:ef-ef1}), this algorithm takes only the ordinal preferences of agents over goods as input; thus, it is in fact ex-ante SD-EF and ex-post SD-EF1. 

\section{Impossibility: Ex-ante Prop + ex-post EF1 + ex-post fPO}
\label{sec:Impossibility}

In the previous section, we showed that ex-ante EF and ex-post EF1 can be achieved simultaneously in polynomial time. The obvious next question, then, is whether we can achieve stronger guarantees. One property that our algorithms from Section~\ref{sec:EF_EF1} lack is efficiency. We defined three efficiency notions in Section~\ref{sec:Preliminaries} that are related through the following logical implications: ex-ante PO $\Rightarrow$ ex-post fPO $\Rightarrow$ ex-post PO. 

Let us consider adding the weakest of them: ex-post PO. Unfortunately, we were not able to settle whether ex-ante EF (or even ex-ante Prop) is compatible with ex-post EF1 and ex-post PO. The difficulty is that we do not understand the structure of integral EF1+PO allocations; we elaborate on this issue in Section~\ref{sec:Discussion} and lay out several more open questions that stem from this. 

Let us move on and consider imposing a slightly stronger efficiency notion: ex-post fPO. We note that integral EF1+fPO allocations are known to always exist~\cite{BKV18}. Hence, the question of whether we can randomize over such allocations to achieve a desirable ex-ante fairness guarantee is meaningful. However, in this case, we show that even achieving ex-ante proportionality is impossible along with ex-post EF1 and ex-post fPO. 

\begin{theorem}\label{thm:impossible-prop-ef1-fpo}
	There exists an instance with additive valuations in which no randomized allocation is simultaneously ex-ante proportional, ex-post envy-free up to one good, and ex-post fractionally Pareto optimal.
\end{theorem}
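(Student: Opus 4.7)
The plan is to exhibit an explicit small instance in which the unique integral allocation satisfying both ex-post EF1 and ex-post fPO gives one agent a value strictly less than their proportional share; since any randomized allocation's expected value for that agent is bounded above by the maximum value attained over its ex-post support, ex-ante proportionality must fail.

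Take $n=2$ agents and $m=2$ goods with $v_1=(2,1)$ and $v_2=(3,1)$, so that agent~$1$'s proportional share is $v_1(\vec{1}^m)/2=3/2$. There are four integral allocations, and I would rule out three of them. The allocation $A_1=\emptyset$, $A_2=\{1,2\}$ fails EF1 because agent~$1$'s envy toward agent~$2$ is $v_1(\{1,2\})=3$, and removing any single good from $A_2$ leaves envy $\ge 1>0$; symmetrically, $A_1=\{1,2\}$, $A_2=\emptyset$ fails EF1 since agent~$2$'s envy is $4$ with minimum post-removal envy $\ge 1$. The allocation $A_1=\{1\}$, $A_2=\{2\}$ fails fPO: transferring an $\varepsilon$-fraction of good~$1$ from agent~$1$ to agent~$2$ and a $2\varepsilon$-fraction of good~$2$ from agent~$2$ to agent~$1$ (feasible for $\varepsilon\in(0,1/2]$) changes agent~$1$'s utility by $-2\varepsilon+2\varepsilon=0$ and agent~$2$'s by $3\varepsilon-2\varepsilon=\varepsilon>0$, a Pareto improvement.

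The only remaining allocation is $A_1=\{2\}$, $A_2=\{1\}$, which I would verify is both EF1 (agent~$1$'s envy of~$1$ drops to $-1$ after removing agent~$2$'s lone good) and fPO: the analogous compensated swap (transferring $\varepsilon$ of good~$2$ from agent~$1$ to agent~$2$ and $\delta$ of good~$1$ from agent~$2$ to agent~$1$) would require the two Pareto inequalities $2\delta\ge\varepsilon$ and $\varepsilon\ge 3\delta$ simultaneously, forcing $\delta\le 0$. Consequently every ex-post EF1+fPO allocation in this instance gives agent~$1$ value exactly $1<3/2$, and any randomization therefore has $\mathbb{E}[v_1]\le 1 < 3/2$, violating ex-ante proportionality.

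The main obstacle is the fPO ruling-out of $A_1=\{1\}$, $A_2=\{2\}$: no pure integer swap yields a Pareto improvement (each would-be swap makes one agent strictly worse by at least~$1$), so one must construct a compensated fractional swap that exploits both the strict inequality $v_2(g_1)>v_1(g_1)$ and the tied valuation $v_1(g_2)=v_2(g_2)$, choosing the ratio of transferred fractions to keep agent~$1$ indifferent while making agent~$2$ strictly gain. The remaining steps --- enumeration over the four allocations and the EF1 checks --- are routine.
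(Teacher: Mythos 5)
Your proposal is correct and follows essentially the same approach as the paper: it exhibits the same instance (up to relabeling the two goods), shows that the unique integral EF1{+}fPO allocation gives agent~$1$ utility $1 < 3/2$, and concludes that any ex-post EF1{+}fPO randomization must be supported on that single allocation, violating ex-ante proportionality. The only cosmetic differences are that you verify fPO and non-fPO via compensated $\varepsilon$-transfers where the paper exhibits an explicit dominating fractional allocation and a utilitarian-welfare-maximization argument.
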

\begin{proof}
We present an instance in which the unique integral allocation satisfying EF1+fPO violates proportionality. Specifically, consider an instance with two goods ($g_1,g_2$) and two agents ($a_1,a_2$) whose additive valuations are as follows: $v_{1,1}=1,v_{1,2}=2$ and $v_{2,1}=1,v_{2,2}=3$. 
%	\begin{table}[h]
%	\centering
%	\begin{tabular}{c|cc}
%%		\hline
%		& $g_1$ & $g_2$ \\
%		\hline
%		$a_1$ & 1 & 2  \\
%%		\hline
%		$a_2$ & 1 & 3  \\
%%		\hline
%	\end{tabular}
%	\end{table}
This instance has exactly two integral EF1 allocations: $A \coloneqq (\{g_1\},\{g_2\})$ and $B \coloneqq (\{g_2\},\{g_1\})$. 
%$A^1 = \set{g_1 \to 1, g_2 \to 2}$ and $A^2 = \set{g_1 \to 2, g_2 \to 1}$.
It is easy to check that $B$ is not fPO, since it is Pareto dominated by a fractional allocation $X$ that assigns $g_1$ completely to $a_1$, and splits $g_2$ equally between the two agents. Indeed, $v_1(X_1) = v_1(g_1) + 0.5 \cdot v_1(g_2) = 2 \geq v_1(B_1)$ and $v_2(X_2) = 0.5 \cdot v_2(g_2) = 1.5 > v_2(B_2)$. To see why $A$ is fPO, notice that it assigns each good to an agent that has the highest valuation for it. Therefore, $A$ maximizes the utilitarian social welfare (i.e., sum of agents' utilities) and is therefore fPO. 

Finally, observe that $A$ violates proportionality since $v_1(A_1) = v_1(g_1) < \frac{1}{2} v_1(g_1 \cup g_2)$. Therefore, any randomized allocation that is ex-post EF1+fPO must be supported entirely on the integral allocation $A$, in which case it violates ex-ante Prop, as desired.
\end{proof}

% alternate option
% \Cref{thm:impossible-prop-ef1-fpo} implies a number of other impossibilities. For example, the impossibility still holds if we
% \begin{itemize}
% 	\item replace ex-ante proportionality with a stronger property such as ex-ante envy-freeness or ex-ante group fairness, and/or
% 	\item replace ex-post fPO with the stronger property of ex-ante PO.
% \end{itemize}

%EC submitted version
For example, since ex-ante envy-freeness and ex-ante group fairness imply ex-ante proportionality, they are also incompatible with ex-post EF1+fPO. Similarly, since ex-ante PO implies ex-post fPO (\Cref{prop:po-fpo}), we also get that ex-ante Prop+PO and ex-ante EF+PO are incompatible with ex-post EF1. %\Cref{impossible-misc} summarizes these observations.

%\begin{corollary}\label{impossible-misc}
%	There exists an instance in which no randomized allocation is (a) ex-ante envy-free/core/group fair and ex-post envy-free up to one good and fractionally Pareto optimal, and (b) ex-ante proportional and Pareto optimal and ex-post envy-free up to one good.
%\end{corollary}

We mentioned earlier that achieving ex-ante Prop or ex-ante EF along with ex-post EF1+PO is an open question. Two prominent methods for finding an integral EF1+PO allocation are the  integral MNW rule~\cite{CKMP+19} and the market-based rule of \citet{BKV18}. An interesting implication of \Cref{thm:impossible-prop-ef1-fpo} is that we cannot hope to achieve even ex-ante proportionality by randomizing over allocations returned by either method. The latter method is guaranteed to return an integral EF1+fPO allocation, so \Cref{thm:impossible-prop-ef1-fpo} directly applies. And the MNW rule, while only guaranteed to return an integral EF1+PO allocation, uniquely returns allocation $A$ in the example presented in the proof of \Cref{thm:impossible-prop-ef1-fpo}, which violates proportionality.

\section{Possibility: Ex-ante GF + Ex-post Prop1 + Ex-Post EF$_1^1$}
\label{sec:Prop1_EF11}

Given the impossibility of ex-ante Prop and ex-post EF1+fPO derived in the previous section, it is evident that retaining efficiency would require relaxing at least one of the fairness guarantees. Ex-ante Prop is already a very weak fairness guarantee. So, in this section, we focus on relaxing ex-post EF1. There are two relaxations that have been proposed in the literature --- namely, Prop1 and EF$_1^1$. We show that both of these can be achieved simultaneously, and in fact, this can be done while strengthening ex-ante Prop and ex-post fPO to ex-ante GF. Recall that ex-ante GF implies not only ex-ante Prop, but also ex-ante PO, which, by Proposition~\ref{prop:po-fpo}, implies ex-post fPO for any implementation of it. In other words, our goal is to implement a fractional GF allocation using integral Prop1+EF$_1^1$ allocations. 

Luckily, we know that a fractional GF allocation always exists. \citet{CFSV19} argued that a fractional MNW allocation satisfies group fairness. They did not provide a formal proof; for completeness, we provide a proof of this result for the more general cake-cutting setting in the appendix. Further, we know that a fractional MNW allocation can be computed in strongly polynomial time~\cite{Orlin10,Vegh16}.

Hence, we ask whether a fractional MNW allocation can be implemented using integral Prop1+ EF$_1^1$ allocations. Our starting point is a result by \citet{BCKM13} that allows implementing any fractional allocation using integral allocations that are very ``close'' to it in agent utilities. Specifically, they prove the next result deriving and using an extension of the classic Birkhoff-von Neumann theorem~\cite{Birk46,Neu53}. 
 
\begin{proposition}[Utility Guarantee; Theorem 9 of \citet{BCKM13}]\label{prop:BCKMutility}
Given any fractional allocation $X$, one can compute, in strongly polynomial time, a randomized allocation implementing $X$ whose support consists of integral allocations $A^1,\dots,A^\ell$ such that for every $k \in [\ell]$ and every agent $i \in N$,
\begin{align*}
|v_i(X_i) - v_i(A^k_i)| \leq \max \{v_i(g) - v_i(g') : 0 < X_{i,g},X_{i,g'} < 1 \text{ and } g,g' \in M \}.
\end{align*}
\end{proposition}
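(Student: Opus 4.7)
The plan is to derive the proposition from the bihierarchical Birkhoff--von Neumann theorem of \citet{BCKM13}, followed by a short telescoping argument that converts a count-rounding guarantee into a utility-rounding guarantee. For each agent $i$, I would relabel the goods so that $v_i(g_1^i) \ge v_i(g_2^i) \ge \cdots \ge v_i(g_m^i)$, and let $S_{i,k} \coloneqq \{g_1^i,\dots,g_k^i\}$ be agent $i$'s top-$k$ preference prefix. I would then set up two laminar families of subsets of the $n\times m$ matrix entries: a ``row'' family $\mathcal{H}_R$ containing, for each agent $i$, the nested sets $\{i\}\times S_{i,k}$ (together with the singletons $\{(i,g)\}$), and a ``column'' family $\mathcal{H}_C$ containing $N\times\{g\}$ for each good $g$. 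Each family is laminar, and together they form a bihierarchy.

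The first step is to invoke BCKM's rounding theorem: starting from $X$, it produces in strongly polynomial time a decomposition $X = \sum_\ell w^\ell A^\ell$ with $A^\ell\in\{0,1\}^{n\times m}$, such that for every set $S$ in $\mathcal{H}_R\cup\mathcal{H}_C$ and every $\ell$,
\[
\sum_{(i,g)\in S} A^\ell_{i,g} \in \left\{\left\lfloor \sum_{(i,g)\in S} X_{i,g}\right\rfloor,\ \left\lceil \sum_{(i,g)\in S} X_{i,g}\right\rceil\right\}.
\]
Specializing to $S=\{i\}\times S_{i,k}$, and writing $F_i^k \coloneqq \sum_{j\le k} X_{i,g_j^i}$ and $G_i^k \coloneqq |A^\ell_i\cap S_{i,k}|$, this yields $G_i^k\in\{\lfloor F_i^k\rfloor,\lceil F_i^k\rceil\}$; in particular $G_i^k = F_i^k$ whenever $F_i^k$ is already an integer.

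The second step is to turn this count bound into the claimed utility bound via Abel summation. Setting $v_{i,k}\coloneqq v_i(g_k^i)$ and $v_{i,m+1}\coloneqq 0$, a telescoping identity gives
\[
v_i(X_i) - v_i(A^\ell_i) \;=\; \sum_{k=1}^m (v_{i,k}-v_{i,k+1})(F_i^k-G_i^k).
\]
The coefficients $v_{i,k}-v_{i,k+1}$ are non-negative, and each summand vanishes on indices $k$ at which $F_i^k$ is an integer. Letting $\underline{k}$ and $\overline{k}$ be the smallest and largest indices with $X_{i,g_k^i}\in(0,1)$, $F_i^k$ is an integer outside $[\underline{k},\overline{k}-1]$, so only terms inside this window contribute, and the remaining sum telescopes to $v_{i,\underline{k}} - v_{i,\overline{k}}$, which is exactly $\max\{v_i(g)-v_i(g'):0<X_{i,g},X_{i,g'}<1\}$.

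The main obstacle is the boundary bookkeeping at the right edge of the fractional window: ensuring that $F_i^{\overline{k}}$ rounds to itself requires the row total $F_i^m$ to be an integer, which in turn requires the BCKM decomposition to honour the column completeness of $X$ so that row and column roundings are compatible. This consistency is precisely what the bihierarchy structure buys us, and it is the heart of the BCKM theorem; once it is in hand, the telescoping bound above goes through cleanly. A secondary, routine task is to verify laminarity of $\mathcal{H}_R\cup\mathcal{H}_C$ and to exhibit a polynomial-time strong separation oracle for the associated polytope, both of which follow immediately from the nested-or-disjoint structure of preference prefixes.
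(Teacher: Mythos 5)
Your setup---the bihierarchy of preference-prefix row constraints together with column constraints, the BCKM rounding, and the Abel-summation identity $v_i(X_i)-v_i(A^\ell_i)=\sum_{k=1}^m (v_{i,k}-v_{i,k+1})(F_i^k-G_i^k)$---is exactly the machinery the paper uses (in its proof of the strengthened Lemma~\ref{lem:utility}). The gap is in your final step. You claim that $F_i^k$ is an integer for all $k\ge\overline{k}$, so that the sum can be truncated at $\overline{k}-1$ and telescopes to $v_{i,\underline{k}}-v_{i,\overline{k}}$, and you assert that the required integrality of the row total $F_i^m$ is ``precisely what the bihierarchy structure buys us'' via column completeness. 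It is not: column completeness does not make row sums integral. Take two agents who each hold half of each of three goods---every column sums to $1$, yet every row sums to $1.5$, so $F_i^{\overline{k}}=F_i^m=1.5$ and the terms with $k\ge\overline{k}$ do not vanish. Without the truncation at the top, your telescoping only yields the weaker bound $v_{i,\underline{k}}$ (the value of the best fractionally held good), not the stated difference $v_{i,\underline{k}}-v_{i,\overline{k}}$.

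This is not a repairable piece of bookkeeping: with non-integral row sums the stated two-sided bound can genuinely fail. With two agents and a single good held half-half, the right-hand side of the proposition is $\max\{v_i(g_1)-v_i(g_1)\}=0$, yet any integral allocation moves agent $i$'s utility by $v_i(g_1)/2$. So the ``difference of two values'' form implicitly needs each agent's row sum to be an integer, as in BCKM's multi-unit assignment setting; it cannot be derived for an arbitrary fractional allocation. The paper's own route sidesteps this entirely: in Lemma~\ref{lem:utility} it truncates the Abel sum only from below, at the first index $k^{-}$ where $\hat{A}_{i,g_{i,k}}<X_{i,g_{i,k}}$, uses $Q_{i,k}-\hat{Q}_{i,k}<1$ for $k\ge k^{-}$, and lets the sum telescope all the way down to $v_{i,m+1}=0$, obtaining the one-sided bound $v_i(X_i)-v_i(\hat{A}_i)<v_{i,k^{-}}$ (and symmetrically $v_i(\hat{A}_i)-v_i(X_i)<v_{i,k^{+}}$). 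That single-good bound holds for arbitrary fractional allocations and is what the paper actually uses downstream; to salvage your argument you should either prove that version or add the integer-row-sum hypothesis explicitly.
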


Notice that the upper bound established in \Cref{prop:BCKMutility} on how much agent $i$'s utility under an integral allocation $A^k$ in the support can differ from her utility under the fractional allocation $X$ depends only on the fractional allocation $X$. In contrast, the fairness guarantees we want to establish for the integral allocations in the support --- Prop1 and EF$_1^1$ --- consider what happens when we add a good to the bundle of agent $i$ that agent $i$ \emph{is not already allocated}; in other words, we need a stronger guarantee for integral allocations in the support which depends on which goods the agent is (or is not) allocated ex-post. 

It turns out that the method proposed by \citet{BCKM13} already provides such a guarantee, and their proof can be adapted to establish a more nuanced bound. Specifically, we show that if the agent's ex-ante utility $v_i(X_i)$ exceeds her ex-post utility $v_i(A^k_i)$, then the gap is at most the maximum value the agent has for any good that she \emph{lost} in the integral allocation (i.e., any good $g$ such that $0 < X_{i,g} < 1$ and $A^k_{i,g} = 0$). Similarly, if the ex-post utility exceeds the ex-ante utility, then the gap is at most the maximum value the agent has for any good that she \emph{gained} in the integral allocation (i.e., any good $g$ such that $0 < X_{i,g} < 1$ and $A^k_{i,g} = 1$). We later show that this subtle improvement helps us establish the desired ex-post fairness guarantees.

\begin{lemma}[Utility Guarantee++]\label{lem:utility}
	Given a fractional allocation $X$, one can compute, in strongly polynomial time, a randomized allocation implementing $X$ whose support consists of integral allocations $A^1,\ldots,A^{\ell}$ such that for every $k \in [\ell]$ and every agent $i \in N$, the following hold:
	\begin{enumerate}
		\item If $v_i(A^k_i) < v_i(X_i)$, then $\exists \, g_i^{-} \notin A^k_i$ with $X_{i,g_i^{-}} > 0$ such that $v_i(A^k_i) + v_i(g_i^{-}) > v_i(X_i)$.
		\item If $v_i(A^k_i) > v_i(X_i)$, then $\exists \, g_i^{+} \in A^k_i$ with $X_{i,g_i^{+}} < 1$ such that $v_i(A^k_i)- v_i(g_i^{+}) < v_i(X_i)$. 
	\end{enumerate}
\end{lemma}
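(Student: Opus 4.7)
The plan is to revisit the construction of \citet{BCKM13} used to prove \Cref{prop:BCKMutility} and enlarge their bihierarchical constraint system with \emph{agent-specific prefix constraints}, so that the refined utility bound can be extracted via summation by parts.

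Concretely, for each agent $i \in N$, fix an ordering $g_{i,1}, \ldots, g_{i,m}$ of the goods with $v_i(g_{i,1}) \ge \cdots \ge v_i(g_{i,m})$ and let $P_{i,k} \coloneqq \set{g_{i,1}, \ldots, g_{i,k}}$. In addition to the column constraints $\sum_i A_{i,g} \le 1$, impose, for every $i \in N$ and $k \in [m]$, the prefix constraint
\[
\sum_{g \in P_{i,k}} A_{i,g} \in \left[\floor{\sum_{g \in P_{i,k}} X_{i,g}},\ \ceil{\sum_{g \in P_{i,k}} X_{i,g}}\right].
\]
Viewing allocations as $0/1$ vectors indexed by (agent, good) pairs, the prefix constraints form a hierarchy: for each fixed $i$, the sets $R_{i,k} \coloneqq \set{(i,g) : g \in P_{i,k}}$ are nested, and across different agents these sets are disjoint. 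Together with the column constraints they form a bihierarchy, so invoking the bihierarchical extension of Birkhoff--von Neumann from \citet{BCKM13} yields, in strongly polynomial time, a decomposition $X = \sum_\ell w^\ell A^\ell$ whose integral allocations $A^\ell$ all satisfy these constraints.

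Given such a decomposition, fix an integral allocation $A^k$ and agent $i$, write $V_j \coloneqq v_i(g_{i,j})$ with $V_{m+1} \coloneqq 0$, and set $\delta_j \coloneqq \sum_{l \le j} X_{i, g_{i,l}} - \sum_{l \le j} A^k_{i, g_{i,l}}$. The prefix constraints force $\delta_j \in (-1,1)$ for every $j$. Summation by parts gives
\[
v_i(X_i) - v_i(A^k_i) \;=\; \sum_{j=1}^m \delta_j \,(V_j - V_{j+1}).
\]
For Case~(1), let $j^*$ be the smallest index with $\delta_{j^*} > 0$. The jump $\delta_{j^*-1} \le 0 < \delta_{j^*}$ forces $X_{i,g_{i,j^*}} > A^k_{i,g_{i,j^*}}$, so $X_{i,g_{i,j^*}} > 0$ and $A^k_{i,g_{i,j^*}} = 0$, making $g_i^- \coloneqq g_{i,j^*}$ a legitimate choice. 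Using $\delta_j \le 0$ for $j < j^*$ and $\delta_j < 1$ for $j \ge j^*$,
\[
v_i(X_i) - v_i(A^k_i) \;<\; \sum_{j \ge j^*}(V_j - V_{j+1}) \;=\; V_{j^*} \;=\; v_i(g_i^-),
\]
which rearranges to the desired $v_i(A^k_i) + v_i(g_i^-) > v_i(X_i)$. Case~(2) is symmetric: take the smallest $j^{**}$ with $\delta_{j^{**}} < 0$; the jump $\delta_{j^{**}-1} \ge 0 > \delta_{j^{**}}$ yields $A^k_{i,g_{i,j^{**}}} = 1$ and $X_{i,g_{i,j^{**}}} < 1$, and the mirror-image bound gives $v_i(A^k_i) - v_i(g_i^+) < v_i(X_i)$ with $g_i^+ \coloneqq g_{i,j^{**}}$.

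The main obstacle I anticipate lies in Step~1: carefully verifying that the prefix constraints, together with the column constraints, genuinely assemble into a bihierarchy suitable for the extended Birkhoff--von Neumann machinery of \citet{BCKM13}. The key observation is that for distinct agents the prefix constraints involve disjoint (agent, good) slices, so the union of per-agent laminar families remains laminar. A secondary subtlety is justifying the \emph{strict} inequality in the utility bound: strictness comes from the fact that $\delta_j < 1$ is never tight (since $\sum_{l \le j} A^k_{i, g_{i,l}} \ge \floor{\sum_{l \le j} X_{i, g_{i,l}}}$), together with the observation that if $V_j - V_{j+1} = 0$ for every $j \ge j^*$, then $V_{j^*}$ itself would vanish, contradicting $v_i(A^k_i) < v_i(X_i)$.
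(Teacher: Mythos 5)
Your proposal is correct and follows essentially the same route as the paper: the same per-agent prefix (quota) constraints on goods sorted by value, combined with the column constraints into a bihierarchy and rounded via the Budish et al.\ extension of Birkhoff--von Neumann, followed by the same summation-by-parts bound using $\delta_j \in (-1,1)$. The only cosmetic difference is that you select the witness good at the first index where the prefix gap $\delta_j$ changes sign, whereas the paper takes the first coordinate where $X_{i,g}$ and $A^k_{i,g}$ differ in the relevant direction; both choices yield a good with the required membership and the same strict utility bound.
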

\begin{proof}
In their proof of \Cref{prop:BCKMutility}, \citet{BCKM13} propose the following method for computing an implementation of a given fractional allocation $X$. Consider a fixed agent $i \in N$. Suppose the goods in $M$ are indexed as $g_{i,1},\dots,g_{i,m}$ so that $v_i(g_{i,k}) \ge v_i(g_{i,k+1})$ for each $k \in [m-1]$. For simplicity, we will write $v_{i,k} \coloneq v_i(g_{i,k})$ for all $k \in [m]$ and $v_{i,m+1} \coloneq 0$.

For any $i \in N$ and any $k \in [m]$, define $Q_{i,k} \coloneqq \sum_{t=1}^k X_{i,g_{i,t}}$ as the total fractional amount of the $k$ most preferred goods assigned to agent $i$ under $X$. Consider the following set of constraints on a generic fractional allocation $Y$: 
\begin{equation}\label{eqn:constraints}
\begin{aligned}
&\mathcal{H}_1 : \floor{Q_{i,k}} \le \sum_{t=1}^k Y_{i,g_{i,t}} \le \ceil{Q_{i,k}},\ \forall i \in N \text{ and } \forall k \in [m],\\
&\mathcal{H}_2 : \sum_{i \in N} Y_{i,g} = 1,\ \forall g \in M.
\end{aligned}
\end{equation}
Observe that $X$ trivially satisfies these constraints. \citeauthor{BCKM13} show that these constraints have the so-called ``bihierarchy'' structure (refer to \Cref{sec:BCKMRounding} for a formal definition), which allows computing, in strongly polynomial time, an implementation of $X$ whose support consists of integral allocations $A^1,\ldots,A^{\ell}$ that also satisfy these constraints. 

Lastly, \citeauthor{BCKM13} show that any integral allocation satisfying the constraints in \Cref{eqn:constraints} must satisfy the guarantee in \Cref{prop:BCKMutility}. We show that it in fact satisfies the slightly stronger guarantee that we seek. For simplicity, let us write $\hat{A}$ to denote a generic integral allocation satisfying the constraints in \Cref{eqn:constraints}, and $\hat{Q}_{i,k} \coloneqq \sum_{t=1}^k \hat{A}_{i,g_{i,t}}$ for all $i \in N$ and $k \in [m]$.

Let us first analyze the case where $v_i(\hat{A}_i) < v_i(X_i)$. Then, there must exist some good $g \in M$ such that $\hat{A}_{i,g} < X_{i,g}$. Since $\hat{A}_{i,g} \in \set{0,1}$ and $X_{i,g} \in [0,1]$, this is equivalent to $\hat{A}_{i,g} = 0 < X_{i,g}$. Let $k^{-}$ be the smallest index such that $\hat{A}_{i,g_{i,k^{-}}} < X_{i,g_{i,k^{-}}}$, i.e., $g_{i,k^{-}}$ is agent $i$'s most preferred good satisfying this condition. Hence, $g_{i,k^{-}} \notin \hat{A}_i$ and $X_{i,g_{i,k^{-}}} > 0$. Further, for all $k < k^{-}$, we have $\hat{A}_{i,g_{i,k}} \ge X_{i,g_{i,k}}$, and, as a result, $\hat{Q}_{i,k} \ge Q_{i,k}$. Thus, 
	\begin{align*}
	v_i(X_i) - v_i(\hat{A}_i) &= \sum_{k=1}^m v_{i,k} \cdot (X_{i,g_{i,k}} - \hat{A}_{i,g_{i,k}}) = \sum_{k=1}^m (v_{i,k}-v_{i,k+1}) \cdot (Q_{i,k} - \hat{Q}_{i,k})\\
	&\le \sum_{k=k^{-}}^m (v_{i,k}-v_{i,k+1}) \cdot (Q_{i,k} - \hat{Q}_{i,k}) < \sum_{k=k^{-}}^m (v_{i,k}-v_{i,k+1}) \cdot 1 = v_{i,k^{-}},
	\end{align*}
	where the second transition is a simple algebraic exercise, the third transition holds because we noted that $\hat{Q}_{i,k} \ge Q_{i,k}$ for all $k < k^{-}$, and the fourth transition holds because $\hat{A}$ satisfies $\H_1$ in \Cref{eqn:constraints}, and therefore, we have that $Q_{i,k} - \hat{Q}_{i,k} \le Q_{i,k} - \floor{Q_{i,k}} < 1$. Taking $g_i^{-} \coloneqq g_{i,k^{-}}$, we notice that this is the guarantee we desire when $v_i(\hat{A}_i) < v_i(X_i)$. 
	
	Next, consider the other case where $v_i(\hat{A}_i) > v_i(X_i)$. Then, there must exist some good $g \in M$ such that $\hat{A}_{i,g} > X_{i,g}$. Note that this is equivalent to $\hat{A}_{i,g} = 1 > X_{i,g}$. Let $k^{+}$ be the smallest index such that $g_{i,k^{+}}$ satisfies this condition. Then, we have that $g_{i,k^{+}} \in \hat{A}_i$, $X_{i,g_{i,k^{+}}} < 1$, and by an argument similar to the one above, $v_i(\hat{A}_i) - v_i(X_i) < v_{i,k^{+}}$. Hence, in this case, we can take $g_i^{+} \coloneqq g_{i,k^{+}}$, as desired.	
\end{proof}

We now show how \Cref{lem:utility} can be used to achieve our desired ex-post fairness guarantees of Prop1 and EF$_1^1$. 

\begin{theorem}\label{thm:prop1-ef11}
	There is a strongly polynomial-time algorithm that, given any fractional proportional (Prop) allocation as input, computes an implementation of it using integral allocations that are proportional up to one good (Prop1). If, in addition, the input is a fractional MNW allocation, then the integral allocations in the support also satisfy envy-freeness up to one good more-and-less (EF$_1^1$). 
\end{theorem}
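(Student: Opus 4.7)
The plan is to apply Lemma \ref{lem:utility} (Utility Guarantee++) directly to $X$, obtaining an implementation $\{(w^k, A^k)\}_{k \in [\ell]}$ in strongly polynomial time, and then to verify the claimed ex-post guarantees by case-splitting on the two per-agent utility bounds. For Prop1, fix any $i, k$: either $v_i(A^k_i) \geq v_i(X_i) \geq v_i(\vec{1}^m)/n$ (using proportionality of $X$), or Lemma \ref{lem:utility}(1) supplies a witness $g_i^{-} \notin A^k_i$ with $v_i(A^k_i) + v_i(g_i^{-}) > v_i(X_i) \geq v_i(\vec{1}^m)/n$. Either way, Prop1 holds. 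The subtle strengthening of Lemma \ref{lem:utility} over \Cref{prop:BCKMutility}---that $g_i^{-}$ is \emph{unallocated} to $i$---is essential here, since Prop1 requires adding a good not already in $A^k_i$.

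For EF$_1^1$ under a fractional MNW input, I will use envy-freeness of MNW to get $v_i(X_i) \geq v_i(X_h)$. Fix $A^k$ (with $w^k > 0$) and agents $i, h$ with $A^k_h \neq \emptyset$. The plan is to exhibit $j_i \notin A^k_i$ and $j_h \in A^k_h$ satisfying the chain
\[
v_i(A^k_i \cup \{j_i\}) \;\geq\; v_i(X_i) \;\geq\; v_i(A^k_h \setminus \{j_h\}).
\]
The left inequality comes from Lemma \ref{lem:utility}(1) applied to $i$ (taking $j_i = g_i^{-}$ if $v_i(A^k_i) < v_i(X_i)$, and any $j_i \notin A^k_i$ otherwise; such a $j_i$ exists since $A^k_h \neq \emptyset$). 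For the right inequality, apply Lemma \ref{lem:utility}(2) to agent $h$: either $v_h(A^k_h) \leq v_h(X_h)$ (choose any $j_h \in A^k_h$), or the lemma supplies $j_h \in A^k_h$ with $v_h(A^k_h \setminus \{j_h\}) < v_h(X_h)$. In both cases $v_h(A^k_h \setminus \{j_h\}) \leq v_h(X_h)$, but this bound is in $h$'s valuation rather than $i$'s, which is the obstacle.

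The hard part is bridging from $v_h$ to $v_i$ on $h$'s bundle, and this is where the MNW hypothesis (beyond mere envy-freeness) comes in. The plan relies on two facts. First, every $g \in A^k_h$ has $X_{h, g} > 0$: if $X_{h, g} = 0$, then $\sum_{k'} w^{k'} A^{k'}_{h, g} = 0$ forces $A^k_{h, g} = 0$, a contradiction. Second, the KKT conditions for the fractional MNW program give $v_{i,g}/v_i(X_i) \leq v_{h,g}/v_h(X_h)$ for every $g$ with $X_{h, g} > 0$, i.e., $v_{i,g} \leq v_{h,g} \cdot v_i(X_i)/v_h(X_h)$. Summing this over $g \in A^k_h \setminus \{j_h\} \subseteq \mathrm{supp}(X_h)$ and applying the $v_h$-bound from the previous step yields
\[
v_i(A^k_h \setminus \{j_h\}) \;\leq\; v_h(A^k_h \setminus \{j_h\}) \cdot \frac{v_i(X_i)}{v_h(X_h)} \;\leq\; v_h(X_h) \cdot \frac{v_i(X_i)}{v_h(X_h)} \;=\; v_i(X_i),
\]
closing the chain and establishing EF$_1^1$.
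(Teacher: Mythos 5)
Your proposal is correct and follows essentially the same route as the paper: Lemma~\ref{lem:utility} gives Prop1 directly, and EF$_1^1$ follows by combining the two one-sided bounds of Lemma~\ref{lem:utility} with the fractional MNW first-order condition (\Cref{eqn:CEEI}) summed over the envied agent's bundle minus one good. The only difference is cosmetic --- you bound $v_i(A^k_h \setminus \{j_h\})$ from the envier's perspective where the paper bounds $v_h(A^k_i \setminus \{g_i^+\})$ from the envied agent's perspective, which is the identical argument with the roles of $i$ and $h$ relabeled (and your invocation of envy-freeness of $X$ is never actually used).
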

\begin{proof}
	Let $X$ be a fractional allocation, and let $A^1,\ldots,A^{\ell}$ be integral allocations in the support of an implementation of $X$ produced by \Cref{lem:utility}. 
	
	Suppose $X$ satisfies proportionality. We want to show that for each $k \in [\ell]$, $A^k$ is Prop1. Since $X$ is proportional, for every $i \in N$, $v_i(X_i) \ge v_i(\vec{1}^m)/n$, where $v_i(\vec{1}^m)$ is agent $i$'s utility for receiving all goods fully. Fix $k \in [\ell]$. By \Cref{lem:utility}, we have that for every agent $i \in N$, either $v_i(A^k_i) \ge v_i(X_i) \ge v_i(\vec{1}^m)/n$, or there exists a good $g \notin A^k_i$ such that $v_i(A^k_i) + v_i(g) > v_i(X_i) \ge v_i(\vec{1}^m)/n$. Therefore, $A^k$ is Prop1. 
	
	Next, suppose that $X$ maximizes the Nash social welfare among all fractional allocations. Since a fractional MNW allocation is certainly proportional~\cite{Var74}, the aforementioned argument still applies for ex-post Prop1. We show that in this case, $A^k$ is also EF$_1^1$ for each $k \in [\ell]$. Note that since $X$ is a fractional MNW allocation, the following condition is satisfied for any pair of agents $i,h \in N$ and any good $g \in M$:\footnote{The condition that transferring an arbitrarily small fraction of good $g$ from agent $i$ to agent $h$ does not increase Nash welfare reduces to this condition.} 
	\begin{equation}\label{eqn:CEEI}
	X_{i,g} > 0 \implies \frac{v_i(g)}{v_i(X_i)} \ge \frac{v_h(g)}{v_h(X_h)}.
	\end{equation}

	Fix a pair of distinct agents $i,h \in N$. By \Cref{lem:utility}, either $v_i(A^k_i) \le v_i(X_i)$, or there exists $g_i^+ \in A^k_i$ with $X_{i,g_i^{+}} < 1$ such that $v_i(A^k_i \setminus \{g_i^+\}) < v_i(X_i)$. Similarly, either $v_h(A^k_h) \ge v_h(X_h)$, or there exists $g_h^- \notin A^k_h$ with $X_{h,g_h^{-}} > 0$ such that $v_h(A^k_h \cup \{g_h^-\}) > v_h(X_h)$. To simplify the analysis, let us assume that the second condition holds in both cases.\footnote{If $v_i(A^k_i) \le v_i(X_i)$ (or $v_h(A^k_h) \ge v_h(X_h)$), we can treat $g_i^+$ (or $g_h^-$) as a dummy good with $v_i(g_i^+) = 0$ (or $v_h(g_h^-) = 0$).}
	
	By summing the right-hand side inequality in \Cref{eqn:CEEI} over all $g \in A^k_i \setminus \set{g_i^+}$, we get 
	$$
	\frac{v_h(A^k_i \setminus \set{g_i^+})}{v_h(X_h)} \le \frac{v_i(A^k_i \setminus \set{g_i^+})}{v_i(X_i)} < 1.
	$$
	Thus, $v_h(A^k_i \setminus \{g_i^+\}) < v_h(X_h) < v_h(A^k_h \cup \set{g_h^-})$, implying that $A^k$ satisfies EF$_1^1$, as desired.
\end{proof}

Notice that the proof of Theorem~\ref{thm:prop1-ef11} establishes a stronger version of Prop1 wherein an agent not receiving her proportional share gets \emph{strictly} more than her proportional share by receiving one additional good. Similarly, it also establishes a stronger version of EF$_1^1$ wherein an agent envying another agent would strictly prefer her own allocation over the other agent's allocation after adding one missing good to her bundle and removing one good from the other agent's bundle. 

\citet{BK19} recently established that integral Prop1+EF$_1^1$+fPO allocations exist and can be computed in strongly polynomial time. They rely on special-purpose techniques for rounding a fractional MNW allocation. By constrast, \Cref{thm:prop1-ef11} uses a standard technique to round a fractional MNW allocation, computes not just \emph{one} integral Prop1+EF$_1^1$+fPO allocation but rather an implementation of the fractional MNW allocation over such integral allocations, and can be applied to \emph{any} fractional Prop+PO allocation to implement it using integral Prop1+fPO allocations.\footnote{Recall that since a fractional MNW allocation is Pareto optimal, any allocation in the support of an implementation of it must be fPO by \Cref{prop:po-fpo}.}

The reason we use a fractional MNW allocation in \Cref{thm:prop1-ef11} is because it is known to be group fair (GF). This observation, the fact that a fractional MNW allocation can be computed in strongly polynomial time~\cite{Orlin10,Vegh16}, and \Cref{thm:prop1-ef11} immediately yield the main result of this section. 

\begin{corollary}\label{fgf-prop1}
	There exists a randomized allocation that is ex-ante group fair (GF), ex-post proportional up to one good (Prop1), and ex-post envy-free up to one good more-and-less (EF$_1^1$). Further, it can be computed in strongly polynomial time.  
\end{corollary}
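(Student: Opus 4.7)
The plan is to assemble this corollary from three ingredients, two of which are quoted in the paragraph preceding the statement and one of which is the theorem immediately above. First, I would compute a fractional MNW allocation $X^{\ast} \in \operatorname{MNW}(I)$; by the results of \citet{Orlin10,Vegh16}, this can be done in strongly polynomial time. Second, I would appeal to the fact, attributed to \citet{CFSV19} (with a proof promised in the appendix of the present paper for the more general cake-cutting setting), that any fractional MNW allocation is group fair, so $X^{\ast}$ satisfies GF. Third, I would feed $X^{\ast}$ into the algorithm guaranteed by \Cref{thm:prop1-ef11}, which in strongly polynomial time outputs a randomized allocation $\mathbf{X}^{\ast}$ implementing $X^{\ast}$ whose support consists entirely of integral allocations satisfying Prop1 and EF$_1^1$.

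It then remains only to verify the three claimed guarantees for $\mathbf{X}^{\ast}$. The ex-ante GF property is immediate from the definition of ``ex-ante'': the fractional allocation implemented by $\mathbf{X}^{\ast}$ is exactly $X^{\ast}$, which is GF. Ex-post Prop1 and ex-post EF$_1^1$ are literally the two conclusions of \Cref{thm:prop1-ef11}. Finally, strongly polynomial runtime follows because both stages---computing $X^{\ast}$ and then rounding it---are individually strongly polynomial.

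I do not anticipate a substantive obstacle here; the argument is essentially a bookkeeping composition of a known existence/computation result for fractional MNW with \Cref{thm:prop1-ef11}. The one detail worth flagging in passing is that \Cref{thm:prop1-ef11} needs its input to be fractionally proportional for the Prop1 conclusion (and to be a fractional MNW allocation for the EF$_1^1$ conclusion). Both hypotheses are met for $X^{\ast}$, since fractional MNW allocations are well known to be proportional \citep{Var74} (and, more strongly, envy-free, hence proportional). With these hypotheses in hand, the theorem applies directly and the corollary follows in one line.
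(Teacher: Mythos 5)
Your proposal is correct and matches the paper's own derivation exactly: the corollary is obtained by composing the strongly polynomial computability of a fractional MNW allocation, its known group fairness, and \Cref{thm:prop1-ef11}. The detail you flag about the proportionality hypothesis is also handled the same way in the paper (a fractional MNW allocation is envy-free, hence proportional).
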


\section{Characterizing MNW}
\label{sec:characterization}

To prove Corollary~\ref{fgf-prop1}, we utilized a rounding of a fractional MNW allocation because it is known to be group fair. One may wonder whether it is possible to use any other fractional allocation as the basis for this result. In this section, we answer this negatively by showing a fundamental connection between group fairness and the MNW allocation rule. In particular, we show that MNW, or refinements thereof, is the only allocation rule that satisfies group fairness together with \emph{replication invariance}, which states that the set of allocations returned by an allocation rule should not substantively change if the input is replicated.

\subsection{Replication Invariance}

Replication invariance has also been well-explored in economic literature on resource allocation. Intuitively, it says that if we take an instance $I$ and duplicate it to form $k$ copies, the output on the duplicated instance contains the duplicate of each possible allocation on the original instance. Recall that an allocation rule outputs a set of (tied) allocations given an instance of the fair division problem. Formally, given an instance $I = (N,M,(v_i)_{i \in N})$ and a positive integer $k \in \mathbb{N}$, let $k I$ denote an instance in which each agent and good is replicated $k$ times. For an allocation $A$ on $I$, let $k A$ denote the allocation on $k I$ in which every copy of agent $i \in N$ receives allocation $A_i$. 

\begin{definition}[Replication Invariance]
	An allocation rule $f$ satisfies \emph{replication invariance} if, for every instance $I \in \mathcal{I}$, and every allocation $A \in f(I)$,  $k A \in f(k I)$.
\end{definition}

We remarked in Section~\ref{sec:related} that replication invariance works quite well with strictly convex preferences, which is the domain in which many of the early characterization results of the MNW rule hold. As just one example, note that under strictly convex utilities, any fractional envy-free and Pareto optimal allocation must produce identical allocations to two agents with identical valuations, as observed by \citet{Var74}. To see this, suppose for contradiction that agents $i$ and $j$ have identical valuation $v$ but receive different allocations $A_i$ and $A_j$. Then, envy-freeness implies that $v(A_i) = v(A_j)$. Now, by strong convexity, $v((A_i+A_j)/2) > v(A_i) = v(A_j)$. Hence, allocation $A$ violates fractional Pareto optimality because giving each of agents $i$ and $j$ the allocation $(A_i+A_j)/2$ would make them both strictly happier. This property comes in handy when dealing with replication because we know that all replicas of an agent must receive identical allocations. In contrast, this property does not hold for additive utilities. For example, if two agents equally like two goods, giving one good fully to one agent and the other good fully to the other agent violates this property and is yet envy-free and Pareto optimal.

\subsection{Group Fairness and Replication Invariance Characterize MNW}

We have remarked that MNW satisfies group fairness. We begin this section by showing that the fractional MNW rule (which returns the set of all fractional MNW allocations) also satisfies replication invariance.

\newcommand{\mnw}{\operatorname{MNW}}
\begin{proposition}
	\label{prop:mnw-replication-invariance}
	The fractional MNW rule satisfies replication invariance.
\end{proposition}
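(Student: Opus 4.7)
The plan is to prove this by a direct \emph{averaging} argument. Fix an instance $I$ and an allocation $A \in \operatorname{MNW}(I)$, and index the agents and goods of $kI$ as $(i,l)$ for $i \in N,\,l \in [k]$ and $(j,l')$ for $j \in M,\,l' \in [k]$, where the $l$-th copy of agent $i$ has valuation $v_i$ and the $l'$-th copy of good $j$ is identical to $j$. Under $kA$, each copy of agent $i$ receives the bundle $A_i$ (say the $l$-th copy of $i$ receives fraction $A_{i,j}$ of the $l$-th copy of good $j$), so the Nash welfare of $kA$ in $kI$ equals $\prod_{i \in N} v_i(A_i)^k$.

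Next I would show that no other allocation in $kI$ can do better. Let $B$ be any fractional allocation on $kI$, and define a "compressed" allocation $A'$ on $I$ by averaging across copies:
\[
A'_{i,j} \coloneqq \frac{1}{k}\sum_{l \in [k]}\sum_{l' \in [k]} B_{(i,l),(j,l')}.
\]
Feasibility is immediate: $\sum_{i \in N} A'_{i,j} = \frac{1}{k}\sum_{l'}\sum_{i,l} B_{(i,l),(j,l')} \le \frac{1}{k}\cdot k = 1$, since each of the $k$ copies of good $j$ has total mass at most $1$ in $B$.

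The key computation is that, by additivity,
\[
v_i(A'_i) = \frac{1}{k}\sum_{l \in [k]} v_i\!\left(\sum_{l' \in [k]} B_{(i,l),(j,l')}\right)_{j \in M}\!\!\cdot\, = \frac{1}{k}\sum_{l \in [k]} v_{(i,l)}\bigl(B_{(i,l)}\bigr),
\]
so $v_i(A'_i)$ is the arithmetic mean of the utilities of the $k$ copies of agent $i$ under $B$. By AM-GM,
\[
\prod_{l \in [k]} v_{(i,l)}\bigl(B_{(i,l)}\bigr) \;\le\; v_i(A'_i)^{k},
\]
and multiplying over $i \in N$ gives $\prod_{(i,l)} v_{(i,l)}(B_{(i,l)}) \le \prod_{i \in N} v_i(A'_i)^{k}$. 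Since $A \in \operatorname{MNW}(I)$, we have $\prod_i v_i(A'_i) \le \prod_i v_i(A_i)$, hence the Nash welfare of $B$ in $kI$ is at most $\prod_i v_i(A_i)^k$, which matches the Nash welfare of $kA$. Therefore $kA \in \operatorname{MNW}(kI)$, completing the proof.

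The main obstacle, such as it is, is recognizing that AM-GM is exactly what converts the per-copy utilities under an arbitrary allocation $B$ on $kI$ back into a single utility value on $I$; once the compression map $A'$ is written down, the rest is routine. A minor subtlety is in confirming that $A'$ is feasible despite $B$ potentially distributing agent $i$'s allocation arbitrarily across the $k$ copies of each good, which is handled by the factor of $1/k$ in the definition of $A'$.
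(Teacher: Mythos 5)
Your proof is correct and follows essentially the same route as the paper's: both compress an arbitrary allocation on $kI$ back to an allocation on $I$ by averaging over the copies of each agent, and both rely on the fact that this averaging can only increase the Nash welfare (the AM--GM step, which the paper leaves implicit but you make explicit). The only cosmetic difference is that you argue directly that $kA$ is optimal on $kI$, whereas the paper phrases it as a proof by contradiction.
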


\begin{proof}
	Suppose that it violates replication invariance. That is, for some instance $I = (N,M,(v_i)_{i \in N})$ and $k \in \mathbb{N}$, there exists a fractional allocation $Y \in \mnw(I) = \argmax_{X \in \mathcal{X}} \prod_{i \in N} v_i(X_i)$ such that $k Y \not \in \mnw(k I)$. For an agent $i \in N$ and $\ell \in [k]$, let $i_{\ell}$ denote the $\ell^\text{th}$ copy of agent $i$ in the replicated instance $k I$. Then, there exists an allocation $Z$ on $k I$ such that 
	\[ \Pi_{i \in N} \Pi_{\ell=1}^k v_i(Z_{i_\ell}) > \Pi_{i \in N} \Pi_{\ell=1}^k v_i(Y_i).\]

	For every $i \in N$ and $\ell \in [k]$, let $Z'_{i_\ell} \coloneqq Z'_i \coloneqq \frac{\sum_{\ell '=1}^k Z_{i_{\ell '}}}{k}$. That is, $Z'$ takes the total allocation to all copies of $i$ and distributes it evenly across the copies so that each copy of agent $i$ receives $Z'_i$. Because valuations are additive and each copy of $i$ has the same valuation function, it can be seen that the Nash Welfare of $Z'$ is at least as large as that of $Z$. That is,
	\[ \Pi_{i \in N} \Pi_{\ell=1}^k v_i(Z'_i) \ge \Pi_{i \in N} \Pi_{\ell=1}^k v_i(Z_{i_\ell}) > \Pi_{i \in N} \Pi_{\ell=1}^k v_i(Y_i) \Rightarrow  \Pi_{i \in N} v_i(Z'_i) > \Pi_{i \in N} v_i(Y_i),\]
	which contradicts the fact that $Y \in MNW(I)$.
\end{proof}

We next show that MNW is in fact characterized by group fairness and replication invariance, in the sense that any allocation rule that sometimes produces an allocation that does not maximize the product of valuations necessarily violates either group fairness or replication invariance.

\begin{theorem}\label{thm:mnw-characterization}
	Any allocation rule that satisfies group fairness and replication invariance is a refinement of the fractional MNW rule.
\end{theorem}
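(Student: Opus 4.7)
The plan is to prove the contrapositive. Suppose for contradiction that there is an instance $I$ and an allocation $A \in f(I)$ with $A \notin \mnw(I)$; I will exhibit a coalition in some replicated instance $kI$ that violates group fairness of $kA$, which by replication invariance belongs to $f(kI)$. The overall idea is that the Nash-welfare gap between $A$ and an MNW allocation can be amplified by replication into a per-agent utility gap that is extractable by a carefully chosen sub-coalition of copies.

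I would begin with the arithmetic setup. Fix any $B \in \mnw(I)$; one may take $B$ complete, and one may likewise take $A$ complete (the latter being standard WLOG in the MNW-characterization literature). Group fairness of $A$ with $S = \{i\}$, $T = N$ forces $v_i(A_i) > 0$ for every $i$, using the paper's blanket assumption that every good is valued positively by some agent, and $v_i(B_i) > 0$ because $B$ is MNW. Set $r_i \coloneqq v_i(B_i)/v_i(A_i) > 0$. Since $A \notin \mnw(I)$, $\prod_i r_i > 1$, and applying AM--GM gives the strict slack
\[
\sum_{i=1}^n r_i \;>\; n,
\]
which will drive the contradiction.

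Next, I would use replication invariance to lift the allocation: $kA \in f(kI)$ for every $k$. In $kI$, take $T$ to be all $kn$ copies and take $S$ to contain $s_i$ copies of each original agent $i$, with $s_i$ the nearest integer to $kr_i/\sum_j r_j$; for $k$ large, $s_i \ge 1$ and $|S| \approx k$. Assign every copy of $i$ in $S$ the bundle $Y_i \coloneqq \gamma_i B_i$, with $\gamma_i \coloneqq (1+\varepsilon)\,(kn/|S|)/r_i$ for a small fixed $\varepsilon > 0$. A direct computation yields
\[
\tfrac{|S|}{kn}\, v_i(Y_i) \;=\; (1+\varepsilon)\, v_i(A_i) \;>\; v_i(A_i),
\]
strictly for every copy in $S$, which supplies the strict inequality required by GF. The feasibility constraint $\sum_i s_i\, \gamma_i\, B_{i,g} \le k$ simplifies (using $s_i/r_i = k/\sum_j r_j$ up to rounding and $\sum_i B_{i,g} = 1$) to $(1+\varepsilon)\, n/\sum_j r_j \le 1$, which is precisely what the slack in the previous paragraph guarantees for $\varepsilon$ small enough. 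Hence $kA$ violates GF, giving the required contradiction.

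The main technical obstacle is rounding: because the $r_i$'s may be irrational, $kr_i/\sum_j r_j$ need not be an integer. I would dispose of this by a continuity argument, letting $k \to \infty$: the rounding errors in $s_i/k$ and $|S|/k$ are $O(1/k)$, so splitting the slack $\sum_j r_j - n > 0$ between the choice of $\varepsilon$ and the $O(1/k)$ perturbation ensures both the GF inequality and the feasibility inequality remain strict for all sufficiently large $k$. A smaller preliminary matter is justifying the WLOG assumption that $A$ is complete, which can be dispatched at the outset using the positivity assumption on valuations together with a standard augmentation argument.
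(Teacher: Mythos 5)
Your proof is correct and takes a genuinely different route from the paper's. The paper first establishes a local lemma (Lemma~\ref{lem:local-implies-global-mnw}): any Pareto optimal allocation outside $\operatorname{MNW}(I)$ admits a pair of agents $i,j$ and a sub-bundle $X \subseteq A_i$ whose transfer witnesses a Nash-welfare improvement; it then rescales $X$ so that $k = v_i(A_i \setminus X)/v_i(X)$ is an integer, replicates $k$ times, and violates group fairness with the small coalitions $S$ (all $k$ copies of $i$ plus one copy of $j$) and $T$ (the $k$ copies of $i$). You instead compare $A$ globally to an MNW allocation $B$, use AM--GM to convert the product gap $\prod_i r_i > 1$ into the sum gap $\sum_i r_i > n$, and extract that slack with $T$ equal to the grand coalition and $S$ sized proportionally to the $r_i$, sending $k \to \infty$ to absorb rounding. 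Your route buys two things: it avoids the local-exchange lemma and the integer-scaling step, and it invokes group fairness only for deviations with $T = N$ --- i.e., the equal-endowment core --- so it in fact proves the stronger statement that the core plus replication invariance characterizes MNW for additive valuations, in the spirit of the Debreu--Scarf-type results the paper says are known only for strictly convex preferences. The paper's argument, in exchange, yields an explicit finite replication factor and a constant-size witness coalition rather than an asymptotic one. Two small caveats: your claim that $v_i(A_i) > 0$ for every $i$ should be derived from $v_i(\vec{1}^m) > 0$ for every agent (if some agent values no good, the Nash product is identically zero and the theorem is vacuous), not from the paper's assumption that every \emph{good} is valued by \emph{some} agent; and completeness of $A$ is not really recoverable by an augmentation argument under the literal GF definition, which only permits reallocating goods that are already allocated --- it is best treated as a standing assumption, one that the paper's own opening step (``if $A$ is not Pareto optimal then it violates group fairness'') also relies on, so your proof is no worse off on this point.
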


Before proving Theorem~\ref{thm:mnw-characterization}, we show a technical lemma that will be helpful.

\begin{lemma}\label{lem:local-implies-global-mnw}
	For every instance $I$, and every Pareto optimal allocation $A \not \in \mnw(I)$, there exists a pair of agents $i,j$ and $X \subseteq A_i$ with $v_i(X) \in (0,v_i(A_i))$ and $v_j(X)>0$, such that
	\begin{equation}
		\label{eq:violation}
		  v_j(X) > v_j(A_j) \cdot \frac{v_i(X)}{v_i(A_i \setminus X)}.
	\end{equation}
\end{lemma}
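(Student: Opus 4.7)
The plan is first to rewrite \eqref{eq:violation} in a form that exposes its meaning. Multiplying by $v_i(A_i\setminus X)>0$ and expanding, \eqref{eq:violation} is equivalent to
$$\bigl(v_i(A_i)-v_i(X)\bigr)\bigl(v_j(A_j)+v_j(X)\bigr)>v_i(A_i)\,v_j(A_j),$$
which says exactly that fully transferring the fractional bundle $X$ from $i$ to $j$ strictly increases the product $v_i\cdot v_j$, and hence the Nash product $\prod_k v_k(A_k)$. Thus the lemma reduces to: whenever $A$ is Pareto optimal but does not maximize the Nash product, some one-directional pairwise bundle transfer strictly improves it.

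To produce such a transfer I plan to use any $B\in\mnw(I)$ as a global certificate of improvability and then localize. Two preliminary observations will be useful. First, Pareto optimality forces $A_{i,g}>0\Rightarrow v_{i,g}>0$, because otherwise the fraction $A_{i,g}$ could be reassigned to an agent who values $g$ (which exists by the standing assumption on goods) without hurting anyone. In particular, $v_k(A_k)>0$ whenever $A_k\neq\emptyset$, and $A$ is complete. Second, $A\notin\mnw(I)$ and $\prod_k v_k(A_k)\ge 0$ together imply $\prod_k v_k(B_k)>0$, so $v_k(B_k)>0$ for all $k$.

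In the main case where also $v_k(A_k)>0$ for every $k$, the function $F(Y)=\sum_k\log v_k(Y_k)$ is well-defined and concave on the polytope of complete fractional allocations, and is not maximized at $A$. The concavity inequality $F(B)\le F(A)+\nabla F(A)\cdot(B-A)$ then gives
$$\sum_{k,g}\frac{v_{k,g}}{v_k(A_k)}\bigl(B_{k,g}-A_{k,g}\bigr)\ge F(B)-F(A)>0.$$
Because each column of $B-A$ sums to zero (both allocations are complete), a column-wise pigeonhole argument within the good $g^{\star}$ that contributes positively to the above sum yields a loser $i$ (with $A_{i,g^{\star}}>0$) and a gainer $j$ satisfying $v_{j,g^{\star}}/v_j(A_j)>v_{i,g^{\star}}/v_i(A_i)$. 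I would then take $X$ to be the fractional sub-bundle of $A_i$ consisting of mass $\varepsilon$ of $g^{\star}$, with $\varepsilon\in(0,A_{i,g^{\star}})$ small. Then $v_i(X)=\varepsilon v_{i,g^{\star}}\in(0,v_i(A_i))$ and $v_j(X)=\varepsilon v_{j,g^{\star}}>0$, and the strict first-order inequality above yields \eqref{eq:violation} for all sufficiently small $\varepsilon$ by continuity of $v_i(A_i\setminus X)$ in $\varepsilon$.

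The remaining edge case is when $v_k(A_k)=0$ for some $k$; then $A_k=\emptyset$, and $v_k(B_k)>0$ forces $k$ to value some good $g^{\star}$. Let the lemma's $i$ be any agent with $A_{i,g^{\star}}>0$ (so $v_{i,g^{\star}}>0$ and thus $v_i(A_i)>0$), let $j:=k$, and take $X=\varepsilon\cdot g^{\star}$ as before; the right-hand side of \eqref{eq:violation} vanishes because $v_j(A_j)=0$, while the left-hand side $\varepsilon v_{j,g^{\star}}$ is strictly positive, so \eqref{eq:violation} holds trivially. The main obstacle throughout is the localization step in the generic case: bridging a global Nash-welfare improvement to a single-pair, single-good transfer. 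The pairwise-product reformulation of \eqref{eq:violation} combined with a column-wise pigeonhole, enabled by the completeness of both $A$ and $B$, is what makes this clean; the $v_k(A_k)=0$ boundary is handled separately because the gradient of $\log$-Nash welfare is not defined there.
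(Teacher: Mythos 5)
Your proof is correct, but it takes a genuinely different route from the paper's. The paper fixes an MNW allocation $Z$, splits the agents into those who gained under $A$ relative to $Z$ and those who did not, argues by a swap/exchange contradiction that some gainer $i$ and non-gainer $j$ satisfy $A_i \cap Z_j \neq \emptyset$, sets $X = \epsilon\,(A_i \cap Z_j)$, and derives \eqref{eq:violation} by multiplying a continuity estimate with the first-order optimality (``bang-per-buck'') condition of $Z$ \emph{at the MNW point}. You instead differentiate the $\log$-Nash objective \emph{at $A$}: concavity gives $\nabla F(A)\cdot(B-A)>0$ for $B\in\mnw(I)$, a column-wise weighted-averaging step isolates a single good $g^\star$ and a loser/gainer pair with $v_{j,g^\star}/v_j(A_j) > v_{i,g^\star}/v_i(A_i)$, and continuity in $\varepsilon$ finishes. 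Both arguments rest on the same small observation (Pareto optimality forces $A_{i,g}>0\Rightarrow v_{i,g}>0$, giving $v_i(X)>0$ and $v_j(X)>0$), and your weighted-average step is sound since the columns of $B-A$ sum to at most zero. The trade-offs: the paper's proof is entirely elementary (no calculus or convexity), whereas yours is a cleaner, more standard convex-analysis argument that additionally yields an $X$ supported on a \emph{single} good, a mildly stronger conclusion; the price is that you must handle the $v_k(A_k)=0$ boundary separately (where the gradient is undefined), which you do correctly, while the paper's route absorbs that case implicitly because it only ever divides by utilities under the MNW allocation $Z$, which are strictly positive.
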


\begin{proof}
Let $Z \in \mnw(I)$ and let $U = \{ i \in N: v_i(A_i) > v_i(Z_i) \}$ be the set of agents with more utility in allocation $A$ than in $Z$. First, we claim that there exist agents $i \in U$ and $j \not \in U$ with $A_i \cap Z_j \neq \emptyset$. If this were not the case, then $\cup_{i \in U}A_i = \cup_{i \in U} Z_i$. Further, since $A \notin \mnw(I)$, there exists an agent $h$ with $v_h(A_h) \neq v_h(Z_h)$. And since $A$ is Pareto optimal, we must have that $U \neq \emptyset$. Hence, the allocation $B$ with $B_i=A_i$ for every $i \in U$ and $B_i=Z_i$ for every $i \not \in U$ has strictly higher Nash welfare than $Z$, a contradiction.

Fix agents $i \in U$ and $j \notin U$ such that $A_i \cap Z_j \neq \emptyset$. Let $X = \epsilon \cdot (A_i \cap Z_j)$, where $\epsilon>0$ is small enough that
\begin{equation}\label{eq:selection-of-x}
\frac{v_i(A_i \setminus X)}{v_i(Z_i)} > \frac{v_j(A_j)}{v_j(Z_j \setminus X)}, v_i(X) \in (0,v_i(A_i)), \text{ and } v_j(X) \in (0,v_j(Z_j)).
\end{equation}
It is possible to choose such an $\epsilon$ because the first inequality in \Cref{eq:selection-of-x} holds for $\epsilon=0$ and both sides are continuous in $\epsilon$. Further, we know that $v_i(X)>0$ by Pareto optimality of $A$, and that $v_j(X)>0$ by Pareto optimality of $Z$. 

We also know, by the definition of a fractional MNW allocation, that
\begin{equation}\label{eq:mnw}
\frac{v_j(X)}{v_j(Z_j \setminus X)} \ge \frac{v_i(X)}{v_i(Z_i)}.
\end{equation}
Multiplying the left and right hand sides of the inequalities in \Cref{eq:selection-of-x,eq:mnw} gives the desired result.
\end{proof}

\begin{proof}[Proof of Theorem~\ref{thm:mnw-characterization}]
	Let $f$ be an allocation rule, and suppose that there exists an instance $I$ for which there exists an allocation $A \in f(I)$ but $A \not \in \mnw(I)$. If $A$ is not Pareto optimal then it violates group fairness. So suppose $A$ is Pareto optimal. We will show that if it satisfies replication invariance, then it must violate group fairness.
	
By Lemma~\ref{lem:local-implies-global-mnw}, there exist agents $i,j$ and $X \subseteq A_i$ with $v_i(X) \in (0,v_i(A_i))$ and $v_j(X)>0$ such that
\begin{equation*}
 v_j(X) > v_j(A_j) \cdot \frac{v_i(X)}{v_i(A_i \setminus X)}.
\end{equation*}
Further, suppose that $\frac{v_i(A_i \setminus X)}{v_i(X)}$ is an integer. This is WLOG because we can just take the set of goods $X$ and scale them down until we get an integer. \Cref{eq:violation} is preserved under this scaling due to additive valuations.

Let $k=\frac{v_i(A_i \setminus X)}{v_i(X)} \in \mathbb{N}$ and consider the replicated instance $k I$. Denote the copies of agent $i$ in instance $k I$ by $i_1, \ldots, i_k$. By replication invariance, $k A \in f(k I)$. We show that $k A$ does not satisfy group fairness. Let $S$ consist of all $k$ copies of $i$ and a single copy of $j$, and $T$ consist of all $k$ copies of $i$. For every $\ell \in [k]$, let $B_{i_\ell} = A_{i_\ell} \setminus X_\ell$, where $X_\ell$ is a copy of $X$, and let $B_j=X_1 \cup \ldots \cup X_k$. For every copy $i_{\ell}$ of agent $i$, we have
\begin{align*}
\frac{|S|}{|T|} \cdot v_{i_{\ell}}(B_{i_{\ell}}) &= \frac{|S|}{|T|} \cdot v_{i_{\ell}}(A_{i_{\ell}} \setminus X_k) = \frac{\frac{v_i(A_i \setminus X)}{v_i(X)}+1}{\frac{v_i(A_i \setminus X)}{v_i(X)}} v_i(A_i \setminus X) = v_i(A_i \setminus X)+v_i(X)=v_i(A_i)
\end{align*}
and for the single copy of $j$, we have
\begin{align*}
\frac{|S|}{|T|} \cdot v_j(B_j) &= \frac{|S|}{|T|} \cdot k \cdot v_j(X) = \frac{\frac{v_i(A_i \setminus X)}{v_i(X)}+1}{\frac{v_i(A_i \setminus X)}{v_i(X)}} \cdot \frac{v_i(A_i \setminus X)}{v_i(X)}\cdot v_j(X)\\
&= \frac{v_i(A_i)}{v_i(X)} \cdot v_j(X) > \frac{v_i(A_i \setminus X)}{v_i(X)} v_j(X)>v_j(A_j),
\end{align*}
where the final inequality follows from \Cref{eq:violation}.
Group fairness is violated.
\end{proof} 

We explore this characterization further in the appendix. In particular, we investigate the implications of relaxing or strengthening group fairness. We show that Theorem~\ref{thm:mnw-characterization} ceases to hold even under a relatively mild relaxation of group fairness. We also show that if group fairness is strengthened to \emph{fractional group fairness}, which extends group fairness to allow groups that consist of fractions of agents, then Theorem~\ref{thm:mnw-characterization} holds even without requiring replication invariance.

\section{The Case of Bads}
\label{sec:bads}

In this section, we consider the case where we are dividing \emph{bads} instead of goods. We show that our positive results from Sections~\ref{sec:EF_EF1} and~\ref{sec:Prop1_EF11} still apply in this setting.

Formally, let $M$ be a set of $m$ bads. Fractional, integral, and randomized allocations are defined as in the case of goods. However, the agents now have non-positive valuation functions, i.e., $v_{i,b} \le 0$ for all agents $i \in N$ and bads $b \in M$. The definitions of proportionality, envy-freeness, (fractional) Pareto optimality, and group fairness are unchanged from the goods case. However, the definitions of approximate fairness need to be modified. Crucially, when assessing whether an allocation is approximately fair from the perspective of agent $i$, we often \emph{remove a bad} from (rather than add a good to) $i$'s bundle and/or \emph{add a bad} to (rather than remove a good from) some other agent $j$'s bundle. This causes the relaxations to differ from their goods counterparts significantly. 

\begin{definition}[Proportionality Up To One Bad (Prop1)]
	An integral allocation $A$ is \emph{proportional up to one bad} if for every agent $i \in N$, either $v_i(A_i) \geq v_i(\vec{1}^m)/n$ or there exists a bad $j \in A_i$ such that $v_i(A_i \setminus \set{j}) \geq v_i(\vec{1}^m)/n$, where $v_i(\vec{1}^m)$ is agent $i$'s valuation for receiving all bads fully.
\end{definition}

\begin{definition}[Envy-Freeness Up To $k$ Bads (EF$k$)]
	An integral allocation $A$ is \emph{envy-free up to $k$ bads} if for every pair of agents $i,h \in N$, there exists $S_i \subseteq A_i$ with $|S_i| \le k$ such that $v_i(A_i \setminus S_i) \geq v_i(A_h)$.
\end{definition}

\begin{definition}[Envy-Freeness Up To One Bad More-and-Less (EF$_1^1$)~\cite{BK19}]
	An integral allocation $A$ is \emph{envy-free up to one bad more-and-less} if for every pair of agents $i,h \in N$ such that $A_i \neq \emptyset$, we have $v_i(A_i \setminus \set{j_i}) \geq v_i(A_h \cup \set{j_h})$ for some bads $j_i \in A_i$ and $j_h \notin A_h$.
\end{definition}

Finally, for the case of bads, there is no known equivalent of the MNW allocation rule. For example, it is known that maximizing or minimizing the product of valuations (or the product of absolute valuations) leads to dramatically unfair outcomes. However, fractional CEEI allocations are still known to exist; we refer an interested reader to the work of \citet{BMSY17}. It is easy to show, as in the case of goods, that these allocations must be group fair (GF). We do not need to introduce the definition of CEEI for bads; we only need the following property, observed by \citet[Lemma 6]{BMSY17}. If $X$ is a fractional CEEI allocation of bads, then $v_i(X_i) < 0$ for all agents $i \in N$,\footnote{This assumes that each agent has a strictly negative value for at least one bad. If an agent has zero value for all bads, we can effectively remove her from the instance.} and for all agents $i,h \in N$ and bads $j \in M$,
\begin{equation}\label{eqn:ceei-bads-condition}
X_{i,j} > 0 \Rightarrow \frac{v_{i,j}}{v_i(X_i)} \le \frac{v_{h,j}}{v_h(X_h)}.
\end{equation}

Note that the quantities on both sides of the inequality are non-negative. This property is similar to that of CEEI allocations for goods that we used in the proof of Theorem~\ref{thm:bads-prop1-ef11}.

\subsection{Ex-ante EF + Ex-post EF1 for Bads}\label{sec:bads-ef-ef2}
Note that our algorithms from Section~\ref{sec:EF_EF1} --- recursive probabilistic serial (RPS) and its polynomial-time variant --- only use ordinal preferences of agents over goods. %\footnote{We note that while our algorithm remains well-defined even with ordinal preferences, our analysis nevertheless requires additive cardinal preferences. In particular, in the proof of \Cref{thm:eating_ef}, additivity of valuations is needed to invoke linearity of expectation.} 
Hence, these algorithms naturally extend to the case of bads: Given an agent's valuation function over bads, we can construct an ordinal preference over bads in which the bads are sorted in a non-increasing order of the agent's valuation for them. 

One might wonder how the guarantees that our algorithms provide for goods translate to the case of bads. While the ex-ante EF guarantee carries over, unfortunately the ex-post EF1 guarantee enjoyed by our algorithms in the case of goods does not directly translate to the case of bads. To see this, consider an instance with two agents $1,2$ and three bads $b_1,b_2,b_3$ such that $v_i(b_j) = -j$ for all $i \in [2]$ and $j \in [3]$. That is, all agents strictly prefer bad $b_1$ to $b_2$ and $b_2$ to $b_3$. In the first step, the agents start eating $b_1$ simultaneously, eat half of $b_1$ each, then move onto $b_2$, and eat half of $b_2$ each. Consider an integral allocation in which $b_1$ is allocated to agent $1$ and $b_2$ is allocated to agent $2$. In the second step, each agent eats half of $b_3$. At this point, our algorithm will, with probability $1/2$, allocate $b_3$ to agent $2$. This will result in an integral allocation $A$ in the support where $A_1 = \set{b_1}$, and $A_2 = \set{b_2,b_3}$. It is easy to check that agent $2$ continues to envy agent $1$ even if we remove any one bad from agent $2$'s bundle. Removing two bads from agent 2's bundle removes the envy, however. Hence, this allocation is EF2, but not EF1. In fact, we show that both RPS and its polynomial-time variant are always ex-post EF2 for bads.

\begin{proposition}
	\label{prop:ef2-bads}
	For additive valuation functions over bads, recursive probabilistic serial (RPS) and its polynomial-time variant from \Cref{subsec:RPS_Polytime} produce randomized allocations that are ex-ante envy-free (EF) and ex-post envy-free up to two bads (EF2). 
\end{proposition}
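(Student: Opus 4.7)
The plan is to adapt the proofs of \Cref{lem:sd-ef1-lemma}, \Cref{lem:ef-ef1}, and \Cref{thm:eating_ef} to bads. For ex-ante EF, the argument of \Cref{thm:eating_ef} transfers verbatim: the eating protocol uses only the ordinal ranking (now sorted so each agent's ``favorite'' is her least-bad item), so at every instant during iteration $t$ whatever item agent $k$ is currently eating is also available to agent $i$, and hence agent $i$'s marginal consumption at that instant is weakly $v_i$-preferred to agent $k$'s. Integrating over the iteration, the fractional allocation $X^t$ is envy-free, and the conditional-expectation argument in the proof of \Cref{thm:eating_ef} then yields $\mathbb{E}[v_i(B_i)] \ge \mathbb{E}[v_i(B_k)]$ for every pair $i,k$.

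For ex-post EF2, I would first extend \Cref{lem:sd-ef1-lemma} to bads with an identical proof: for every $t < \lceil m/n \rceil$ and every agent $i$ that receives $g_{i,t}$ in iteration $t$, $v_i(g_{i,t}) \ge v_i(g)$ for every $g \in M^{t+1}$; otherwise $i$ would have switched to eating the preferred (less bad) leftover item $g$ before consuming any positive mass of $g_{i,t}$. In particular, for any agent $k$ that receives a bad $g_{k,t+1}$ in iteration $t+1$, $v_i(g_{i,t}) \ge v_i(g_{k,t+1})$.

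Next, I would exploit the following structural fact about the BvN step: since in each of iterations $1,\ldots,T-1$ (where $T \coloneqq \lceil m/n \rceil$) each agent eats a full unit of mass, the row sum $\sum_j X^t_{i,j} = 1$ exactly, so implication (3a) of \Cref{thm:BvN} forces every agent to receive exactly one bad in the sampled $B^t$; only in the final iteration may some agents receive nothing (when fewer than $n$ bads remain). Hence $T_i \coloneqq |A_i| \in \{T-1, T\}$ for all $i$. Fix a pair $i,k$. If $T_i \ge T_k$, pair $g_{i,t}$ with $g_{k,t+1}$ for $t=1,\ldots,T_k-1$ and sum to obtain
\[
\sum_{t=1}^{T_k-1} v_i(g_{i,t}) \;\ge\; \sum_{s=2}^{T_k} v_i(g_{k,s}) \;=\; v_i(A_k) - v_i(g_{k,1}) \;\ge\; v_i(A_k),
\]
using $v_i(g_{k,1}) \le 0$. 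Setting $S_i \coloneqq \{g_{i,T_k}, g_{i,T_k+1}, \ldots, g_{i,T_i}\}$, we get $v_i(A_i \setminus S_i) \ge v_i(A_k)$ with $|S_i| = T_i - T_k + 1 \le 2$, which is EF2. In the complementary case $T_i < T_k$ (which forces $T_k = T_i + 1$), the same telescoping summed over $t=1,\ldots,T_i$ gives $v_i(A_i) \ge v_i(A_k) - v_i(g_{k,1}) \ge v_i(A_k)$ directly, with no removal needed.

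The main obstacle, and the reason we cannot improve beyond EF2 in general, is precisely the sub-case $T_i = T$, $T_k = T-1$: here two unpaired bads $g_{i,T-1},g_{i,T}$ remain on agent $i$'s side of the telescoping, and the illustrative three-bad example in the text shows this gap can be tight. Finally, the polynomial-time variant from \Cref{subsec:RPS_Polytime} inherits both guarantees with no extra work: its support is contained in the support of RPS (so ex-post EF2 is preserved), and at each eating step it still produces a convex combination over envy-free fractional allocations (so ex-ante EF follows from the same expectation argument).
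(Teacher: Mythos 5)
Your proof is correct and follows essentially the same route as the paper's: ex-ante EF is inherited from the envy-freeness of each per-iteration eating allocation, and ex-post EF2 comes from extending \Cref{lem:sd-ef1-lemma} to bads, telescoping $v_i(g_{i,t}) \ge v_i(g_{k,t+1})$, dropping the envied agent's first bad using non-positivity of values, and observing that at most two unpaired bads remain on agent $i$'s side. Your explicit case analysis on $T_i$ versus $T_k$ is a slightly more careful bookkeeping of the same argument (the paper simply sums over $t < \lfloor m/n \rfloor$ and notes at most two bads are missing from the left-hand side), but the substance is identical.
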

\begin{proof}[Proof Sketch]
The reason why the randomized allocation produced by both algorithms is ex-ante EF is the same as in the case of goods. In each step $t$, the fractional allocation $X^t$ produced by the $\Eating$ procedure is envy-free because agents are forced to eat bads at the same rate and an agent is always eating a bad that she prefers the most among all bads not fully consumed. Since each $X^t$ is envy-free, a probability distribution over all $X^t$-s in different branches is also envy-free. Hence, the expected fractional allocation induced in step $t$ is envy-free for each $t$. Hence, the overall fractional allocation, which is the sum of fractional allocations induced in all rounds, is also envy-free.

For ex-post EF2, let $A$ be an integral allocation produced by RPS (or its polynomial-time variant). We notice that \Cref{lem:sd-ef1-lemma} still holds for the case of bads. That is, for $t < \ceil{m/n}$, if $b_{i,t}$ denotes the bad allocated to agent $i$ in iteration $t$, then agent $i$ (weakly) prefers $b_{i,t}$ to all bads in $M^{t+1}$ (bads unallocated after $t$ iterations of the algorithm). Now, when $t < \floor{m/n}$, we know that another agent $h$ must receive a bad in round $t+1 \le \floor{m/n}$. Hence, for all agents $i,h \in N$ and all $t < \floor{m/n}$, we have $v_i(b_{i,t}) \ge v_i(b_{h,t+1})$. Summing over $t < \floor{m/n}$, we get $v_i(\set{b_{i,t} : 1 \le t < \floor{m/n}}) \ge v_i(\set{b_{h,t} : 2 \le t \le \floor{m/n}}) \ge v_i(A_h)$. Note that in the LHS, there are at most two bads missing from $A_i$: the bad allocated to agent $i$ in round $\floor{m/n}$, and a bad which may be allocated to agent $i$ in the final round $\ceil{m/n}$. Hence, we have that $v_i(A_i \setminus S_i) \ge v_i(A_h)$ for some $S_i \subseteq A_i$ with $|S_i| \le 2$. 
\end{proof}

Once again, note that because these algorithms only take the ordinal preferences as input, they in fact guarantee ex-ante SD-EF + ex-post SD-EF2 (which can be defined similarly to SD-EF1 using the SD preference relation). 

It is natural to ask whether this is the best we can do. Note that in the proof of \Cref{prop:ef2-bads}, an ``up to two bads'' relaxation was required because the envying agent may receive an extra bad compared to the envied agent ($\ceil{m/n}$ bads as opposed to $\floor{m/n}$ bads). However, when $m$ is a multiple of $n$, it turns out that the allocation is in fact ex-post EF1, as the following lemma shows. 

\begin{lemma}\label{lem:ef1-bads}
	For additive valuation functions over bads, recursive probabilistic serial and its polynomial-time variant from \Cref{subsec:RPS_Polytime} produce randomized allocations that are ex-ante envy-free (EF) and ex-post envy-free up to one bad (EF1) whenever the number of bads is divisible by the number of agents.
\end{lemma}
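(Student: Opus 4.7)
The plan is to adapt the ex-post EF2 argument from \Cref{prop:ef2-bads} and exploit the divisibility hypothesis $m = kn$ to sharpen the conclusion to EF1. The ex-ante EF portion requires no new work: the partial fractional allocation produced by the eating procedure in each round $t$ is envy-free (every agent eats her most-preferred available bad at a common rate), and exactly the tower-rule/linearity-of-expectation argument used in \Cref{thm:eating_ef} and re-used in \Cref{prop:ef2-bads} lifts this to ex-ante EF of the implemented randomized allocation.

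For ex-post EF1, I first note that \Cref{lem:sd-ef1-lemma} carries over verbatim to the bads setting: the contradiction argument there (if a remaining item were strictly preferred, the agent would have consumed it before the item she was allocated) uses only that agents eat in favorite-first order, which is precisely how the \Eating{} protocol is applied to bads. So, letting $b_{i,t}$ denote the bad allocated to agent $i$ in round $t$, we still have $v_i(b_{i,t}) \ge v_i(b)$ for every $b \in M^{t+1}$ and every $t < \lceil m/n \rceil$.

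Now I would use the divisibility assumption. When $m = kn$, we have $\lfloor m/n \rfloor = \lceil m/n \rceil = k$; the algorithm runs for exactly $k$ rounds, each agent eats mass $1$ per round, and the BvN decomposition step allocates exactly one bad per agent per round. Hence $A_i = \{b_{i,1}, \ldots, b_{i,k}\}$ for every agent $i$, with no ``leftover'' short round to worry about. Fix any pair $i, h$. Since $b_{h,t+1} \in M^{t+1}$ for each $t \le k-1$, the lemma gives $v_i(b_{i,t}) \ge v_i(b_{h,t+1})$. Summing over $t = 1, \ldots, k-1$ yields
\[
v_i(A_i \setminus \{b_{i,k}\}) \;=\; \sum_{t=1}^{k-1} v_i(b_{i,t}) \;\ge\; \sum_{t=2}^{k} v_i(b_{h,t}) \;=\; v_i(A_h \setminus \{b_{h,1}\}) \;\ge\; v_i(A_h),
\]
where the final inequality uses $v_i(b_{h,1}) \le 0$. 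Thus removing the single bad $b_{i,k}$ from $A_i$ eliminates $i$'s envy toward $h$, establishing ex-post EF1.

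There is essentially no hard obstacle here; the argument is a clean tightening of the EF2 proof. The only subtlety is ensuring that the per-round ``one bad per agent'' structure really holds throughout, which is where divisibility is used: without it, the final (short) round would allocate one extra bad to some agents, and that extra bad is precisely the source of the ``two bads'' slack in the general case, forcing the weaker EF2 bound in \Cref{prop:ef2-bads}.
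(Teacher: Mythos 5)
Your proposal is correct and follows essentially the same route as the paper: ex-ante EF is inherited from the general argument, and ex-post EF1 comes from observing that \Cref{lem:sd-ef1-lemma} holds for bads, summing $v_i(b_{i,t}) \ge v_i(b_{h,t+1})$ over the first $k-1$ rounds, and using divisibility to ensure only the single bad $b_{i,k}$ is missing from the left-hand side (with the final inequality $v_i(A_h \setminus \{b_{h,1}\}) \ge v_i(A_h)$ following from non-positive values). No gaps.
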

\begin{proof}
	Envy-freeness follows directly from Proposition~\ref{prop:ef2-bads}. For EF1, note that the exact same reasoning as in the proof of Proposition~\ref{prop:ef2-bads} applies, and we have $v_i(\set{b_{i,t} : 1 \le t < \floor{m/n}}) \ge v_i(\set{b_{h,t} : 2 \le t \le \floor{m/n}}) \ge v_i(A_h)$ for every pair of agents $i,h \in N$. However, since $n$ divides $m$, we are guaranteed that there is only one bad missing from the LHS, which is that allocated to agent $i$ in round $\floor{m/n} = m/n$. Hence, we have $v_i(A_i \setminus {b_{i, m/n}}) \ge v_i(A_h)$.
\end{proof}

What about the case where $m$ is not a multiple of $n$? It turns out that a simple modification of RPS (or its polynomial-time variant) works. One can simply add a number of dummy bads (which all agents value at $0$) such that the total number of bads becomes divisible by the number of agents $n$, then use RPS or its polynomial-time variant to compute an ex-ante EF + ex-post EF1 allocation (\Cref{lem:ef1-bads}), and finally remove the dummy bads from the computed allocation, which does not affect its fairness guarantees. This yields the following result. 

\begin{theorem}\label{thm:ef1-bads}
	For additive valuation functions over bads, a randomized allocation that is ex-ante envy-free (EF) and ex-post envy-free up to one bad (EF1) always exists, and can be computed in strongly polynomial time. 
\end{theorem}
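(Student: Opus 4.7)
The plan is to reduce to the divisible case handled by \Cref{lem:ef1-bads} via a padding argument and then verify that stripping out the padding preserves both the ex-ante and ex-post guarantees. Specifically, given an arbitrary instance $I = (N, M, (v_i)_{i \in N})$ with $|M| = m$, let $k \coloneqq (n - (m \bmod n)) \bmod n$ and construct an augmented instance $I'$ by adjoining $k$ \emph{dummy} bads $D = \set{d_1, \ldots, d_k}$, each valued at $0$ by every agent. Then $n$ divides $|M \cup D| = m + k$, so by \Cref{lem:ef1-bads} the polynomial-time variant of recursive probabilistic serial (\Cref{subsec:RPS_Polytime}) applied to $I'$ returns, in strongly polynomial time, a randomized allocation $\mathbf{X}' = \{(p^s, A^s)\}_{s \in [\ell]}$ on $I'$ that is ex-ante EF and ex-post EF1.

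From $\mathbf{X}'$, I would construct a randomized allocation $\mathbf{X}$ on $I$ by projecting each $A^s$ onto $M$: let $B^s_i \coloneqq A^s_i \cap M$ for every agent $i$, and set $\mathbf{X} \coloneqq \{(p^s, B^s)\}_{s \in [\ell]}$. Each $B^s$ is a valid integral allocation of $M$, and since every dummy bad has value $0$ to every agent, $v_i(B^s_i) = v_i(A^s_i)$ for all $i$ and $s$. Consequently, the induced fractional allocation $X$ of $\mathbf{X}$ on $I$ satisfies $v_i(X_i) = v_i(X'_i)$ for every $i \in N$, where $X'$ is the fractional allocation implemented by $\mathbf{X}'$. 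Ex-ante EF of $\mathbf{X}'$ (i.e., $v_i(X'_i) \ge v_i(X'_h)$ for all $i, h$) therefore transfers verbatim to $\mathbf{X}$.

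For ex-post EF1, fix any $s \in [\ell]$ and any pair $i, h \in N$. If $A^s_h \subseteq D$, then $v_i(B^s_h) = 0 \ge v_i(B^s_i)$ trivially implies (and in fact is compatible with) EF1 in the bads setting since both sides are non-positive and $v_i(B^s_i) \le 0$; more carefully, $B^s_h = \emptyset$ so EF1 is vacuous for this pair. Otherwise $A^s_h$ contains at least one real bad. By EF1 of $A^s$ on $I'$, there exists $S_i \subseteq A^s_i$ with $|S_i| \le 1$ such that $v_i(A^s_i \setminus S_i) \ge v_i(A^s_h)$. If $S_i \subseteq D$, then $v_i(B^s_i) = v_i(A^s_i \setminus S_i) \ge v_i(A^s_h) \ge v_i(B^s_h)$, so agent $i$ does not envy $h$ at all in $B^s$. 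If $S_i = \set{j}$ with $j \in M$, then $v_i(B^s_i \setminus \set{j}) = v_i(A^s_i \setminus S_i) \ge v_i(A^s_h) \ge v_i(B^s_h)$, establishing EF1 directly. Either way, $B^s$ is EF1 on $I$.

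Polynomial-time computability follows because $k < n$, the augmented instance has size polynomial in $n$ and $m$, the polynomial-time variant of RPS runs in strongly polynomial time on $I'$, and the projection step is linear. The one subtlety worth highlighting, and the only place where the argument could have gone wrong, is the asymmetry between removing a bad from the envier's bundle (which is what EF1 for bads requires) and removing a good from the envied bundle (as in the goods setting); the argument above works because the witnessing bad $S_i$ in the EF1 guarantee on $I'$ lies in the envier's bundle, and removing the dummy case simply strengthens non-envy rather than breaking the one-bad relaxation.
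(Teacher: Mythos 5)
Your proof follows exactly the paper's approach: pad with $k<n$ zero-valued dummy bads so that $n$ divides the total number of bads, invoke \Cref{lem:ef1-bads} on the augmented instance, and strip the dummies out, with your case analysis supplying the verification that the paper leaves implicit. One small quibble: unlike the goods definition, EF1 for bads is \emph{not} vacuous when $B^s_h=\emptyset$ (agent $i$ could still hold costly bads), but this does not matter because your ``otherwise'' argument never actually uses the assumption that $A^s_h$ contains a real bad and hence covers that case as well.
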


\paragraph{The case of mixed manna.} Let us briefly examine the \emph{mixed manna} setting in which we need to allocate \emph{items}, and each item can be either a good or a bad for each agent. We generically denote an item by $o$. That is, for each agent $i \in N$ and item $o \in M$, we only require that $v_{i,j} \in \mathbb{R}$. 

Envy-freeness up to one item can be generalized to the mixed manna setting. It requires that if agent $i$ envies agent $h$, then the envy can be eliminated by either removing one item from agent $h$'s bundle (which must be a good for $i$), or one item from agent $i$'s bundle (which must be a bad for $i$).

\begin{definition}[Envy-Freeness Up To One Item (EF1)~\cite{aziz2019fair}]
	An integral allocation $A$ is \emph{envy-free up to one item} if for every pair of agents $i,h \in N$, either $i$ does not envy $h$ (i.e., $v_i(A_i) \geq v_i(A_h)$), or there exists an item $o \in A_i \cup A_h$ such that $v_i(A_i \setminus \{ o \}) \geq v_i(A_h \setminus \{ o \})$.
\end{definition}

It is easy to check that when the instance consists of only goods or only bads, this definition reduces to the EF1 definitions we previously introduced. However, our algorithms achieve a relaxation of this, which we call \emph{weak envy-freeness up to one item} (w-EF1), that allows \emph{both} the removal of an item from agent $h$'s bundle and an item from agent $i$'s bundle. Note that when the instance consists of only goods or only bads, w-EF1 still reduces to the standard EF1 definition.

\begin{definition}[Weak Envy-Freeness Up to One Item (w-EF1)]
	An integral allocation $A$ is \emph{weak envy-free up to one item} if for every pair of agents $i,h \in N$, either $i$ does not envy $h$ (i.e., $v_i(A_i) \geq v_i(A_h)$), or there exists a good $o_h \in A_h$ and/or a bad $o_i \in A_i$ such that $v_i(A_i \setminus \set{o_i}) \geq v_i(A_h \setminus \set{o_h})$.
\end{definition}

While we were unable to settle the existence of an ex-ante EF + ex-post EF1 allocation for the mixed manna setting, we are able to guarantee ex-ante EF + ex-post w-EF1.

\begin{theorem}
	For additive valuation functions in the mixed manna setting, a randomized allocation that is ex-ante envy-free (EF) and ex-post weak envy-free up to one item (w-EF1) always exists, and can be computed in strongly polynomial time.
\end{theorem}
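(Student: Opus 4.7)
The plan is to adapt Recursive Probabilistic Serial (\Cref{alg:ef-ef1}) to the mixed manna setting in the natural way: in each stage, every agent would consume her most preferred available item at unit rate for one unit of time, where ``most preferred'' is measured by $v_i$ and may have a negative value if every item available to $i$ is a bad for her. After each stage, the fractional allocation would be decomposed via \Cref{thm:BvN} and sampled to yield an integral partial allocation. Mirroring the reduction used in the proof of \Cref{thm:ef1-bads}, I would first pad the instance with $n - (m \bmod n)$ dummy items valued at $0$ by every agent, so that the total number of items $m'$ is divisible by $n$, and then remove the dummies from the sampled allocation at the end.

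Ex-ante EF would follow exactly as in \Cref{thm:eating_ef}: at every moment of every stage, each agent is consuming her favorite available item at the same rate, so the stage's fractional allocation $X^t$ is envy-free, and by linearity of expectation the implemented fractional allocation is ex-ante EF.

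For ex-post w-EF1, I would first observe that \Cref{lem:sd-ef1-lemma} extends verbatim, since its proof invokes only the ``always eat the current favorite'' rule: for $t < \lceil m'/n \rceil$, $v_i(o_{i,t}) \ge v_i(o)$ for every $o \in M^{t+1}$, where $o_{i,t}$ is the item allocated to agent $i$ in stage $t$. Because padding forces $|A'_i| = |A'_h| = T' \coloneqq m'/n$ for every pair of agents in every sampled integer allocation $A'$, summing the pairing $v_i(o_{i,t}) \ge v_i(o_{h,t+1})$ over $t = 1, \ldots, T'-1$ gives
\[
v_i(A'_i \setminus \{o_{i,T'}\}) \ge v_i(A'_h \setminus \{o_{h,1}\}).
\]
If $v_i(A'_i) \ge v_i(A'_h)$ there is no envy. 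Otherwise, a case analysis on the signs of $v_i(o_{i,T'})$ and $v_i(o_{h,1})$ would deliver the w-EF1 certificate: (i) if $v_i(o_{h,1}) > 0$ and $v_i(o_{i,T'}) \ge 0$, removing only the good $o_{h,1}$ from $A'_h$ suffices, since $v_i(A'_i) \ge v_i(A'_h) - v_i(o_{h,1}) + v_i(o_{i,T'}) \ge v_i(A'_h \setminus \{o_{h,1}\})$; (ii) if $v_i(o_{i,T'}) < 0$ and $v_i(o_{h,1}) \le 0$, removing only the bad $o_{i,T'}$ from $A'_i$ suffices by a symmetric calculation; (iii) if $v_i(o_{h,1}) > 0$ and $v_i(o_{i,T'}) < 0$, the ``and/or'' clause of w-EF1 permits removing both; the remaining sign pattern produces no envy, as is immediate from the displayed inequality.

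The main obstacle is transferring the w-EF1 certificate from the padded instance back to the original instance after the dummies are deleted. This succeeds because every dummy contributes $0$ to $v_i$ for every agent, so bundle values are invariant under dummy deletion, and because the items chosen for removal in each subcase have strictly nonzero $v_i$-value and hence are never dummies themselves. Substituting the polynomial-time variant from \Cref{subsec:RPS_Polytime} for the plain algorithm then delivers the strongly polynomial runtime claim.
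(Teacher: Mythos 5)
Your proposal is correct and follows essentially the same route as the paper: pad with dummy items so that $n$ divides the number of items, run RPS (or its polynomial-time variant), reuse \Cref{lem:sd-ef1-lemma} to get $v_i(A_i \setminus \{o_{i,m/n}\}) \ge v_i(A_h \setminus \{o_{h,1}\})$, and strip the dummies at the end. Your explicit sign-based case analysis showing that this inequality yields a legitimate w-EF1 certificate (a \emph{good} removed from $A_h$ and/or a \emph{bad} removed from $A_i$, never a dummy) is a welcome elaboration of a step the paper's proof sketch leaves implicit.
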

\begin{proof}[Proof Sketch]
Once again, we add dummy items (which all agents value at $0$) to guarantee that $m$ is an integer multiple of $n$, and remove them at the end. When we run RPS (or its polynomial-time variant), we can again apply the same argument as in the proof of Proposition~\ref{prop:ef2-bads} to obtain $v_i(\set{o_{i,t} : 1 \le t < m/n}) \ge v_i(\set{o_{h,t} : 2 \le t \le m/n})$. That is, $v_i(A_i \setminus \set{o_{i, m/n}}) \ge v_i (A_h \setminus \set{o_{h,1}})$, which is the desired ex-post w-EF1 guarantee. Ex-ante EF follows with the same reasoning as before.
\end{proof}

Since our algorithms only take the ordinal preferences of agents over items as input, this in fact guarantees ex-ante SD-EF and ex-post SD-w-EF1 (which can be defined similarly to SD-EF1 using the SD preference relation). 

Finally, we note that the techniques from this section can also be applied to the decomposition of probabilistic serial (PS) achieved in the subsequent work of \citet{Aziz20} to obtain an ex-ante SD-EF and ex-post SD-w-EF1 allocation of mixed manna that retains additional axiomatic properties of PS. 

\subsection{Ex-ante GF + Ex-post Prop1 + Ex-post EF$_1^1$ for Bads}\label{sec:bads-gf-prop1-ef11}

We now show that our main result from Section~\ref{sec:Prop1_EF11} for the case of goods carries over easily to the case of bads.
This is somewhat striking because it has been observed that many results for the case of goods do not easily carry over to the case of bads~\cite{BMSY17}. For example, while it is known that a competitive equilibrium with equal incomes (CEEI) fractional allocation exists for the case of bads, and such an allocation is envy-free and Pareto optimal, it is no longer obtained by maximizing or minimizing the product of (dis)utilities, and whether such an allocation can be computed in polynomial time is a major open question~\cite{BS19,GM20computing}. However, if such an allocation is given, \citet{BS19} show that one can round it to obtain an integral Prop1+EF$_1^1$+fPO allocation in strongly polynomial time. 

Interestingly, we do not need to re-prove the utility guarantee (\Cref{lem:utility}) for the case of bads. Instead, we can simply reduce the case of bads to the case of goods by considering a modified valuation function $\hat{v}_i = -v_i$ (thus guaranteeing $\hat{v}_i \ge 0$), and since \Cref{lem:utility} bounds the difference between the utility of an agent in an integral allocation and her utility in the original fractional allocation from both below and above, this guarantee remains useful for the case of bads. We then show that this guarantee is sufficient to establish Prop1 and EF$_1^1$ in the case of bads as well. 

\begin{theorem}\label{thm:bads-prop1-ef11}
	For additive valuation functions over bads, there is a strongly polynomial-time algorithm that, given any fractional proportional (Prop) allocation as input, computes an implementation of it using integral allocations that are proportional up to one bad (Prop1). If, in addition, the input is a fractional CEEI allocation, then the integral allocations in the support also satisfy envy-freeness up to one bad more-and-less (EF$_1^1$). 
\end{theorem}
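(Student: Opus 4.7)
The plan is to mirror the proof of Theorem~\ref{thm:prop1-ef11}, reducing the bads case to the goods case via the substitution $\hat v_i \coloneqq -v_i$, which makes valuations non-negative. Applying Lemma~\ref{lem:utility} to $\hat v$ and the given fractional allocation $X$ produces, in strongly polynomial time, an implementation $X = \sum_k w^k A^k$. Re-expressing the two conclusions of the lemma in terms of the original $v$ gives, for each $k$ and each agent $i$: \emph{(i)} if $v_i(A^k_i) < v_i(X_i)$, then there exists $j_i \in A^k_i$ with $X_{i,j_i}<1$ and $v_i(A^k_i \setminus \{j_i\}) > v_i(X_i)$; and \emph{(ii)} if $v_i(A^k_i) > v_i(X_i)$, then there exists $j_i^- \notin A^k_i$ with $X_{i,j_i^-}>0$ and $v_i(A^k_i \cup \{j_i^-\}) < v_i(X_i)$. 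I will also use the fact that the rounding preserves the zero-pattern of $X$: if $X_{i,j}=0$, then $\sum_k w^k A^k_{i,j}=0$ forces $A^k_{i,j}=0$ for every $k$ in the support.

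For the Prop1 claim, fix $i$ and $k$. If $v_i(A^k_i) \geq v_i(X_i)$, then proportionality of $X$ already gives $v_i(A^k_i) \geq v_i(\vec{1}^m)/n$; otherwise, \emph{(i)} produces $j_i \in A^k_i$ with $v_i(A^k_i \setminus \{j_i\}) > v_i(X_i) \geq v_i(\vec{1}^m)/n$, establishing Prop1 for bads. For the EF$_1^1$ claim, assume $X$ is a fractional CEEI allocation, so that \eqref{eqn:ceei-bads-condition} holds, and fix $i,h$ with $A^k_i \neq \emptyset$. Using \emph{(i)}, or simply picking any $j_i \in A^k_i$ when $v_i(A^k_i) \geq v_i(X_i)$ (since removing a bad only raises $v_i$), I obtain $j_i \in A^k_i$ with $v_i(A^k_i \setminus \{j_i\}) \geq v_i(X_i)$. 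For $h$ I split cases: if $v_h(A^k_h) > v_h(X_h)$, \emph{(ii)} gives $j_h \notin A^k_h$ with $X_{h,j_h}>0$ and $v_h(A^k_h \cup \{j_h\}) \leq v_h(X_h)$; otherwise I pick any real $j_h \notin A^k_h$ (which exists since $A^k_i \neq \emptyset$), and observe that $v_h(A^k_h \cup \{j_h\}) \leq v_h(A^k_h) \leq v_h(X_h)$ holds automatically. Summing the $i \leftrightarrow h$ swap of \eqref{eqn:ceei-bads-condition} over $S$ (where $S = A^k_h \cup \{j_h\}$ in the first subcase and $S = A^k_h$ in the second) gives
\[
\frac{v_i(S)}{v_i(X_i)} \;\geq\; \frac{v_h(S)}{v_h(X_h)} \;\geq\; 1,
\]
using $v_h(S) \leq v_h(X_h) < 0$ for the right inequality. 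Because $v_i(X_i) < 0$, this rearranges to $v_i(S) \leq v_i(X_i)$; adding $v_i(j_h) \leq 0$ if needed in the second subcase yields $v_i(A^k_h \cup \{j_h\}) \leq v_i(X_i) \leq v_i(A^k_i \setminus \{j_i\})$, i.e., EF$_1^1$.

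The step that requires the most care is validating the CEEI sum, which needs $X_{h,j} > 0$ for every $j$ being summed. For $j_h$ in the first subcase this holds by construction; for every $j \in A^k_h$ it follows from the zero-pattern preservation noted above; and in the second subcase $j_h$ is deliberately kept out of the sum, with the monotonicity $v_i(A^k_h \cup \{j_h\}) \leq v_i(A^k_h)$ absorbing it afterwards. Strong polynomiality is inherited directly from the BCKM rounding.
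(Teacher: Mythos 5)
Your proof is correct and follows essentially the same route as the paper: negate the valuations to reduce to the goods case, apply Lemma~\ref{lem:utility} for the rounding, and combine its two-sided utility bounds with proportionality (for Prop1) and the CEEI condition \eqref{eqn:ceei-bads-condition} (for EF$_1^1$). The only differences are cosmetic --- you sum the CEEI inequality over the envied agent's bundle rather than the envying agent's (a symmetric relabeling), and you handle the degenerate cases by monotonicity and make explicit the zero-pattern preservation that the paper's use of dummy goods leaves implicit.
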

\begin{proof}
	Given an instance $I = (N,M,(v_i)_{i \in N})$ of bad division, let us define the corresponding instance of good division as $\hat{I} = (N,M,(\hat{v}_i)_{i \in N})$ of good division, where $\hat{v}_i = -v_i$. Given a fractional allocation $X$ of $I$, we treat it as an allocation of instance $\hat{I}$, and apply \Cref{lem:utility} to find an implementation of $X$. Let $A^1,\ldots,A^{\ell}$ be integral allocations in its support. Note that the `reduction' to goods is used only for the decomposition step, and not for constructing the relevant fractional allocation.
	
	First, suppose $X$ is Prop under $I$. We want to show that for each $k \in [\ell]$, $A^k$ is Prop1 under $I$. Since $X$ is proportional, for every $i \in N$, $\hat{v}_i(X_i) \le \hat{v}_i(\vec{1}^m)/n$, where $\hat{v}_i(\vec{1}^m)$ is agent $i$'s valuation for receiving all ``goods'' fully. Fix $k \in [\ell]$. By \Cref{lem:utility}, we have that for every agent $i \in N$, either $\hat{v}_i(A^k_i) \le \hat{v}_i(X_i) \le \hat{v}_i(\vec{1}^m)/n$, or there exists a bad $j \in A^k_i$ such that $\hat{v}_i(A^k_i) - \hat{v}_{i,j} < \hat{v}_i(X_i) \le \hat{v}_i(\vec{1}^m)/n$. Therefore, $A^k$ is Prop1 under $I$. Note that here, we are using the second part of \Cref{lem:utility}, whereas in the goods case, we used the first part to establish Prop1.
	
	Next, suppose that $X$ is a fractional CEEI allocation under $I$. Since this allocation is envy-free~\cite{BMSY17}, and therefore proportional, the aforementioned argument still applies for ex-post Prop1 under $I$. We show that in this case, $A^k$ is also EF$_1^1$ under $I$ for each $k \in [\ell]$. As mentioned before, since $X$ is a fractional CEEI allocation under $I$, \Cref{eqn:ceei-bads-condition} holds. Fix a pair of distinct agents $i,h \in N$. By \Cref{lem:utility}, either $\hat{v}_i(A^k_i) \ge \hat{v}_i(X_i)$, or there exists $b_i^- \notin A^k_i$ with $X_{i,b_i^{-}} > 0$ such that $\hat{v}_i(A^k_i \cup \{b_i^-\}) > \hat{v}_i(X_i)$. Similarly, either $\hat{v}_h(A^k_h) \le \hat{v}_h(X_h)$, or there exists $b_h^+ \in A^k_h$ with $X_{h,b_h^{-}} < 1$ such that $\hat{v}_h(A^k_h \setminus \{b_h^+\}) < \hat{v}_h(X_h)$. To simplify the analysis like in the case of goods, let us assume that the second condition holds in both cases.	
	
	By summing the right-hand side inequality in \Cref{eqn:ceei-bads-condition} over all $b \in A^k_i \cup \set{b_i^-}$, we get 
	$$
	\frac{\hat{v}_h(A^k_i \cup \set{b_i^-})}{\hat{v}_h(X_h)} \ge \frac{\hat{v}_i(A^k_i \cup \set{b_i^-})}{\hat{v}_i(X_i)} > 1.
	$$
	Thus, $\hat{v}_h(A^k_i \cup \{b_i^-\}) > \hat{v}_h(X_h) > \hat{v}_h(A^k_h \setminus \set{b_h^+})$, implying that $v_h(A^k_i \cup \{b_i^-\}) < v_h(X_h) < v_h(A^k_h \setminus \set{b_h^+})$, and $A^k$ satisfies EF$_1^1$, as desired.
\end{proof}

\begin{corollary}\label{bads-fgf-prop1}
	For additive valuation functions over bads, there exists a randomized allocation that is ex-ante group fair (GF), ex-post proportional up to one bad (Prop1), and ex-post envy-free up to one bad more-and-less (EF$_1^1$). Further, it can be computed in strongly polynomial time given a fractional CEEI allocation.  
\end{corollary}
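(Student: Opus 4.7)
The plan is to derive this corollary as a direct consequence of Theorem~\ref{thm:bads-prop1-ef11}, analogously to how Corollary~\ref{fgf-prop1} was obtained from Theorem~\ref{thm:prop1-ef11} in the goods setting. The input to the construction is a fractional CEEI allocation $X$ for the bads instance; I would invoke the existence result of~\citet{BMSY17}, who establish that such an allocation always exists for additive valuations over bads.

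The first step is to argue that any fractional CEEI allocation for bads is group fair. This mirrors the analogous statement for goods (which is what makes fractional MNW group fair): one shows that for every pair of subsets $S,T \subseteq N$, no reallocation of $\cup_{i \in T} X_i$ among agents in $S$ can make every agent in $S$ weakly better off (with one strict inequality) when utilities are scaled by $|S|/|T|$, with the argument relying on the CEEI price-ratio condition in \Cref{eqn:ceei-bads-condition}. This immediately yields ex-ante GF. In fact I would refer to the cake-cutting proof promised in the appendix for the analogous goods result and remark that the same argument, with inequalities reversed appropriately because valuations are non-positive, carries over.

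The second step is to apply Theorem~\ref{thm:bads-prop1-ef11} with $X$ as input. Since $X$ is CEEI, it is envy-free~\cite{BMSY17} and hence proportional, so the theorem yields, in strongly polynomial time, a randomized allocation implementing $X$ whose support consists entirely of integral allocations that are simultaneously Prop1 and EF$_1^1$ for bads. By definition of implementation, the randomized allocation has $X$ as its ex-ante fractional allocation, so the ex-ante GF property established in the first step is inherited. Ex-post Prop1 and ex-post EF$_1^1$ are exactly the properties delivered by Theorem~\ref{thm:bads-prop1-ef11}.

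No step should be an obstacle: the only non-routine ingredient is the group-fairness-of-CEEI claim for bads, and even this is a direct translation of the known goods-case argument using the sign-flipped CEEI first-order condition. The runtime bound comes from the strongly polynomial rounding guarantee of Theorem~\ref{thm:bads-prop1-ef11}; note that unlike the goods case (Corollary~\ref{fgf-prop1}), we do not claim overall strongly polynomial time because computing a fractional CEEI allocation for bads remains a major open problem~\cite{BS19,GM20computing}, which is exactly why the statement is phrased ``given a fractional CEEI allocation.''
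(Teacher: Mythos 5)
Your proposal is correct and follows essentially the same route as the paper: the corollary is obtained by combining the existence of a fractional CEEI allocation for bads~\cite{BMSY17}, the observation that such an allocation is group fair (which the paper asserts carries over from the goods case, just as you argue), and \Cref{thm:bads-prop1-ef11} applied to that allocation, with the same caveat about the open complexity of computing fractional CEEI for bads. Your sketch of adapting the sign-flipped CEEI condition to verify group fairness is exactly the intended (and only nontrivial) ingredient.
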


We note that in \Cref{bads-fgf-prop1}, unlike in the case of goods, we have to add the condition ``given a fractional CEEI allocation''. This is because a fractional CEEI allocation is known to be computable in strongly polynomial time in the case of goods, but this remains an interesting open question in the case of bads~\cite{BS19,GM20computing}. 

\section{Discussion}\label{sec:Discussion}

Perhaps the most fascinating open question that stems from our work is whether ex-ante envy-freeness (or even ex-ante proportionality) is compatible with ex-post EF1 and ex-post PO.\footnote{While our simulations did not find a counterexample, simulations have been known to be misleading for existential results in the past, e.g., in case of the (non)existence of MMS allocations~\cite{KPW18}.} 

\medskip\noindent\textbf{Open Question:} \emph{Does there always exist a randomized allocation that is ex-ante EF, ex-post EF1, and ex-post PO? What about ex-ante Prop, ex-post EF1, and ex-post PO?}\medskip

The difficulty in approaching this question is that there are very few available methods of finding integral EF1+PO allocations~\cite{CKMP+19,BKV18}, so finding many such allocations and randomizing over them is tricky. Also, unlike the set of integral EF1 allocations, which we somewhat understand, not much is known about the set of integral EF1+PO allocations other than the fact that it is always non-empty. For example, round-robin method allows us to control exactly which agents might envy which other agents in addition to obtaining EF1. The following questions, which thus resolve positively for EF1, are open for EF1+PO to the best of our knowledge. 

\medskip\noindent\textbf{Open Question:} \emph{Given agents $i$ and $j$, does there always exist an integral EF1+PO allocation in which agent $i$ does not envy agent $j$? What about an integral EF1+PO allocation in which agent $i$ does not envy any other agent? Given a priority ordering over the agents, does there always exist an EF1+PO allocation in which no agent envies an agent with lower priority?}\medskip

Various other open problems remain. For instance, other related notions such as envy-freeness up to any good (EFX) and approximate maximin-share guarantee (MMS) are natural candidates for ex-post fairness.\footnote{For additive valuations, whether an integral EFX allocation always exists is an open problem, and while integral MMS allocations do not always exist~\cite{KPW18}, $3/4$-approximate MMS allocations are known to always exist~\cite{GHSS+18,GT20}.} More broadly, the next step would be to achieve ex-ante and ex-post fairness guarantees simultaneously in a variety of other problems such as voting, matching, and public decision-making.

% \input{1-Intro}
% \input{2-Preliminaries}
% \input{3-EF-EF1}
% \input{4-Impossibility}
% \input{5-Prop-Prop1-fPO}
% \input{6-Characterization}
% \input{7-Bads}
% \input{8-Discussion}

% Bibliography
%\bibliographystyle{ACM-Reference-Format}
%\bibliographystyle{plainnat}
%\bibliography{References}

%\include{9-Appendix}

% Appendix
\appendix
\section*{Appendix}
\section{Decomposition result of BUDISH ET AL.~\citep{BCKM13}}
\label{sec:BCKMRounding}

Let $X$ be a fractional allocation. Recall that $X$ satisfies \emph{column-wise} feasibility constraints, namely $0 \leq \sum_{i \in N} X_{i,j} \leq 1$ for all $j \in M$. More generally, we can impose capacity constraints of the form $\underline{q}_S \leq \sum_{(i,j) \in S} X_{i,j} \leq \overline{q}_S$, where $S$ is a \emph{constraint set} comprising of a collection of agent-object pairs, and $\underline{q}_S$ and $\overline{q}_S$ are the \emph{lower} and \emph{upper quotas} for $S$, respectively. The set of all capacity constraints imposed by a given problem is called the \emph{constraint structure} $\H$ of the problem, and is specified as a collection of all constraint sets and the corresponding quotas $(\underline{q}_S,\overline{q}_S)_{S \in \H}$. Given a constraint structure $\H$, we say that the fractional allocation $X$ admits a \emph{feasible} implementation $\mathbf{X} \coloneqq \{(p^k,A^k)\}_{k \in [\ell]}$ if every integral allocation in its support also satisfies the constraints in $\H$. That is, for every $k \in [\ell]$, we have 
$$\textstyle{ \underline{q}_S \leq \sum_{(i,j) \in S} A^k_{i,j} \leq \overline{q}_S \text{ for every } S \in \H. }$$

\begin{definition}[Hierarchy and bihierarchy]
	A constraint structure $\H$ is said to be a \emph{hierarchy} (or a laminar family) if for every $S,S' \in \H$, we have that either $S \subset S'$, or $S' \subset S$, or $S \cap S' = \emptyset$. We say that $\H$ is a \emph{bihierarchy} if it can be partitioned into two hierarchies, i.e., if there exist hierarchies $\H_1$ and $\H_2$ such that $\H = \H_1 \cup \H_2$ and $\H_1 \cap \H_2 = \emptyset$.
	\label{defn:bihierarchy}
\end{definition}
As an example, consider the fractional allocation $X$ in \Cref{fig:prop_decomposition}. The row constraints (shown as red or blue solid rectangles) as well as all singleton constraints of the form $0 \leq X_{i,j} \leq 1$ (not shown in the figure) together constitute a hierarchy, say $\H_1$, since for any pair of constraint sets, either they are disjoint or one is completely contained inside the other. Similarly, the column constraints (shown as gray dotted rectangles) form another hierarchy $\H_2$. Furthermore, $\H \coloneqq \H_1 \cup \H_2$ is a bihierarchy since any constraint set (rectangle or singleton) belongs to exactly one of $\H_1$ or $\H_2$.

\begin{figure}[ht]
% Code adapted from <https://tex.stackexchange.com/questions/402265/draw-concentric-boxes-around-elements-of-matrix-tikz>
\tikzset{every picture/.style={line width=0.3pt}}
\tikzset{%
  RedRectangle/.style={rectangle,rounded corners,draw=red,line width=0.3pt}
}
\tikzset{%
  BlueRectangle/.style={rectangle,rounded corners,draw=blue,line width=0.3pt}
}
\tikzset{%
  LightRedRectangle/.style={rectangle,rounded corners,draw=red,line width=0.1pt}
}
\tikzset{%
  LightBlueRectangle/.style={rectangle,rounded corners,draw=blue,line width=0.1pt}
}
\tikzset{%
  GrayRectangle/.style={rectangle,rounded corners,draw=gray,line width=0.5pt,densely dotted}
}
\begin{tikzpicture}
\footnotesize
\def\x{-3};
%%
%% Draw valuations matrix
%		\matrix [matrix of math nodes,left delimiter=|,right delimiter=|] (mat) at (\x,0)
%        {
%            10 & 6 & 4 & 2 \\
%            2 & 10 & 6 & 4 \\
%        };
%        \foreach \y in {1,...,2}
%    {
%      \draw 
%        ([xshift=-.5\pgflinewidth]mat-\y-1.south west) --   
%        ([xshift=-.5\pgflinewidth]mat-\y-2.south east);
%      }
%    \foreach \y in {1,...,2}
%    {
%      \draw 
%        ([yshift=.5\pgflinewidth]mat-1-\y.north east) -- 
%        ([yshift=.5\pgflinewidth]mat-2-\y.south east);
%    }    
%%
%%
%%
% Draw fractional matrix
        \matrix [matrix of math nodes,left delimiter=(,right delimiter=)] (m) at (\x+3.5,0) 
        {
            0.6 & { } & 0.4 & { } & 0.4 & { } & 0.6 \\  
            { }& { } & { }& { } & { }& { } & { }\\
            { }& { } & { }& { } & { }& { } & { }\\
            0.4 & { } & 0.6 & { } & 0.6 & { } & 0.4 \\ 
        };
        \node[above=0.2cm of m] {$X$};
        % Red rectangles
        \node[RedRectangle,inner sep=1pt,fit=(m-1-1.north west)(m-1-1.south east)] {};
        \node[RedRectangle,inner sep=2pt,fit=(m-1-1.north west)(m-1-3.south east)] {};
        \node[RedRectangle,inner sep=3pt,fit=(m-1-1.north west)(m-1-5.south east)] {};
        \node[RedRectangle,inner sep=4pt,fit=(m-1-1.north west)(m-1-7.south east)] {};
        % Blue rectangles
        \node[BlueRectangle,inner sep=1pt,fit=(m-4-3.north west)(m-4-3.south east)] {};
        \node[BlueRectangle,inner sep=2pt,fit=(m-4-3.north west)(m-4-5.south east)] {};
        \node[BlueRectangle,inner sep=3pt,fit=(m-4-3.north west)(m-4-7.south east)] {};
        \node[BlueRectangle,inner sep=4pt,fit=(m-4-1.north west)(m-4-7.south east)] {};
%
		% Gray rectangles
		\node[GrayRectangle,inner ysep=8pt,inner xsep=0pt,fit=(m-1-1.north west)(m-4-1.south east)] {};
        \node[GrayRectangle,inner ysep=8pt,inner xsep=0pt,fit=(m-1-3.north west)(m-4-3.south east)] {};
        \node[GrayRectangle,inner ysep=8pt,inner xsep=0pt,fit=(m-1-5.north west)(m-4-5.south east)] {};
        \node[GrayRectangle,inner ysep=8pt,inner xsep=0pt,fit=(m-1-7.north west)(m-4-7.south east)] {};
%
% Draw "equals to" sign
		\node at (\x+5.4,0) {=};
%
% Draw "0.4" coefficient
		\node at (\x+5.75,0) {0.4};
%
% Draw integral matrix 1
        \matrix [matrix of math nodes,left delimiter={[},right delimiter={]}] (m) at (\x+7.2,0) 
        {
            1 & { } & 0 & { } & 1 & { } & 0 \\  
            { }& { } & { }& { } & { }& { } & { }\\
            { }& { } & { }& { } & { }& { } & { }\\
            0 & { } & 1 & { } & 0 & { } & 1 \\ 
        };
        \node[above=0.2cm of m] {$A^1$};
        % Red rectangles
        \node[LightRedRectangle,inner sep=1pt,fit=(m-1-1.north west)(m-1-1.south east)] {};
        \node[LightRedRectangle,inner sep=2pt,fit=(m-1-1.north west)(m-1-3.south east)] {};
        \node[LightRedRectangle,inner sep=3pt,fit=(m-1-1.north west)(m-1-5.south east)] {};
        \node[LightRedRectangle,inner sep=4pt,fit=(m-1-1.north west)(m-1-7.south east)] {};
        % Blue rectangles
        \node[LightBlueRectangle,inner sep=1pt,fit=(m-4-3.north west)(m-4-3.south east)] {};
        \node[LightBlueRectangle,inner sep=2pt,fit=(m-4-3.north west)(m-4-5.south east)] {};
        \node[LightBlueRectangle,inner sep=3pt,fit=(m-4-3.north west)(m-4-7.south east)] {};
        \node[LightBlueRectangle,inner sep=4pt,fit=(m-4-1.north west)(m-4-7.south east)] {};
%
% Draw "+" sign
		\node at (\x+8.5,0) {+};
%
% Draw "0.2" coefficient
		\node at (\x+8.85,0) {0.2};
%
% Draw integral matrix 2
        \matrix [matrix of math nodes,left delimiter={[},right delimiter={]}] (m) at (\x+10.3,0) 
        {
            1 & { } & 0 & { } & 0 & { } & 1 \\  
            { }& { } & { }& { } & { }& { } & { }\\
            { }& { } & { }& { } & { }& { } & { }\\
            0 & { } & 1 & { } & 1 & { } & 0 \\ 
        };
        \node[above=0.2cm of m] {$A^2$};
        % Red rectangles
        \node[LightRedRectangle,inner sep=1pt,fit=(m-1-1.north west)(m-1-1.south east)] {};
        \node[LightRedRectangle,inner sep=2pt,fit=(m-1-1.north west)(m-1-3.south east)] {};
        \node[LightRedRectangle,inner sep=3pt,fit=(m-1-1.north west)(m-1-5.south east)] {};
        \node[LightRedRectangle,inner sep=4pt,fit=(m-1-1.north west)(m-1-7.south east)] {};
        % Blue rectangles
        \node[LightBlueRectangle,inner sep=1pt,fit=(m-4-3.north west)(m-4-3.south east)] {};
        \node[LightBlueRectangle,inner sep=2pt,fit=(m-4-3.north west)(m-4-5.south east)] {};
        \node[LightBlueRectangle,inner sep=3pt,fit=(m-4-3.north west)(m-4-7.south east)] {};
        \node[LightBlueRectangle,inner sep=4pt,fit=(m-4-1.north west)(m-4-7.south east)] {};
%
% Draw "+" sign
		\node at (\x+11.6,0) {+};
%
% Draw "0.4" coefficient
		\node at (\x+11.95,0) {0.4};
%
% Draw integral matrix 3
        \matrix [matrix of math nodes,left delimiter={[},right delimiter={]}] (m) at (\x+13.4,0) 
        {
            0 & { } & 1 & { } & 0 & { } & 1 \\  
            { }& { } & { }& { } & { }& { } & { }\\
            { }& { } & { }& { } & { }& { } & { }\\
            1 & { } & 0 & { } & 1 & { } & 0 \\ 
        };
        \node[above=0.2cm of m] {$A^3$};
        % Red rectangles
        \node[LightRedRectangle,inner sep=1pt,fit=(m-1-1.north west)(m-1-1.south east)] {};
        \node[LightRedRectangle,inner sep=2pt,fit=(m-1-1.north west)(m-1-3.south east)] {};
        \node[LightRedRectangle,inner sep=3pt,fit=(m-1-1.north west)(m-1-5.south east)] {};
        \node[LightRedRectangle,inner sep=4pt,fit=(m-1-1.north west)(m-1-7.south east)] {};
        % Blue rectangles
        \node[LightBlueRectangle,inner sep=1pt,fit=(m-4-3.north west)(m-4-3.south east)] {};
        \node[LightBlueRectangle,inner sep=2pt,fit=(m-4-3.north west)(m-4-5.south east)] {};
        \node[LightBlueRectangle,inner sep=3pt,fit=(m-4-3.north west)(m-4-7.south east)] {};
        \node[LightBlueRectangle,inner sep=4pt,fit=(m-4-1.north west)(m-4-7.south east)] {};
    \end{tikzpicture}
    \caption{Decomposition of a fractional proportional allocation $X$ into integral Prop1 allocations $A^1$, $A^2$, and $A^3$. The underlying fair division instance comprises of four goods and two agents with valuations $v_{1,1}=10$, $v_{1,2}=6$, $v_{1,3}=4$, $v_{1,4}=2$ and $v_{2,1}=2$, $v_{2,2}=10$, $v_{2,3}=6$, $v_{2,4}=4$.}
    \label{fig:prop_decomposition}
\end{figure}
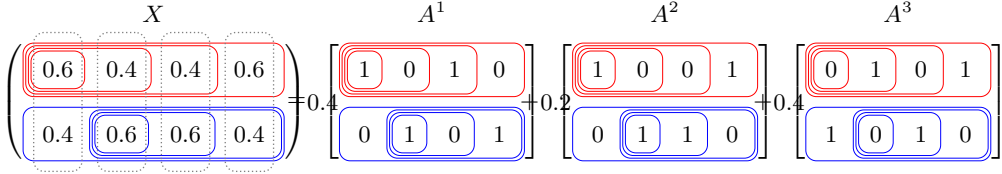

\citet{BCKM13} showed that bihierarchy constraint structure is a sufficient condition for a fractional allocation to admit a feasible implementation. We recall this result in \Cref{prop:BCKMRounding}.

\begin{proposition}[\citet{BCKM13}]\label{prop:BCKMRounding}
	Given a fractional allocation $X$ satisfying a bihierarchy constraint structure $\H$, one can compute, in strongly polynomial time, a set of coefficients $p_1,\ldots,p_{\ell} \in [0,1]$ and integral allocations $A^1,\ldots,A^{\ell}$ such that (a) $\sum_{k=1}^{\ell} p^k = 1$, (b) each $A^k$ satisfies the constraints in $\H$, and (c) $X = \sum_{k=1}^{\ell} p^k A^k$.
\end{proposition}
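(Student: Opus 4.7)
The plan is to prove the proposition by first establishing that the polytope defined by the bihierarchy constraints has integer vertices, and then giving an iterative decomposition procedure that tracks a convex combination along integer-vertex moves.

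Concretely, let $P(\H) \coloneqq \{Y \in [0,1]^{n \times m} : \underline{q}_S \le \sum_{(i,j) \in S} Y_{i,j} \le \overline{q}_S \text{ for all } S \in \H\}$. We may take all quotas to be integers: the problem of decomposing $X$ into integral allocations in $P(\H)$ is nontrivial only when the quotas are integer-valued (otherwise no integral $A$ can satisfy the constraints), and we may freely replace fractional $\underline{q}_S$ by $\lceil \sum_{(i,j)\in S} X_{i,j} \rceil$ and similarly for $\overline{q}_S$ without excluding $X$. First I would show that every vertex of $P(\H)$ is integer. This follows from total unimodularity of the constraint matrix $M$ whose rows are indexed by sets $S \in \H_1 \cup \H_2$ and whose columns are indexed by cells $(i,j)$, with $M_{S,(i,j)} = 1$ iff $(i,j) \in S$. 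I would apply Ghouila-Houri's criterion: partition the rows of any submatrix as $R_1 \sqcup R_2$ where $R_k$ collects rows of $\H_k$; for each column $(i,j)$, the rows of $R_k$ containing $(i,j)$ form a chain in the laminar family $\H_k$, so one checks that by an alternating-sign assignment along each chain the column sums land in $\{-1,0,1\}$. This gives integrality of $P(\H)$.

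Next, for the algorithmic decomposition, I would use an iterative peeling procedure. Maintain $X = \sum_k \lambda_k Y^k$ where each $Y^k \in P(\H)$; initially the list is $\{X\}$ with weight $1$. At each step, pick a non-integer member $Y$. I would construct a signed perturbation $D \in \mathbb{R}^{n \times m}$ supported on fractional entries of $Y$ such that $Y + \epsilon D \in P(\H)$ for $\epsilon$ in some open interval $(-\alpha,\beta)$ with $\alpha, \beta > 0$. Setting $\epsilon = \beta$ and $\epsilon = -\alpha$ produces $Y^+$ and $Y^-$ in $P(\H)$ each having strictly fewer fractional entries than $Y$ (because at least one constraint becomes tight or one entry hits $\{0,1\}$), and $Y$ is a convex combination of them. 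Replace $Y$ in the list by $Y^+$ and $Y^-$ with appropriately scaled weights. To find such $D$: build a bipartite multigraph whose two node sides are the minimal sets of $\H_1$ and $\H_2$ (or, more carefully, the rows and columns of the incidence structure on fractional cells), and whose edges are the fractional cells of $Y$. If this graph contains a cycle, alternately signing edges along it yields a valid $D$; if it is a forest, then iteratively removing any leaf-cell forces the corresponding minimal constraint set to contain a single fractional cell, and tightness of the enclosing laminar constraints combined with integrality of the quotas forces that cell to be integer, a contradiction.

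The main obstacle is ensuring a \emph{strongly polynomial} running time and a polynomial support size, rather than merely finite termination. For support size, observe that the support produced throughout the procedure is always contained in the set of vertices of $P(\H)$ adjacent to $X$ along the sequence of chosen pivots; invoking Carathéodory's theorem after the fact (or running the procedure of \Cref{lem:GLS} on $P(\H)$ using the constructed vertices as a strong separation oracle) gives a final support size of at most $nm+1$. For the running time, each iteration reduces the number of fractional entries by at least one in each of the two children, so the binary tree of splits has depth at most $nm$ and total size $O((nm)^2)$; each iteration only requires finding a cycle in a bipartite graph on $O(nm)$ vertices plus solving a simple line-search for $\alpha, \beta$, both strongly polynomial. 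The subtle point that must be handled with care is that the constraint quotas arising inside the recursion remain integer (which they do because the perturbation $\epsilon D$ preserves integrality of any constraint that was already tight), so the induction hypothesis of integer quotas persists throughout.
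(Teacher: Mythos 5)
First, a remark on the ground truth: the paper does not prove this proposition. It is imported verbatim from \citet{BCKM13} as a black box (the surrounding text says ``We recall this result''), so there is no in-paper proof to compare yours against; what you have written is a reconstruction of the source's argument, and in outline it does follow the same route (integrality of the bihierarchy polytope plus a combinatorial peeling). However, two steps have genuine gaps. The first is the Ghouila--Houri step. Each laminar family separately contributes a signed column sum in $\{-1,0,+1\}$, so a priori the two families together give values in $\{-2,\dots,+2\}$; ``an alternating-sign assignment along each chain'' does not by itself land in $\{-1,0,1\}$, and coordinating the two families is exactly the crux. The fix is to observe that for any chosen row subset $R$ and any cell $e$, the sets of $R\cap\H_1$ containing $e$ form a chain whose top element is a \emph{root} of the laminar forest of $R\cap\H_1$ (any strict superset in $R\cap\H_1$ would also contain $e$) and whose consecutive elements are parent--child; hence depth-parity coloring with roots signed $+1$ gives a per-column contribution in $\{0,+1\}$ for $\H_1$, the mirror coloring with roots signed $-1$ gives $\{0,-1\}$ for $\H_2$, and only then is the total in $\{-1,0,1\}$. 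Relatedly, your quota normalization is backwards: a fractional lower quota must be replaced by $\lfloor\sum_{(i,j)\in S}X_{i,j}\rfloor$ (the ceiling would make $X$ itself infeasible), and the proposition is really only true for integer quotas, which is how \citet{BCKM13} state it.

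The second gap is the complexity claim. Your procedure splits every fractional point into two children, so it builds a binary tree; depth at most $nm$ gives up to $2^{nm}$ leaves, not ``total size $O((nm)^2)$,'' and invoking Carath\'eodory or \Cref{lem:GLS} \emph{after} generating the vertices does not rescue a procedure that already spent exponential time generating them. Moreover, a split may tighten a constraint without making any entry integral, so even the depth bound needs the potential ``number of fractional entries plus number of slack constraints.'' The standard repair --- and what \citet{BCKM13} actually do --- is the non-branching peeling: find a single integral allocation $A$ supported on the fractional pattern of the current point, subtract $pA$ for the largest feasible $p$, and renormalize; each peel permanently zeroes a cell or tightens a constraint, so the number of iterations (and the support size) is at most $nm+|\H|$, and every step is combinatorial, which is what yields the \emph{strongly} polynomial bound claimed in the statement. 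As written, your hybrid establishes existence of the decomposition but neither the polynomial support nor the strongly polynomial running time.
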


Observe that the well-known Birkhoff-von Neumann theorem is a special case of \Cref{prop:BCKMRounding} when $\H_1$ consists of all singleton as well as row constraints, $\H_2$ consists of all column constraints, and the lower and upper quotas for each constraint set are $0$ and $1$, respectively. It is worth pointing out that while \citet{BCKM13} only note a polynomial running time, it is easy to check that their (combinatorial) algorithm, in fact, runs in \emph{strongly} polynomial time.

\citet{BCKM13} use \Cref{prop:BCKMRounding} to establish the utility guarantee (\Cref{prop:BCKMutility}).

\section{Omitted Material from Section~\ref{sec:Prop1_EF11}}

For completeness, we provide a proof of the result mentioned by~\citet{CFSV19}, that MNW satisfies group fairness. Indeed, we prove a stronger result. 
We show that this holds not only for the case of goods division, but in the more general cake-cutting setting.

In the cake-cutting setting, we have a single heterogeneous good modeled by $M=[0,1]$, and each agent has an integrable valuation function $v_i: [0,1] \to \mathbb{R}_+$. Each agent's allocation, $A_i$, is a finite union of intervals, and $v_i(A_i)$ is given by the area under the curve $v_i$ over the intervals in $A_i$.

A price function $p$ is a measure on $M$. Say that a subset $Z \subseteq M$ is a \emph{positive slice} if there exists an agent $i$ with $v_i(Z)>0$. A pair $(X,p)$ of allocation $X$ and price $p$ is a \emph{strong competitive equilibrium from equal incomes} (s-CEEI)~\citep{segal2019monotonicity} if the following conditions hold:
\begin{enumerate}
	\item For every positive slice $Z \subseteq M$, $p(Z)>0$.
	\item For every pair of positive slices $Z \subseteq M$ and $Z_i \subseteq X_i$, $\frac{v_i(Z_i)}{p(Z_i)} \ge \frac{v_i(Z)}{p(Z)}$.
	\item For every $i \in N$, $p(X_i)=1$
\end{enumerate}
\citet{segal2019monotonicity} show that the MNW allocation is an s-CEEI allocation when $p$ is the standard price measure defined by, for all $i \in N$ and all $Z_i \subseteq X_i$, $p(Z_i) = \frac{v_i(Z_i)}{v_i(X_i)}$.

\begin{proposition}\label{prop:mnw-properties}
	MNW satisfies group fairness in the cake-cutting setting.
\end{proposition}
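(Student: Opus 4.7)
The plan is to exploit the s-CEEI characterization of the MNW allocation due to \citet{segal2019monotonicity}. Let $(X,p)$ be the s-CEEI pair associated with the MNW allocation under the standard price $p(Z_i) \coloneqq v_i(Z_i)/v_i(X_i)$ for $Z_i \subseteq X_i$. The two facts we need are: (i) $p(X_i)=1$ for every agent $i \in N$ (equal incomes); and (ii) for every agent $i$ with $v_i(X_i)>0$ and every positive slice $Z \subseteq M$, the inequality $v_i(Z) \le p(Z) \cdot v_i(X_i)$ holds. Fact (ii) is just a restatement of condition 2 of s-CEEI applied with $Z_i = X_i$, together with the equality $p(X_i)=1$. (Agents with $v_i(X_i)=0$ can be handled separately; a standard MNW tie-breaking ensures they receive a bundle they cannot strictly improve upon, and this case causes no issues for the argument below.)

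Suppose for contradiction that $X$ violates group fairness: there exist non-empty $S,T \subseteq N$ and a (partial) allocation $Y$ of $\cup_{i \in T}X_i$ to agents in $S$ such that
\[
\tfrac{|S|}{|T|} \cdot v_i(Y_i) \ge v_i(X_i) \quad \text{for every } i \in S,
\]
with at least one strict inequality. Dividing through by $v_i(X_i)$ and summing over $i \in S$ yields
\[
\sum_{i \in S} \frac{v_i(Y_i)}{v_i(X_i)} > |S| \cdot \frac{|T|}{|S|} = |T|.
\]

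On the other hand, applying fact (ii) to each $Y_i$ (the case $v_i(Y_i)=0$ is trivial, and otherwise $Y_i$ is a positive slice) gives $v_i(Y_i)/v_i(X_i) \le p(Y_i)$ for each $i \in S$. Since $Y$ is an allocation of $\cup_{i \in T} X_i$, the pieces $\{Y_i\}_{i \in S}$ are disjoint and contained in $\cup_{i \in T} X_i$, so by countable additivity of the price measure $p$ together with fact (i),
\[
\sum_{i \in S} p(Y_i) = p\!\left(\bigcup_{i \in S} Y_i\right) \le p\!\left(\bigcup_{i \in T} X_i\right) = \sum_{i \in T} p(X_i) = |T|.
\]
Chaining the two displays produces $|T| < \sum_{i \in S} v_i(Y_i)/v_i(X_i) \le \sum_{i \in S} p(Y_i) \le |T|$, a contradiction.

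The substantive step is simply invoking the s-CEEI characterization; after that the argument is the familiar ``equal-incomes budget set'' comparison that yields core/group-fairness-type guarantees. The only mild subtlety is making sure the price-measure accounting works in the cake setting (i.e., that $p$ is additive on disjoint measurable unions and that the ``positive slice'' hypothesis in condition 2 of s-CEEI is satisfied whenever we invoke it), which is why I split on whether $v_i(Y_i)=0$ and whether $v_i(X_i)=0$. I do not expect serious difficulty here.
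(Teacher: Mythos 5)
Your proposal is correct and follows essentially the same route as the paper's proof: both invoke the s-CEEI characterization of MNW, use condition (2) to bound $v_i(Y_i)/v_i(X_i)$ by $p(Y_i)$, and derive the contradiction $\sum_{i\in S}p(Y_i) > |T| \ge \sum_{i\in S}p(Y_i)$ via additivity of the price measure and the equal-incomes condition $p(X_j)=1$. Your explicit handling of the $v_i(Y_i)=0$ and $v_i(X_i)=0$ edge cases is a minor refinement that the paper's version elides.
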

\begin{proof}
	Consider an instance of cake cutting and suppose that the MNW/s-CEEI allocation $X$ violates group fairness. Then there exist subsets of agents $S,T \subseteq N$ with $|T| \neq 0$ and an allocation $Y$ of $\cup_j X_j$ among agents in $S$ such that $\frac{|S|}{|T|} \cdot v_i(Y_i) \ge v_i(X_i)$ for all agents $i \in S$ and the inequality is strict for at least one $i \in S$. 
	
	By the s-CEEI condition (2), we know that $v_i(X_i)=\frac{v_i(X_i)}{p(X_i)} \ge \frac{v_i(Y_i)}{p(Y_i)}$ for all $i \in S$. Combining this with the group fairness condition yields
	\[ \frac{|S|}{|T|} \cdot v_i(Y_i) \ge \frac{v_i(Y_i)}{p(Y_i)} \implies p(Y_i) \ge \frac{|T|}{|S|} \]
	for all $i \in S$, with at least one inequality strict. Taking the sum over all $i \in S$ gives us $\sum_{i \in S} p(Y_i) > |T|$. Finally, because $\sum_{i \in S} p(Y_i) = p(\cup_{i \in S} Y_i) = p(\cup_{j \in T} X_j)=\sum_{j \in T} p(X_j)$, it must be the case that $p(X_j)>1$ for some $j \in T$, violating the assumption that $(X,p)$ is s-CEEI.
	\end{proof}
	
\section{Omitted Material from Section~\ref{sec:characterization}}

\subsection{Maximum Nash Welfare Variant}

\paragraph{Relaxing group fairness.} 
We now present a replication-imnvariant allocation rule that is not a refinement of MNW. As a consequence of Theorem~\ref{thm:mnw-characterization}, it therefore fails group fairness. However, we show that it satifies a relaxation of group fairness that still generalizes many of the fairness properties common in the literature.

Say that good $g$ is a \emph{weak} good if there exists only a single agent with $v_i(g)>0$. Otherwise, we say that $g$ is a \emph{strong} good. Let $W_i = \{ g \in M : v_j(g)=0 \; \forall j \neq i \}$ denote the set of all weak goods valued positively by agent $i$. 

Mechanism \emph{Maximum Nash Welfare Variant} (MNW-V) works as follows. Given an instance $I$, let $I_{-W}$ denote the instance with weak goods removed. Then MNW-V$(I)_i = MNW(I_{-W})_i \cup W_i$. That is, we begin by allocating only the strong goods according to MNW. We then allocate to each agent the weak goods that she values positively.

\begin{example}
	Suppose there are three agents and two goods, $g$ and $g^*$, with $v_1(g)=v_2(g)=v_3(g)=1$ and $v_1(g^*)=1$ and $v_2(g^*)=v_3(g^*)=0$. Note that $g$ is a strong good and $g^*$ is a weak good, valued only by agent 1. MNW allocates agents 1 and 2 half of $g$ each, and all of $g^*$ to agent 3. MNW-V, on the other hand, allocates each agent a $1/3$ share of good $g$, and all of $g^*$ to agent 3. Note that MNWV does not satisfy group fairness. If we let $S=N$ and $T$ consist only of agent 3, then we can partition agent 3's bundle so that $B_1 = B_2= \frac{1}{6}g$, and $B_3 = g^*$, for utilities $v_1(B_1)=v_2(B_2)=1/6$ and $v_3(B_3)=1$. When these utilities are scaled by a factor of $|S|/|T|=3$, each agent is strictly happier than under the original MNW-V allocation.
\end{example}

The above example shows that MNW-V violates group fairness. But we can show that the allocation is fair to any group $S$ provided that $S$ is small relative to group $T$. We formalize this using the following property.

\begin{definition}
	An allocation $X$ is \emph{group fair for less} if for all non-empty subsets of agents $S,T \subseteq N$ with $|S| \le |T|$, there is no allocation $Y$ of the set of goods $\cup_{i \in T} X_i$ among agents in $S$ such that $\frac{|S|}{|T|} \cdot v_i(Y_i) \ge v_i(X_i)$ for all agents $i \in S$ and at least one inequality is strict.
\end{definition}

In particular, group fairness for less generalizes proportionality, envy-freeness, Pareto optimality, group envy-freeness~\cite{Var74,BTD92}, and the core~\cite{Scarf62}.

We show that MNW-V satisfies group fairness for less. As a consequence, a characterization of MNW in the additive utility setting therefore seems to require the full power of group fairness.

\begin{theorem}
MNW-V satisfies group fairness for less.
\end{theorem}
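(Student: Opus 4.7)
The plan is to reduce a group-fairness-for-less (GFL) violation of MNW-V on $I$ to a full group-fairness (GF) violation of MNW on the reduced instance $I_{-W}$, and then invoke \Cref{prop:mnw-properties}.

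Let $X := \mathrm{MNW}\text{-}\mathrm{V}(I)$ and write $X_i = X^s_i \cup W_i$, where $X^s := \mathrm{MNW}(I_{-W})$ is the strong-goods part and $W_i$ is the set of weak goods that agent $i$ values. Suppose for contradiction that there exist non-empty $S, T \subseteq N$ with $|S| \le |T|$ and an allocation $Y$ of $\cup_{i \in T} X_i$ to agents in $S$ witnessing a GFL violation, i.e., $\frac{|S|}{|T|}\, v_i(Y_i) \ge v_i(X_i)$ for all $i \in S$ with at least one strict inequality. For each $i \in S$, split $Y_i = Y^s_i \cup Y^w_i$ into its strong- and weak-goods parts. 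Since every strong good in $\cup_{i \in T} X_i$ lies in $\cup_{i \in T} X^s_i$ and must be allocated by $Y$, the collection $(Y^s_i)_{i \in S}$ is an allocation of exactly $\cup_{i \in T} X^s_i$ to agents in $S$.

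The first key step is to show $\frac{|S|}{|T|}\, v_i(Y^s_i) \ge v_i(X^s_i)$ for all $i \in S$, with strict inequality for at least one $i$. The argument uses two observations: (i) every weak good in $Y^w_i$ is valued positively by $i$ only if it lies in $W_i$, so $v_i(Y^w_i) \le v_i(W_i)$; and (ii) since $|S|/|T| \le 1$, this yields $\frac{|S|}{|T|}\, v_i(Y^w_i) \le v_i(W_i)$. Rearranging the GFL inequality $\frac{|S|}{|T|}(v_i(Y^s_i) + v_i(Y^w_i)) \ge v_i(X^s_i) + v_i(W_i)$ gives
\[
\frac{|S|}{|T|}\, v_i(Y^s_i) \;\ge\; v_i(X^s_i) + \Bigl(v_i(W_i) - \tfrac{|S|}{|T|}\, v_i(Y^w_i)\Bigr) \;\ge\; v_i(X^s_i),
\]
and the same calculation preserves strictness at the index $i^{*}$ where the original inequality was strict (the bracketed term is nonnegative and so cannot convert a strict inequality into equality).

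The second step is to interpret this as a GF violation of $X^s$ on the instance $I_{-W}$. The collection $(Y^s_i)_{i \in S}$ is an allocation of $\cup_{i \in T} X^s_i$ among agents in $S$ in $I_{-W}$, and it satisfies $\frac{|S|}{|T|}\, v_i(Y^s_i) \ge v_i(X^s_i)$ for all $i \in S$ with at least one strict inequality. This contradicts the fact that $X^s = \mathrm{MNW}(I_{-W})$ is group fair (\Cref{prop:mnw-properties}), completing the proof. The only subtle point is step one (carefully tracking how the weak goods on each side interact with the scaling factor $|S|/|T|$), which is precisely where the assumption $|S| \le |T|$ is used; without it, the inequality $v_i(W_i) \ge \frac{|S|}{|T|}\, v_i(Y^w_i)$ would fail and the reduction to MNW's GF would break down, consistent with the earlier example showing that MNW-V does not satisfy full GF.
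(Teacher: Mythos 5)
Your proof is correct and follows essentially the same route as the paper's: both decompose the candidate reallocation into its strong-good and weak-good parts, use $|S|\le|T|$ to bound the weak-good contribution by $v_i(W_i)$, and reduce to group fairness of MNW on the weak-goods-removed instance $I_{-W}$. The only difference is presentational --- you argue by contradiction and carry strictness through one chain of inequalities, whereas the paper argues directly and case-splits on the witness agent and on whether it lies in $T$.
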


\begin{proof}
Denote the MNW-V allocation on some instance $I$ by $A$, and fix sets $S$ and $T$ with $|S| \le |T|$. Since we know that MNW is group fair, for every partition $(B_i)_{i \in S}$ of $\cup_{i \in T} (A_i \backslash W_i)$, we have either
\begin{equation} \label{eq:condition1}
 \frac{|S|}{|T|} \cdot v_i(B_i) < v_i(A_i \backslash W_i)
\end{equation}
for some $i \in S$, or 
\begin{equation} \label{eq:condition2}
 \frac{|S|}{|T|} \cdot v_i(B_i) = v_i(A_i \backslash W_i)
\end{equation}
for all $i \in S$.

 Fix a partition $(C_i)_{i \in S}$ of $\cup_{i \in T} A_i$, and for every $i \in S$ denote by $B_i$ the intersection of $C_i$ with the set of strong goods. $(B_i)_{i \in S}$ is a partition of  $\cup_{i \in T} (A_i \backslash W_i)$, so we must be in one of the two cases above. Suppose we are in the first, with witness agent $i$. If $i \in T$, then $C_i \subseteq B_i \cup W_i$ and we have
\[ 
\frac{|S|}{|T|} \cdot v_i(C_i) \le \frac{|S|}{|T|} \cdot v_i(B_i) + \frac{|S|}{|T|} \cdot v_i(W_i) < v_i(A_i \backslash W_i) + v_i(W_i) = v_i(A_i),
\]
where the strict inequality follows from Equation~\ref{eq:condition1}.
If $i \not \in T$, then $v_i(C_i)=v_i(B_i)$ (since no goods from $W_i$ are contained in $\cup_{i \in T} A_i$), and we have
\[ 
\frac{|S|}{|T|} \cdot v_i(C_i) = \frac{|S|}{|T|} \cdot v_i(B_i) < v_i(A_i \backslash W_i) \le v_i(A_i). 
\] 
In both cases, group fairness holds with respect to sets $S$ and $T$.

A similar argument can be used if we are in the case corresponding to Equation~\ref{eq:condition2}, with the relevant strict inequalities replaced by equalities.
\end{proof}

\subsection{Fractional Group Fairness Characterizes MNW}

We now show that the dependence on replication invariance can be removed from Theorem~\ref{thm:mnw-characterization} if group fairness is strengthened to \emph{fractional group fairness}. Intuitively, fractional group fairness extends group fairness to hold even for groups of agents that may consist of (hypothetical) fractional agents.
Denote a fractional subset of agents $S=(x^S_i)_{i \in N}$, where $x^S_i \in [0,1]$ denotes the fraction of agent $i$ that is included in $S$. Define $|S|=\sum_{i \in N} x^S_i$.

\begin{definition}[Fractional Group Fairness (FGF)]
	An allocation $X$ is \emph{fractionally group fair} if for all fractional subsets of agents $S,T \subseteq N$ with $|T| \neq 0$, there is no allocation $Y$ of the set of goods $\cup_{j} x^T_j X_j$ among all agents such that $\frac{|S|}{|T|} \cdot v_i(Y_i) \ge x^S_i \cdot v_i(X_i)$ for all agents $i \in N$ and the inequality is strict for at least one $i \in N$ with $x^S_i > 0$.
\end{definition}

Note that we recover group fairness (\Cref{defn:GF}) as a special case of FGF by restricting to the subsets $S$ and $T$ with $x^S_i \in \{0,1\}$ and $x^T_i \in \{0,1\}$ for all $i \in N$.

When we strengthen group fairness to fractional group fairness, we obtain a characterization of MNW that does not require replication invariance.

\begin{theorem} \label{thm:fgf-characterizes-mnw}
Any allocation rule that satisfies fractional group fairness must be a refinement of MNW.
\end{theorem}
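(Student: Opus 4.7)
The plan is a proof by contradiction that uses the same local-improvement device (Lemma~\ref{lem:local-implies-global-mnw}) as Theorem~\ref{thm:mnw-characterization}, but replaces the replication-invariance step by the extra flexibility that fractional groups provide. First, I would note that FGF implies Pareto optimality: taking $x^S_i = x^T_i = 1$ for every $i \in N$ gives $|S|/|T| = 1$ and reduces the FGF condition to the standard definition of PO. Hence, if some $X \in f(I)$ fails to lie in $\mnw(I)$, then $X$ is still PO, and Lemma~\ref{lem:local-implies-global-mnw} produces agents $i, j$ and a subset $W \subseteq X_i$ with $v_i(W), v_j(W) > 0$ such that
\[ v_j(W) \; > \; v_j(X_j) \cdot \frac{v_i(W)}{v_i(X_i \setminus W)}. \]

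Next I would scale $W$ down. Replacing $W$ with $\varepsilon W$ for sufficiently small $\varepsilon > 0$, I can assume $\alpha \coloneqq v_i(W)/v_i(X_i \setminus W) \in (0, 1]$; the strict inequality above is preserved because its right-hand side is continuous in $\varepsilon$, and the limiting value $v_j(X_j)\, v_i(W)/v_i(X_i)$ is strictly smaller than the original right-hand side $v_j(X_j)\, v_i(W)/v_i(X_i \setminus W)$. Then I would exhibit a violating triple $(S, T, Y)$ directly: set $x^T_i = 1$ and $x^T_h = 0$ for $h \neq i$, so that $|T| = 1$ and the pool $\cup_h x^T_h X_h$ equals $X_i$; set $x^S_i = 1$, $x^S_j = \alpha$, and $x^S_h = 0$ otherwise, so $|S|/|T| = 1 + \alpha$; take $Y_i = X_i \setminus W$, $Y_j = W$, and $Y_h = \emptyset$ for all other $h$. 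Verifying $\tfrac{|S|}{|T|} v_h(Y_h) \ge x^S_h v_h(X_h)$ agent by agent is a direct calculation: equality at $h = i$ via $(1+\alpha) v_i(X_i \setminus W) = v_i(X_i)$; strict inequality at $h = j$ via $(1+\alpha) v_j(W) > (1+\alpha) \alpha\, v_j(X_j) > \alpha\, v_j(X_j)$; and $0 \ge 0$ at every other agent. Since $x^S_j = \alpha > 0$, this is a genuine FGF violation, contradicting $X \in f(I)$.

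The only delicate ingredient is the scaling step that forces $\alpha \le 1$, so that $x^S_j$ is a legal membership weight; without it, the $W$ returned by Lemma~\ref{lem:local-implies-global-mnw} might not plug in directly. Conceptually, fractional groups allow the ``copy count'' $k = 1/\alpha$ appearing in the proof of Theorem~\ref{thm:mnw-characterization} to be any positive real rather than an integer, which is precisely what obviates the need for replication invariance here.
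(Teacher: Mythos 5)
Your proof is correct and follows essentially the same route as the paper's: the same appeal to Lemma~\ref{lem:local-implies-global-mnw}, the same choice of $T=\{i\}$ and $S$ consisting of all of agent $i$ plus an $\alpha$-fraction of agent $j$ with $\alpha = v_i(W)/v_i(X_i\setminus W)$, and the same reallocation $Y_i = X_i\setminus W$, $Y_j = W$. Your explicit scaling of $W$ to force $\alpha\le 1$ is a welcome bit of extra care --- the paper sets $x^S_j$ equal to this ratio without verifying that it lies in $[0,1]$ as the definition of a fractional subset requires --- but it does not change the substance of the argument.
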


\begin{proof}
The proof proceeds in a similar manner to the proof of Theorem~\ref{thm:mnw-characterization}. Let $f$ be an allocation rule, and suppose that there exists an instance $I$ for which there exists an allocation $A \in f(I)$ but $A \not \in MNW(I)$. If $A$ is not Pareto optimal then it violates group fairness. So suppose $A$ is Pareto optimal.

By Lemma~\ref{lem:local-implies-global-mnw}, there exist agents $i,j$ and $X \subseteq A_i$ with $v_i(X) \in (0,v_i(A_i))$ and $v_j(X)>0$ such that
\begin{equation}\label{eq:violation2}
 v_j(X) > v_j(A_j) \cdot \frac{v_i(X)}{v_i(A_i-X)}.
\end{equation}

Let $T= \{i \}$ (that is, $x^T_i=1$ and $x^T_k=0$ for all $k \neq i$) and $S=\{ i \} \cup \{ \frac{v_i(X)}{v_i(A_i-X)} j \}$ (that is, $x^S_i=1$, $x_j^S= \frac{v_i(X)}{v_i(A_i-X)}$, and $x^S_k=0$ otherwise). Let $B_i=A_i-X$ and $B_j=X$. Then 
\begin{align*}
\frac{|S|}{|T|} \cdot v_i(B_i) = \left(1+\frac{v_i(X)}{v_i(A_i-X)}\right) \cdot v_i(A_i-X)
= v_i(A_i-X)+v_i(X)=x^T_i \cdot v_i(A_i)
\end{align*}
and 
\begin{align*}
\frac{|S|}{|T|} \cdot v_j(B_j) \cdot \frac{1}{x^S_j} &= \left(1+\frac{v_i(X)}{v_i(A_i-X)}\right) \cdot  \frac{v_i(A_i-X)}{v_i(X)} \cdot v_j(X)\\
&= \frac{v_i(A_i)}{v_i(X)} \cdot v_j(X) > \frac{v_i(A_i-X)}{v_i(X)} \cdot  v_j(X)>v_j(A_j),
\end{align*}
where the final inequality follows from Inequality~\ref{eq:violation2}.
Fractional group fairness is violated.
\end{proof}

As a consequence of Theorem~\ref{thm:fgf-characterizes-mnw} and Proposition~\ref{prop:mnw-properties}, we know that the set of Maximum Nash Welfare allocations coincides exactly with the set of fractionally group fair allocations. We note that such a statement does not hold if we use only the more permissive axiom of group fairness.

\begin{example}
Consider an allocation problem with two goods and two agents. We have $v_1(g_1)=v_2(g_2)=1$, and $v_1(g_2)=v_2(g_1)=\epsilon$ for some small $\epsilon>0$. The only MNW allocation is to assign $g_1$ to agent 1 and $g_2$ to agent 2. But consider allocation $A$ that sets $A_1=g_1 + \frac{1}{3}g_2$ and $A_2=\frac{2}{3}g_2$. It is easy to check that $A$ is group fair despite not being an MNW allocation.
\end{example}

\section{Recursive Probabilistic Serial}
\label{sec:RPS_recursive_formulation}

\begin{algorithm}[!h]
	\caption{\Eating}
	\label{alg:Eating}
	\begin{algorithmic}[1]
		\State \textbf{Input}: A (possibly fractional) set of goods $M'$.
		\State \textbf{Output}: A fractional allocation $X$, integral allocations $\{A^k\}_{k \in [\ell]}$, and coefficients $\{w^k\}_{k \in [\ell]}$
		\State $X \leftarrow$ fractional (partial) allocation of $M'$ by running Probabilistic Serial for one unit of time\footnote{One unit is the time it takes any agent to consume a single unit of any good.}
		%\State BvN: $X = \sum_{k \in [\ell]} w^k A^k$
		% \State \textbf{return} $X, \{A^k\}_{k \in [\ell]}, \{w^k\}_{k \in [\ell]}$
		\State \textbf{return} $X$
	\end{algorithmic}
\end{algorithm}

\begin{algorithm}[!h]
	\caption{Recursive Probabilistic Serial (\RPS)}
	\label{alg:RPS}
	\begin{algorithmic}[1]
		\State \textbf{Input}: A (possibly fractional) set of goods $M^t$.
		\State \textbf{Output}: A fractional allocation $Y^t$.
		\If{$M^t = \emptyset$}
			\State $Y^t \leftarrow \emptyset$
		\Else
			\State $X^t \leftarrow \Eating(M^t,1)$
			\State BvN: $X^t = \sum_{k=1}^\ell w^{t,k} A^{t,k}$
			\State $Y^t \leftarrow X^t + \sum_{k = 1}^\ell w^{t,k} \cdot \RPS(M^t \setminus A^{t,k})$
		\EndIf
		\State \textbf{return} $Y^t$
	\end{algorithmic}
\end{algorithm}

\end{document}